\newif\iflong
\title{Deriving Abstract Interpreters from Skeletal Semantics \iflong(Long Version)\fi}
\author{
Thomas Jensen
\institute{INRIA, Rennes}
\email{thomas.jensen@inria.fr}
\and
Vincent Rébiscoul
\institute{Université de Rennes, Rennes}
\email{vincent.rebiscoul@inria.fr}
\and
Alan Schmitt
\institute{INRIA, Rennes}
\email{alan.schmitt@inria.fr}
}
\newcommand{\titlerunning}{Deriving Abstract Interpreters from Skeletal Semantics}
\newcommand{\authorrunning}{T. Jensen, V. Rébiscoul \& A. Schmitt}
\newtheorem{theorem}{Theorem}
\newtheorem{definition}{Definition}
\newtheorem{lemma}{Lemma}
\newtheorem{proof}{Proof}
\begin{document}

\maketitle

\begin{abstract}
This paper describes a methodology for defining an executable abstract
interpreter from a formal description of the semantics of a
programming language. Our approach is based on Skeletal Semantics and
an abstract interpretation of its semantic meta-language.
The correctness of the derived abstract interpretation can be established by
compositionality provided that correctness properties of the core
language-specific constructs are established.
We illustrate the genericness of our method by defining a Value Analysis for a
small imperative language based on its skeletal semantics.
\end{abstract}

\section{Introduction}\label{sec:intro}

The derivation of provably correct static analyses from a formal specification
of the semantics of a programming language is a long-standing
challenge. The recent advances in the mechanisation of semantics has
opened up novel perspectives for providing tool support for this
task, thereby enabling the scaling of this approach to larger
programming languages. 
This paper presents one such approach for mechanically constructing
semantics-based program analysers from a formal description of the
semantics of a programming language. We aim to provide
methodologies which not only can prove the correctness of program
abstractions but also lead to executable 
analysis techniques. Abstract Interpretation~\cite{cousot1977abstract}
has set out a methodology for defining 
an abstract semantics from an operational semantics and
for proving a correctness relation between abstract and concrete semantics
using Galois connections. The principle of abstract interpretation has
been applied to a variety of semantic frameworks, including small-step
and big-step (natural) operational semantics, and denotational
semantics.  
An example of this methodology is to build an abstract semantics from
a natural semantics~\cite{schmidt1995natural}.
Another example is Nielson's theory of abstract interpretation of
two-level semantics~\cite{nielson1989two} in which a semantic
meta-language is equipped with binding-time annotations so that types
and terms can be given a \emph{static} and \emph{dynamic}
interpretation, leading to different but (logically) related
interpretations.

In order for semantics-based program analysis to handle the
complexity of today's programming languages, it is necessary to
conceive a methodology that is built using some form of mechanised
semantics. Examples of this include 
Verasco~\cite{jourdan2016verasco}, a formally verified static analyser for the
C programming language. It uses abstract interpretation techniques to perform
value analyses, relational analyses\ldots{} Verasco is written in Coq and the
soundness of the analysis is guaranteed by a theorem: a program where the
analysis does not raise an alarm is free of errors.
Reasoning about program behaviours is possible as Verasco reuses the
formalisation of the C semantics in Coq that was written for
CompCert~\cite{leroy2009compcert}. CompCert is a proved semantic preserving
C compiler written in Coq.

Another example is the 
\K~\cite{rosu2010overview} framework for writing semantics using
rewriting rules. Rewriting rules make the formal
definition of a semantics both flexible and relatively simple to write, and
allows to mechanically derive objects from the semantics like an interpreter.
However, this mechanization can be complex:
\K-Java~\cite{bogdanas2015k} is a formalization of Java in \K, with close to four
hundred rewriting rules. It is unclear if it is possible to derive an
analysis from a mechanization in \K.

The key idea that we will pursue in this paper is
that an abstract interpreter for a semantic meta-language combined
with language-specific abstractions for a particular property yield a
correct-by-construction 
abstract interpreter for the specific language and property. 
We describe how to obtain a correct program analyser for
a programming language from its \emph{skeletal} semantics.
Skeletal Semantics~\cite{bodin2019skeletal} is a proposal for
machine-representable semantics of programming languages.
%using a
%minimalist functional language Skel~\cite{NoizetSchmitt2022} as
%semantic meta-language.
%The skeletal semantics of a language is a description of a
%language written in Skel. It is possible to derive a usual big-step
%semantics (its concrete interpretation) from its skeletal
%semantics~\cite{KhayamNoizetSchmitt2022}.

The skeletal semantics of a language \(\mathcal{L}\) is a partial description
of the semantics of \(\mathcal{L}\). Typically, a skeletal semantics will contain
definitions of the constructs of the language and functions of evaluation of
these constructs.
A skeletal semantics is written in the meta-language
Skel~\cite{NoizetSchmitt2022}, a minimalist functional language.
It is a \emph{meta language} to describe the semantics of \emph{object
  languages}. Skel has several semantics, called \emph{interpretations}, (small
step, big step~\cite{KhayamNoizetSchmitt2022}, abstract interpretation), giving
different semantics for the object languages.

\paragraph{Contributions}
\begin{itemize}
\item We propose new interpretations of the semantic meta-language
  Skel that integrates the notion of program point in a 
  systematic way.
\item We define an abstract interpretation for Skel. The abstract interpretation
  of Skel combined with language-specific abstractions define an analyzer for
  the object language.
\item We prove that the abstract interpretation of Skel is a sound approximation
  of the big-step interpretation of Skel, provided that some small
  language-dependent properties hold.
\item We implement a program which, given a Skeletal Semantics, generates an
  executable abstract interpreter, and we test it on toy languages. We 
  define a basic value analyzer for a small imperative language.
  A Control Flow Analysis for a \(\lambda\)-calculus is also presented in
  \iflong{}Appendix~\ref{sec:cfa-analysis}\else{}the long version of this paper~\cite{longversion}\fi{}.
\end{itemize}

%%% Local Variables:
%%% mode: latex
%%% TeX-command-extra-options: "-shell-escape"
%%% TeX-master: "expsos"
%%% ispell-local-dictionary: "en_GB"
%%% TeX-master: "expsos.tex"
%%% TeX-master: "expsos"
%%% End:

\section{Skeletal Semantics}\label{sec:whileskelsem}

Skeletal Semantics offers a framework to mechanise semantics of programming
languages~\cite{bodin2019skeletal}. It uses a minimalist, functional, and
strongly typed semantic meta-language called
Skel~\cite{NoizetSchmitt2022}, whose syntax is presented in Figure~\ref{fig:bnf}.
%% The Necro library~\cite{necrolib} is a tool to manipulate Skeletal Semantics. Given
%% Skel code, it
%% can generate Ocaml code, Coq code, step by step debuggers, and more.
%
The actual semantics of a language described in Skel is
expressed by providing a (meta-)interpretation of the Skel
language itself. In this paper, we will present two such interpretations: a
big-step (or concrete) semantics and an abstract interpretation.

We illustrate Skel through the definition of the skeletal semantics of a toy
imperative language called While. A Skeletal Semantics is a formal description
of a language and consists of \emph{declarations}. We start with some type
declarations (production \(r_{\tau}\) in Figure~\ref{fig:bnf}).
\begin{multicols}{3}
\inputminted[lastline=18]{sk}{while.sk}
\end{multicols}
\noindent{}
For While, there are four \emph{unspecified} types (identifiers,
literals, stores, integers) and
two \emph{specified} types (expressions and statements).
Unspecified types is an useful trait of Skel, their definitions are
unconstrained and they can be instantiated depending on the semantics of the
object language being defined. The specification of the integer type can be
different for a big-step semantics or for an abstract interpretation.
%Specified types are algebraic data types (ADT).
The \sktype{expr} and \sktype{stmt} types define expressions and
statements of While programs. 
An expression can be a
constant, a variable, an addition, a comparison, or a random integer. A
statement can be a skip (an instruction that does nothing), an assignment, a
sequence, a condition, or a loop. In addition to these declared types, one may
build arrow types and tuple types.

We now turn to Skel's \emph{term declarations} (production \(r_{t}\) of
Figure~\ref{fig:bnf}), which may also be unspecified or specified.
Unspecified terms are typically used for operations on values of unspecified
types. For our While language, they are as follows.
\begin{multicols}{2}
\inputminted[firstline=20, lastline=27]{sk}{while.sk}
\end{multicols}
The types for \skname{isZero} and \skname{isNotZero} may be surprising. These
partial functions act as filters when used in branches, as detailed below.

\begin{figure}
\begin{align*}
\textsc{Term} \quad t &\quad\Coloneqq\quad
x \mid
C \; t \mid
(t, .., t) \mid
\lambda p:\tau \rightarrow S \\
\textsc{Skeleton} \quad \skel & \quad\Coloneqq\quad
t \mid t_0~t_1 .. t_n \mid
\letin{p}{S}{S} \mid
\branch{\skel \skor.. \skor \skel}
\mid\\ & \quad\phantom{\Coloneqq}\quad \match{t}{p \rightarrow \skel .. p \rightarrow \skel}
\\
\textsc{Pattern} \quad p &\quad\Coloneqq\quad
x \mid
\_ \mid
C~p \mid
(p, .., p) \\
\textsc{{Type}} \quad \tau &\quad\Coloneqq\quad
b \mid
\tau \rightarrow \tau \mid
(\tau, .., \tau) \\
\textsc{{Term decl}} \quad r_t &\quad\Coloneqq\quad
\skkw{val}~x:\tau \mid
\skkw{val}~x:\tau = t
\\
\textsc{{Type decl}} \quad r_\tau &\quad\Coloneqq\quad
\skkw{type}~b
\mid \skkw{type}~b = \skinl{"|"}~ C_1 ~\tau_1 .. \skinl{"|"}~C_n ~\tau_n
\\
\textsc{{Skeletal Semantics}} \quad \mathcal{S} &\quad\Coloneqq\quad
(r_t | r_\tau)^*
\end{align*}
\caption{The Syntax of Skeletal Semantics}\label{fig:bnf}
\end{figure}

Specified terms, on the other hand, are signatures associated with a
\emph{term}. A term is either a skeletal variable, a constructor applied to a
term, a tuple, or an abstraction. The body of an abstraction is a skeleton,
described below. Consider the declaration of term \mintinline{sk}{eval_expr}.
\begin{multicols}{2}
\inputminted[firstline=29, lastline=44]{sk}{while.sk}
\end{multicols}
\noindent{}
The first line is syntactic sugar for
\begin{minted}[mathescape=t,escapeinside=||]{sk}
val eval_expr: (store, expr) -> int = |$\lambda$| (s, e): (store, expr) ->
\end{minted}
\noindent{}
where the remainder of the description is the body of the abstraction. This body
is a skeleton. A skeleton may be a term, an n-ary application, a let binding, a
branching (detailed below), or a match. Here the skeleton is a match,
distinguishing between the different expressions which may be evaluated. For a
constant expression, we call the unspecified term
\mintinline{sk}{litToInt} to convert the literal to an integer. For a
variable, we read its value in the store. For an addition, we sequence
the recursive evaluation of each subterm using a let binding, and we then apply
the unspecified \mintinline{sk}{add} term to perform the actual addition.
Note that specified term and type declarations are all mutually recursive.
The rest of the code does not use any additional feature.

We now turn to the second specified term declaration, to evaluate statements.
\begin{multicols}{2}
\inputminted[firstline=46, lastline=75]{sk}{while.sk}
\end{multicols}
\noindent{}
The code for the conditional and the loop illustrates the last feature of the
language, branching. Branches are introduced with the
\mintinline{sk}{branch} keyword and are separated with the
\mintinline{sk}{or} keyword. They correspond to a form of a
non-deterministic choice. Intuitively, in a big-step interpretation, any branch
that succeeds may be taken. Branches may fail if a pattern matching in a let
binding fails, or if the application of a term fails.
For instance, the instantiation of the term \skname{isNotZero} will not be defined on \(0\), making
the whole branch fail when given \(0\) as argument.
This is how we decide which branch to execute next for the
conditional, and whether to loop in the \mintinline{sk}{While} case.

\section{Big-step Semantics of Skel}
\label{sec:bigstep}

We give meaning to a Skeletal Semantics by providing \emph{interpretations} of
Skel. We first define the concrete, big-step semantics of Skel.
Let \skelsem{} be an arbitrary Skeletal Semantics. We write \(\funsof{\skelsem}\) for
the set of pairs \((\Gamma, \lambda{}p: \tau_1\rightarrow S_0)\) such that
\(\lambda{}p: \tau_1\rightarrow S_0\) appears in Skeletal Semantics \skelsem{}.
The typing environment \(\Gamma\) gives types to the free variables of \(\lambda{}p: \tau_1\rightarrow S_0\).
\iflong{} Typing rules are given in Appendix~\ref{app:typerules} and
\(\funsof{\cdot}\) is formally defined in Appendix~\ref{app:funs}.
\else{} Full formal details are available in~\cite{longversion}.\fi{}

\subsection{From Types to Concrete Values}\label{sec:typeint}

The definitions of the sets of semantic values are presented
on Figure~\ref{fig:buildval}. They are defined by induction on the
type. For each type \(\tau\), we write \(\itype{}{\tau}\) the set of values of type
\(\tau\). 

\begin{figure}\centering
  \begin{subfigure}{\textwidth}
    \begin{gather*}
      \itype{}{\tau_1\times..\times\tau_n} =
      \itype{}{\tau_1}\times..\times\itype{}{\tau_n}\\
      \itype{}{\tau_2} =
      \setof{C~v}{C:(\tau_1, \tau_2)\,\land\,v\in\itype{}{\tau_1}}\\
      \itype{}{\tau_1\to\tau_2} =
      \nclosures{\tau_1\to\tau_2}\cup\closures{\tau_1\to\tau_2}\\
    \end{gather*}
    \caption{Concrete values associated to each type}\label{fig:buildval}
  \end{subfigure}
  \hfill
  \begin{subfigure}{\textwidth}
    \centering
    \begin{gather*}
      \nclosures{\tau_1\to\tau_2} =
      \setof{\defclos{f}{n}}{
      \begin{gathered}
        \skkw{val}~f: \tau_1\to\tau_2 [= t]
        \in\skelsem \\
        \arity{f} = n
      \end{gathered}
      }\\
      \closures{\tau_1\to\tau_2} =
      \setof{
      (\Gamma, p, S, E)
      }{
      \begin{gathered}
        (\Gamma, \lambda{}p:\tau_1\to S)\in\funsof{\mathcal S}\\
        \envtype{E}{\Gamma} \\
        \gettype{S}{\extenv{\Gamma}{p}{\tau_1}}{\tau_2}
      \end{gathered}
      }\\
    \end{gather*}
    \caption{Named Closures and Anonymous Closures}
    \label{fig:closures}
  \end{subfigure}
  \caption{Definition of Concrete Values}
\end{figure}
A value with tuple type is a tuple of concrete values. A value of a specified type is a
constructor applied to a value. A value with arrow type is a function that can
either be a \emph{named closure} or an \emph{anonymous closure}.
The set of named closure \(\nclosures{\tau_1\to\tau_2}\) and the set of anonymous closures
\(\closures{\tau_1\to\tau_2}\) are defined on Figure~\ref{fig:closures}.
A named closure denotes a function that is specified in the Skeletal Semantics
\skelsem{}, it is a pair of the name of the function and its arity. An anonymous closure is
a tuple of a typing environment \(\Gamma\), a pattern \(p\) to bind the argument upon
application, a skeleton \(S\) which is the body of the function, and an
environment \(E\) captured at the creation of the closure. An environment is a
partial function mapping skeletal variable to concrete values.
It is said to be consistent with typing environment \(\Gamma\), written
\(\envtype{E}{\Gamma}\), if they have the same domain and if, for every
\(x \in \dom{\Gamma}\), we have \(E(x) \in \itypeabst{}{\Gamma(x)}\).

The unspecified types of a skeletal semantics must be instantiated to obtain an interpretation.
In the case of While, the unspecified types are \sktype{ident}, \sktype{lit},
\sktype{int}, and \sktype{store}. They are instantiated as follows.
\begin{align*}
  & \itype{}{\sktype{store}} = \setof{s}{s\in\mathcal{X}\hookrightarrow\mathbb{Z}}
  & \itype{}{\sktype{ident}} = \setof{x}{x\in\mathcal{X}}
  & \qquad \mbox{with } \mathcal{X} = \set{x, y, z, ..}\\
  & \itype{}{\sktype{lit}} = \setof{l}{l\in\mathbb{Z}}
  & \itype{}{\sktype{int}} = \setof{i}{i\in\mathbb{Z}}
\end{align*}
Identifiers are taken from a countable set \(\mathcal{X}\),
literals and integers are relative integers, and stores are partial maps from
identifiers to integers.

\subsection{Interpretation of Unspecified Terms}\label{sec:unspec-interp}

In the following, we write \(\notarr{\tau}\) when \(\tau\) is not an arrow type.
Take an unspecified term \(\skkw{val}~t: \tau_1\to..\to\tau_n\to\tau\) such that
\(\notarr{\tau}\), then an instantiation of \(t\), written \(\nonspec{t}\), is a
function such that
\(\nonspec{t}\in
(\itype{}{\tau_1}\times..\times\itype{}{\tau_n})\to\finparts{\itype{}{\tau}}\), where
\(\finparts{X}\) is the set of finite subsets of \(X\).
In particular, if \(\skkw{val}~t:\tau\) and \(\notarr{\tau}\), then
\(\nonspec{t}\subseteq \itype{}{\tau}\). Allowing the specification of a term to
be a function returning a set is useful to model non-determinism.

We instantiate the unspecified functions of our While language. The expression
\(\ternary{b}{e_1}{e_2}\) evaluates to \(e_1\) is the condition \(b\) is
true. Otherwise, it evaluates to \(e_2\).
\begin{align*}
  \nonspec{litToInt}(n) &= \set{n} &
  \nonspec{add}(n_1, n_2) &= \set{n_1+n_2}\\
  \nonspec{lt}(n_1, n_2) &= \ternary{n_1<n_2}{\set{1}}{\set{0}} &
  \nonspec{rand}(n_1, n_2) &= \setof{n}{n_1\le n \le n_2}\\
  \nonspec{isZero}(n) &= \ternary{n=0}{\set{()}}{\set{}} &
  \nonspec{isNotZero}(n) &= \ternary{n\neq0}{\set{()}}{\set{}}\\
  \nonspec{read}(x, s) &= \set{s(x)} &
  \nonspec{write}(x, s, n) &= \set{\extstate{s}{x}{n}}
\end{align*}
The \skname{rand} instantiation returns a set of values to capture the
non-determinism of the instruction. The \skname{isZero} function is defined only
on input 0, whereas \skname{isNotZero} is defined for all inputs except 0.

\subsection{Big-step Semantics}
\begin{figure}
  \begin{mathpar}
    \inferrule*[right=LetIn]
    {
      \evalskel{E, S_1}{v} \\ \addasn{}{\extenv{E}{p}{v}}{E'} \\
      \evalskel{E', S_2}{w}
    }
    {\evalskel{E, \sklet{} p = S_1 \skin{} S_2}{w}}
    \and
    \inferrule*[right=Branch]
    {\evalskel{E, S_i}{v}}
    {\evalskel{E, \left(S_1,.., S_n\right)}{v}}
  \end{mathpar}
  \caption{Examples of Rules of the Big-Step Semantics}\label{fig:natskel}
\end{figure}

\begin{figure}
  \begin{mathpar}
    \inferrule*[right=wild]
    { }
    {\addasn{}{\extenv{E}{\_}{v}}{E}}
    \and
    \inferrule*[right=var]
    { }
    {\addasn{}{\extenv{E}{x}{v}}{\set{x\mapsto v}E}}
    \and
    \inferrule*[right=const]
    {\addasn{}{\extenv{E}{p}{v}}{E'}}
    {\addasn{}{\extenv{E}{C\,p}{C\,v}}{E'}}
    \and
    \inferrule*[right=tuple]
    { \addasn{}{\extenv{E}{p_1}{v_1}}{E_2} \\
      .. \\
      \addasn{}{\extenv{E_n}{p_n}{v_n}}{E'}
    }
    {\addasn{}{\extenv{E}{(p_1,.., p_n)}{(v_1,.., v_n)}}{E'}}
  \end{mathpar}
  \caption{Rule of Extension of Environment using Pattern Matching}\label{fig:pattmatch}
\end{figure}

We briefly present the big-step semantics of Skel: \(\evalskel{E, S}{v}\) is
a relation from a skeletal environment \(E\), mapping skeletal variables to
values, and a skeleton \(S\) to a value \(v\). The relation is defined by
induction on \(S\) and is very similar to the natural semantics of \lcalc{} with
environment.
We focus on two of the most important rules on Figure~\ref{fig:natskel}.
The \textsc{LetIn} rule evaluates a let-binding by first evaluating \(S_1\),
next binding the result to the pattern in the current environment, then finally evaluating
\(S_2\) in the extended environment. The \textsc{Branch} rule describes how to
evaluate a branching:
any branch that successfully reduces to a value may be taken.
Finally, the pattern matching rules for environment extension are given in
Figure~\ref{fig:pattmatch}.
\iflong{}The whole set of rules is given in Appendix~\ref{sec:bigstepfull}.
\else{}The whole set of rules is given in~\cite{longversion}.\fi{}

The big-step semantics of a branching explains the types of some unspecified
terms seen earlier.
The partiality of the instantiations of \skinl{isZero} and \skinl{isNotZero}
functions are used in the semantics of While to prevent some branches to be
taken. They act as filters: if a branch does not have a derivation because its
filter is undefined on the input, the alternative is to take another branch.

\section{Big-step Semantics with Program Points}\label{sec:ppoints}

%\subsection{Program Points}\label{sec:ppointsdef}

In this section, we introduce our first contribution, which is the integration
of the notion of \emph{program point} into the framework of Skeletal
Semantics. A program point maps-to a precise fragment of a given program.
They play an important role in semantics-based
program analysis, to indicate places where information about the
execution is collected. Program points are essential to abstract
interpretation as an abstract interpretation usually computes an abstraction of
the state of the execution of the analysed program for each program point.
Our formalisation of program points for Skeletal Semantics
is modular and works for the big-step semantics of Skel, but also for the
abstract interpretation of Skel, presented in Section~\ref{sec:abstint}.

In Skel, programs are values of an algebraic data type (ADT),
such as \skinl{stmt} or \skinl{expr} in the While example. For
instance, the skeletal term \skinl{Seq(Assign(x, Const 0), Assign(y, Const 1))} is a
While \textbf{program}
of type \skinl{stmt}. A program point is a path in the ADT of
the program, encoded as a list of integers (underlined to distinguish them from
natural numbers).

\begin{wrapfigure}{r}{0\textwidth}
  \begin{forest}
    [\skinl{Seq}
    [\skinl{Assign}, edge label={node[midway,left,font=\scriptsize]{\underline{0}}}
    [x, edge label={node[midway,left,font=\scriptsize]{\underline{0}}}]
    [\skinl{Const 0}, edge label={node[midway,right,font=\scriptsize]{\underline{1}}}]
    ]
    [\skinl{Assign}, edge label={node[midway,right,font=\scriptsize]{\underline{1}}}
    [y, edge label={node[midway,left,font=\scriptsize]{\underline{0}}}]
    [\skinl{Const 1}, edge label={node[midway,right,font=\scriptsize]{\underline{1}}}]
    ]
    ]
  \end{forest}
\end{wrapfigure}
For example, \(\epsilon\) is the empty path, it corresponds to the
whole program. The path
\underline{01} corresponds to \skinl{Const 0}.
The set of program points is thus \(\ppoint=\mathbb{N}^*\).

Let \(\prog\) be a term of an ADT and \(\pp\) a program point. We note
\(\prog@\pp\) the subterm of \(\prog\) at program point \(\pp\). Formally, it is
defined as follows.
\begin{align*}
  & v@\underline{\epsilon} = v\qquad\qquad
  & \skinl{C}(v_0,.., v_{n-1})@(\underline{i} \pp) = v_i@\pp
  ~\mbox{ when } 0\leq i\leq n-1
\end{align*}

\subsection{Building Values with Program Points}\label{sec:ppointvals}

Our approach is to replace the values that correspond to programs with program points. These program points
correspond to a sub-program of a main program that is a parameter of the
interpretation. The values that ought to be replaced by program points should be
values representing fragments of the program being executed. Therefore, we
call \(\pptypes\) the set of \emph{program types}, i.e., types representing
programs. For instance, for the While language,
\(\pptypes=\set{\skinl{stmt}, \skinl{expr}}\). Moreover, the interpretation is
parametrised by a program \(\prog\), which is a value of a type
\(\tau\in\pptypes\). For the While language, that could be
\(\prog=\skinl{Seq(Assign(x, Const 0), Assign(y, Const 1))}\).
Therefore, the rules to build values are unchanged except for the values with
program types. Values with program types are defined by the following equation,
where \(\itype{}{\tau}\) is defined in Figure~\ref{fig:buildval}.
\begin{equation*}
  \tau\in\pptypes\implies
  \itypepp{\prog}{\tau} = \setof{\pp\in\ppoint}{\prog@\pp\in\itype{}{\tau}}
\end{equation*}
Therefore in our example, \(\epsilon\) is now a value of type \skinl{stmt}
denoting the value \(\prog\), and \underline{0} denotes the value
\skinl{Assign(x, Const 0)}.

For each unspecified term \(x\), we assume given an interpretation \nonspecpp{x},
which is identical to the concrete interpretation \nonspec{x} where program terms
are replaced by program points. The full definition of this interpretation can
be found in the long version of this paper~\cite{longversion}.

\subsection{Pattern-matching of Program Points}\label{sec:pattmatchpp}

Replacing some values with program points does not change the interpretation of
Skel, except when matching a program point with a pattern. Indeed, a program
point \(\pp\) corresponds to the sub-program \(\prog@\pp\) if it exists, and it
might be matched against a pattern \(\skinl{C}~p\).
To handle this case, the program point is \emph{unfolded}, meaning the
constructor at \(\prog@\pp\) is revealed, and the parameters of the
constructor are replaced with program points if their type is a program type.
To give an example, given \(\prog\) as before,
unfolding \(\epsilon\) gives \(\skinl{Seq}(\underline{0},
\underline{1})\): the constructor is revealed and the parameters are program
points because they both have type \(\skinl{stmt}\in\pptypes\). On the other
hand, unfolding \(\underline{00}\) directly returns \skinl{x}, as identifiers are
not program types in this example.
This unfolding mechanism is added to pattern matching via the following rule:
%We simply add the following rule.
\begin{mathpar}
  \inferrule*[right=unfold]
  {\prog@\pp{} = \skinl{C}\,(v'_0, .., v'_{n-1}) \\ \skinl{C}:(\tau_0\times..\times\tau_{n-1}, \tau) \\
    v_j= \ite{\tau_j\in\pptypes}{\pp\underline{j}}{v_j'} \\
    \addasn{\pptypes, \prog}{\extenv{E}{p}{(v_0, .., v_{n-1})}}{E'}}
  {\addasn{\pptypes, \prog}{\extenv{E}{\skinl{C}\,p}{\pp}}{E'}}
\end{mathpar}
Note that the pattern matching is now parametrised with \(\pptypes\) and
\(\prog\). To perform the pattern-matching of \(\pp\) with \(\skinl{C}~p\), the
value \(\prog@\pp\) must have constructor \skinl{C} at the root. The
parameters that have a program type are replaced by their program
point and the pattern-matching is performed recursively.

\section{Abstract Interpretation of Skel}\label{sec:abstint}

We present our main contribution in this paper, which is the
definition of an abstract interpreter of Skel that is sound with respect
to the big-step semantics of Section~\ref{sec:ppoints}.
This abstract interpreter will serve as the foundations of a
methodology for building
abstract interpreters for object languages from their skeletal semantics. 
In this methodology, several ingredients must be provided to generate such an
abstract interpreter.
\begin{itemize}
\item An abstract instantiation of unspecified types to sets of abstract
  values, each of which comes with a concretisation function, a
  partial order, and an abstract union operator. Our framework
  automatically extends these definitions to all types.
\item A \emph{state of the abstract interpretation} (AI-state in
  the following) used to carry additional information during the
  abstract evaluation. For instance, in our While language, the
  AI-state records the current approximation of the store for each
  program point.
\item The user may give functions to update the AI-state at the
  start and end of a call to a specified function. This
  is typically used for evaluation functions, such as \skname{eval_stmt}, to record
  information right before and right after executing a sub-program.
\item %The user needs to change the
  An instantiation of the unspecified terms, based on the 
  definition of 
  the abstract instantiation of types and the AI-state. 
\end{itemize}
Our framework provides an abstract meta-semantics of Skel that threads
the AI-state through the evaluation, including calls to unspecified
terms. As the foundational correctness property of the methodology, we
prove that if the abstract instantiation of types and terms provided
by the user satisfy some correctness criteria, then the whole abstract
interpreter that is generated is also correct.

% We define the abstract values, comparison functions, and upper bounds in
% Section~\ref{sec:abst-val}.
% We define the state of the abstract interpretation in
% Section~\ref{sec:abststate}.
% We present the interpretation rules of skeletons for an abstract interpretation
% in Section~\ref{sec:abstint-rules},
% We present how the abstract values are linked to the concrete values by defining
% concretisation functions in Section~\ref{sec:concretisation}.
% Finally, we formulate a theorem of soundness
% in Section~\ref{sec:abstint-correction}.
% In the following sections, \skelsem{} denotes an arbitrary Skeletal Semantics.

\subsection{Abstract Values}\label{sec:abst-val}

\begin{figure}
  \centering
  \begin{gather*}
      \nclosures{\tau_1\to\tau_2}  =
      \setof{\defclos{f}{n}}{
        \begin{gathered}
          \skkw{val}~f: \tau_1\to\tau_2 [= t]
          \in\skelsem \\
          \arity{f} = n
        \end{gathered}
      }\\
      \closures{\tau_1\to\tau_2} =
      \setof{(\Gamma, p, S, E^\sharp)}{
        \begin{gathered}
          (\Gamma, \lambda{}p:\tau_1\to S)\in\funsof{\mathcal S}\\
          \envtype{E^\sharp}{\Gamma} \\
          \gettype{S}{\extenv{\Gamma}{p}{\tau_1}}{\tau_2}
        \end{gathered}
      }\\
      \itypeabst{}{\tau_1\times..\times\tau_n} = \finparts{\abstnonbot{\tau_{1}} \times .. \times \abstnonbot{\tau_{n}}}\\
      \itypeabst{}{\tau_{2}} = \{ C~v^\sharp \mid C:(\tau_{1},\tau_{2}) \land v^{\sharp} \in \abstnonbot{\tau_{1}} \} \cup \{\bot_{\tau_{2}}, \top_{\tau_{2}}\} \\
      \itypeabst{}{\tau_{1}\to\tau_{2}} = \parts{\nclosures{\tau_1\to\tau_2}} \cup \parts{\closures{\tau_1\to\tau_2}}
    \end{gather*}
    \caption{Abstract Values for Specified Types}\label{fig:buildabstval}
  \label{fig:abstvals}
\end{figure}
Abstract values are built similarly to concrete ones, based on their types, and
we write \(\itypeabst{}{\tau}\) for the set of abstract values of type \(\tau\). We
first define the sets of abstract (named) closures at the top of Figure~\ref{fig:buildabstval}.
Abstract named closures are identical to concrete named closures: they are pairs
of a name of a function defined in the skeletal semantics and its arity.
Abstract closures consist of a typing environment, a pattern, a skeleton, and an abstract
environment that is consistent with the typing environment. An abstract
environment \(E^{\sharp}\) is a mapping from skeletal variables to
abstract values. It is said to be
consistent with typing environment \(\Gamma\), written
\(\envtype{E^{\sharp}}{\Gamma}\), if they have the same domain and if, for every
\(x \in \dom{\Gamma}\), we have \(E^{\sharp}(x) \in \itypeabst{}{\Gamma(x)}\).

We assume that a partial order on \(\itypeabst{}{\tau}\) is provided for
each unspecified type \(\tau\), and that it includes
a smallest value, denoted by \(\bot_{\tau}\), and a largest value, denoted by
\(\top_{\tau}\). In the case of While, we instantiate
\skname{ident} with the flat lattice of \(\mathcal{X}\), \skname{lit} with the flat
lattice of integers, \skname{int} with closed intervals of \(\mathbb{Z}\cup\set{-\infty, +\infty}\), and \skname{store} with a partial
mapping from identifiers to non-empty intervals.

We define abstract values for specified types in Figure~\ref{fig:buildabstval}, writing \(\abstnonbot{\tau}\)
for \(\itypeabst{}{\tau} \setminus \{\bot_{\tau}\}\). Abstract tuples are finite sets of
tuples of (non-bottom) abstract values, with
\(\bot_{\tau_{1} \times .. \times \tau_{n}}\) being the empty set. We use sets to retain some precision in
the analysis.
Abstract values of an algebraic data type are simply
constructors applied to an abstract value of the correct type. Finally,
functional abstract values are sets of abstract closures of this type,
with \(\bot_{\tau_{1}\to\tau_{2}}\) being the empty set.

%\subsection{State of the Abstract Interpretation}\label{sec:abststate}

The AI-state \(\abststate\)
contains information collected throughout the abstract
interpretation. It is dependent on the analysis and the language, and therefore
must be provided, similarly to unspecified values.
Moreover, a partial order and a union must be given for abstract states.
In the case of while, the AI-state records information about the
abstract store before (\texttt{In}) and after (\texttt{Out}) every program
point. We write \texttt{Pos} for either \texttt{In} or \texttt{Out}. We then define \(\abststate\) as a mapping from
program points and a \texttt{Pos} to abstract while stores. We have
\(\abststate_{1} \aless \abststate_{2}\) if
\(\dom{\abststate_{1}} \subseteq \dom{\abststate_{2}}\) and for any
\((\pp,\mathtt{Pos}) \in \dom{\abststate_{1}}\), we have
\(\abststate_{1}(\pp,\mathtt{Pos}) \aless_{store} \abststate_{2}(\pp,\mathtt{Pos})\). We define
\(\abststate_{1} \join \abststate_{2}\) as the mapping from
\(\dom{\abststate_{1}} \cup \dom{\abststate_{2}}\) that relates \((\pp,\mathtt{Pos})\) to
\(\abststate_{1}(\pp,\mathtt{Pos}) \join_{int} \abststate_{2}(\pp,\mathtt{Pos})\).

% An abstract state can be an arbitrary data-structure, but in this paper we
% suppose that it is a product type value. An abstract state \(\abststate\)
% contains several \emph{fields}. We write \(\abststate\cdot c\) to access to the
% field \(c\) of the abstract state \(\abststate\).
% The notation \(\updrec{\abststate}{c = v}\) denotes a new state equals to
% \(\abststate\), excepts for the \(c\) field which is now equal to \(v\). The
% names and definitions of the fields must be specified.

\subsection{Operations on Abstract Values}\label{sec:abst-ops}

Each abstract domain has a partial order and an associated join
operator. In addition, a concretisation function that returns a set of
concrete values defines the meaning of each abstract value.  All of
these functions are indexed by types (or type environments when they
deal with environments). We assume they are provided for non-specified
types, and show in this section how to extend them to all types.

A concretisation function for type \(\tau\) maps an abstraction state and an
abstract value in \(\itypeabst{}{\tau}\) to \(\parts{\itype{}{\tau}}\), a set of
concrete values.  We also define a function of concretisation
\(\gamma_{\Gamma}\) which maps abstract skeletal environments to sets of
concrete skeletal environments.
\begin{align*}
  \gamma_{\tau_1\times..\times\tau_n}(\abststate, t^\sharp) =&
  \bigcup_{(v_1^\sharp,.., v_n^\sharp)\in t^\sharp}
  \gamma_{\tau_1}(\abststate, v^\sharp_1) \times..\times
  \gamma_{\tau_n}(\abststate, v^\sharp_n)\\
  \gamma_{\tau_2}(\abststate, C\,v^\sharp) =& \set{C\,v~|~ C:(\tau_{1}, \tau_2),\, v\in\gamma_{\tau_{2}}(\abststate, v^\sharp)}\\
  \gamma_{\tau_1\to\tau_2}(\abststate, F) =& \setof{\defclos{f}{n}}{\defclos{f}{n}\in F}
  \cup\setof{(\Gamma, p, S, E)}{(\Gamma,p, S, E^\sharp)\in F\,\land\, E\in\gamma_{\Gamma}(\abststate, E^\sharp)}\\
  \gamma_{\Gamma}(\abststate, E^\sharp) =& \setof{E}{
    \envtype{E}{\Gamma} \land \envtype{E^\sharp}{\Gamma} \land \forall x \in \dom{\Gamma}, E(x)\in\gamma_{\Gamma(x)}(\abststate, E^\sharp(x))
  }\\
  \gamma(\abststate, \bot_\tau) =&\ \emptyset\qquad
  \gamma(\abststate, \top_\tau) = \itypeabst{}{\tau}
\end{align*}
In the case of While, the concretisation function for \texttt{ident} and
\texttt{lit} are immediate as they are flat lattices. The concretisation
function for an interval \(i\) is the set of integers it contains:
\(\gamma_{\mathtt{int}}(i) = \setof{n}{n \in i}\). Finally, the concretisation
of an abstract store \(\sigma^{\sharp}\) is
\begin{equation*}
  \gamma_{\mathtt{store}}(\sigma^{\sharp}) = \setof{\sigma}{\dom{\sigma} = \dom{\sigma^{\sharp}} \land \forall x \in
    \dom{\sigma}, \sigma(x) \in \gamma_{\mathtt{int}}(\sigma^{\sharp}(x))}
\end{equation*}

To compare abstract values, we define partial orders that are relations, but we
call them functions as they can be viewed as boolean functions.
For every unspecified type \(\tau\), we assume a comparison
function \(\aless_{\tau}\) which is a partial order between abstract values.
It must satisfy the following property: for any value
\(v^{\sharp} \in \itypeabst{}{\tau}\), we have
\(\bot_{\tau} \aless_{\tau} v^{\sharp}\) and
\(v^{\sharp} \aless_{\tau} \top_{\tau} \).
For every other type, the comparison function is the smallest partial order that
satisfies the following equations.
\begin{gather*}
   C~v^\sharp\aless_{\tau_a} C~w^\sharp \iff
   v^\sharp\aless_\tau   w^\sharp \text{ with } C:(\tau, \tau_a)\\
   v^{\sharp} \aless_{\tau_1\times..\times\tau_n} w^{\sharp} \iff \forall (v_{1}^{\sharp},..,v_{n}^{\sharp}) \in v^{\sharp},\, \exists (w_{1}^{\sharp},..,w_{n}^{\sharp}) \in w^{\sharp} \text{
   such that }  \forall i \in [1..n],\,  v_{i}^{\sharp} \aless_{\tau_{i}}  w_{i}^{\sharp}
   \\
   F_1 \aless_{\tau_1\to\tau_2}  F_2 \iff
  \left\{
  \begin{aligned}
    \defclos{f}{n}\in F_1 &\implies \defclos{f}{n}\in F_2\\
    (\Gamma, p, S, E_1^\sharp)\in F_1 &\implies \exists (\Gamma, p, S, E_2^\sharp)\in F_2,\,  E_1^\sharp\aless_{\Gamma}  E_2^\sharp
  \end{aligned}
  \right.
  \\
     E_1^\sharp\aless_{\Gamma} E_2^{\sharp} \iff{}
    \envtype{E_1^\sharp}{\Gamma}\,\land\,\envtype{E_2^\sharp}{\Gamma}\,\land\, \forall{}
    x\in\dom{E_1^\sharp},\,  E_1^\sharp(x)\aless_{\Gamma(x)}  E_2^\sharp(x)
    \\
   v^\sharp\aless_{\tau}  \top_\tau\qquad
   \bot_\tau\aless_{\tau}  v^{\sharp}
\end{gather*}
Most rules are straightforward.
To compare two
functions, all named closures of the left function must be in the right
function. Moreover, for all closures in the left function, there must be a
closure in the right function with the same pattern and skeleton, but with a
bigger abstract environment. Abstract environments are compared using point-wise
lifting. For our While language, we have \(i_{1} \aless_{\mathtt{int}} i_{2}\)
if the interval \(i_{1}\) is included in \(i_{2}\), and
\(\sigma^{\sharp}_{1} \aless_{\mathtt{store}} \sigma^{\sharp}_{2}\) if for all
\(x\) in \(\dom{\sigma^{\sharp}_{1}}\), we have
\(\sigma^{\sharp}_{1}(x) \aless_{\mathtt{int}} \sigma^{\sharp}_{2}(x)\).

\begin{definition}\label{def:gamma-monotonic}
  A concretion function \(\gamma_{\tau}\) is \emph{monotonic} iff for any
  \(v_{1}^{\sharp} \aless_{\tau} v_{2}^{\sharp}\) and \(\abststate_{1} \aless \abststate_{2}\).
  we have
  \(\gamma_{\tau}(\abststate_{1}, v_{1}^{\sharp}) \subseteq \gamma_{\tau}(\abststate_{2},v_{2}^{\sharp})\).
\end{definition}
\begin{lemma}\label{lem:while-gamma-monotonic}
\(\gamma_{\mathtt{ident}}\), \(\gamma_{\mathtt{lit}}\), \(\gamma_{\mathtt{int}}\) and
\(\gamma_{\mathtt{store}}\) are monotonic.
\end{lemma}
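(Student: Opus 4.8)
The plan is to exploit the fact that, for these four unspecified types, the concretisation functions provided for While do not depend on the AI-state: they are written as \(\gamma_{\mathtt{ident}}\), \(\gamma_{\mathtt{lit}}\), \(\gamma_{\mathtt{int}}(i)\) and \(\gamma_{\mathtt{store}}(\sigma^\sharp)\), each a function of the abstract value alone. Consequently the hypothesis \(\abststate_1 \aless \abststate_2\) of Definition~\ref{def:gamma-monotonic} plays no role here, and monotonicity collapses to the single-argument statement \(v_1^\sharp \aless_\tau v_2^\sharp \implies \gamma_\tau(v_1^\sharp) \subseteq \gamma_\tau(v_2^\sharp)\) for each of the four types. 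I would then treat the four cases separately.

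For \(\mathtt{ident}\) and \(\mathtt{lit}\), which are flat lattices, I would proceed by case analysis on how the order \(v_1^\sharp \aless_\tau v_2^\sharp\) can hold. In a flat lattice there are only three possibilities: \(v_1^\sharp = \bot_\tau\), in which case \(\gamma(\bot_\tau) = \emptyset\) is contained in everything; \(v_2^\sharp = \top_\tau\), in which case \(\gamma(v_1^\sharp) \subseteq \gamma(\top_\tau)\), the concretisation of the top element; or \(v_1^\sharp\) and \(v_2^\sharp\) are the same singleton element, for which the inclusion is an equality. These exhaust the order, so monotonicity is immediate.

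For \(\mathtt{int}\), the order \(i_1 \aless_{\mathtt{int}} i_2\) is interval inclusion and \(\gamma_{\mathtt{int}}(i) = \setof{n}{n \in i}\), so \(\gamma_{\mathtt{int}}(i_1) \subseteq \gamma_{\mathtt{int}}(i_2)\) is simply the set-theoretic reading of \(i_1 \subseteq i_2\); the \(\bot_{\mathtt{int}}\) (empty interval) and \(\top_{\mathtt{int}}\) cases are absorbed by the same argument. For \(\mathtt{store}\), I would reduce to the \(\mathtt{int}\) case just established: taking \(\sigma \in \gamma_{\mathtt{store}}(\sigma_1^\sharp)\) under the assumption \(\sigma_1^\sharp \aless_{\mathtt{store}} \sigma_2^\sharp\), the defining clause of \(\gamma_{\mathtt{store}}\) gives \(\sigma(x) \in \gamma_{\mathtt{int}}(\sigma_1^\sharp(x))\) for every \(x \in \dom{\sigma}\), and pointwise monotonicity of \(\gamma_{\mathtt{int}}\) together with \(\sigma_1^\sharp(x) \aless_{\mathtt{int}} \sigma_2^\sharp(x)\) yields \(\sigma(x) \in \gamma_{\mathtt{int}}(\sigma_2^\sharp(x))\), witnessing \(\sigma \in \gamma_{\mathtt{store}}(\sigma_2^\sharp)\).

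The one genuinely delicate point, and the step I expect to be the main obstacle, is the domain condition hidden in the store case. The set \(\gamma_{\mathtt{store}}(\sigma^\sharp)\) contains only stores whose domain is exactly \(\dom{\sigma^\sharp}\), so the pointwise argument above establishes \(\sigma \in \gamma_{\mathtt{store}}(\sigma_2^\sharp)\) only if \(\dom{\sigma_1^\sharp} = \dom{\sigma_2^\sharp}\). Since abstract stores map to non-empty intervals, \(\gamma_{\mathtt{store}}(\sigma_1^\sharp)\) is non-empty, so the inclusion would in fact fail if the two domains differed. I would therefore make explicit that \(\aless_{\mathtt{store}}\) is only intended between abstract stores sharing the same domain (the program's variable set), so that \(\sigma_1^\sharp \aless_{\mathtt{store}} \sigma_2^\sharp\) forces \(\dom{\sigma_1^\sharp} = \dom{\sigma_2^\sharp}\); under this reading the domain equality is immediate and the reduction to \(\gamma_{\mathtt{int}}\) goes through.
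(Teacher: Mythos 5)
The paper states this lemma without giving any proof at all (it is treated as immediate from the definitions, as is the analogous Lemma~\ref{lem:lam-gamma-monotonic} for the CFA instantiation), so there is no official argument to compare yours against; your write-up is a correct and complete elaboration of what the paper leaves implicit. Your opening observation is right: the While concretisations for these four types ignore the AI-state, so the \(\abststate_{1} \aless \abststate_{2}\) hypothesis of Definition~\ref{def:gamma-monotonic} is vacuous and only the order on abstract values matters. The flat-lattice and interval cases are exactly as routine as you say. The genuinely valuable part of your proposal is the store case: you are correct that, as literally written, \(\sigma^{\sharp}_{1} \aless_{\mathtt{store}} \sigma^{\sharp}_{2}\) only constrains \(x \in \dom{\sigma^{\sharp}_{1}}\) and so permits \(\dom{\sigma^{\sharp}_{1}} \subsetneq \dom{\sigma^{\sharp}_{2}}\), while \(\gamma_{\mathtt{store}}\) demands exact domain equality; in that situation the inclusion \(\gamma_{\mathtt{store}}(\sigma^{\sharp}_{1}) \subseteq \gamma_{\mathtt{store}}(\sigma^{\sharp}_{2})\) genuinely fails (the left side is nonempty since intervals are nonempty, and none of its elements has the right domain). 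Your repair --- restricting \(\aless_{\mathtt{store}}\) to abstract stores over the same domain --- is a reasonable one, and an equally natural alternative would be to relax \(\gamma_{\mathtt{store}}\) to require \(\dom{\sigma} \supseteq \dom{\sigma^{\sharp}}\). Either way, you have identified a real imprecision in the paper's definitions rather than a flaw in your own argument.
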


For each type, an upper bound (or join) is defined.
For every non-specified type \(\tau\), we assume an upper
bound \(\join_{\tau}\). The definition of \(\join_{\texttt{ident}}\) and 
\(\join_{\texttt{lit}}\) have the usual definition for flat
lattices. \(\join_{\texttt{int}}\) is  the convex hull of two intervals and
\(\join_{\texttt{store}}\) is the usual point-wise lifting of the abstract union
of integers. Moreover, we note \(\nabla^\sharp_{\texttt{int}}\) the widening on intervals
(define below) and \(\nabla^\sharp_{\texttt{store}}\) the point-wise lifting of the
widening of intervals.
\begin{align*}
  & [n_1, n_2] \nabla^\sharp_{\texttt{int}} [m_1, m_2] =
    \left[
    \left\{
    \begin{array}{ll}
      n_1 & \mbox{ if } n_1\leq m_1\\
      -\infty & \mbox{ otherwise}
    \end{array}
    \right. ,
    \left\{
    \begin{array}{ll}
      n_2 & \mbox{ if } m_2\leq n_2\\
      +\infty & \mbox{ otherwise}
    \end{array}
    \right.
    \right]
\end{align*}
We extend it for every other type.
\begin{align*}
    & (C~v^\sharp)\join_{\tau_2} (C~w^\sharp) = C~(v^\sharp \join_{\tau_{1}} w^\sharp)
    \text{ with }C:(\tau_{1}, \tau_2) & 
    E_1^\sharp\join_{\Gamma} E_2^\sharp = \set{x\in\dom{\Gamma} \mapsto E_1^\sharp(x)\join_{\Gamma(x)} E_2^\sharp(x) }\\
    & (C~v^\sharp)\join_{\tau_2} (D~w^\sharp) = \top_{\tau_2}
    \text{ with }C:(\tau_{1}, \tau_2)\,\land\,D:(\tau'_{1}, \tau_2) &
    v^\sharp\join_\tau \top_\tau = \top_\tau\join_\tau v^\sharp = \top_\tau\\
    & v^{\sharp} \join_{\tau_1\times..\times\tau_n} w^{\sharp} = v^{\sharp} \cup w^{\sharp} &
    v^\sharp\join_\tau \bot_\tau = \bot_\tau\join_\tau v^\sharp = v^\sharp\\
    & F_1 \join_{\tau_1\to\tau_2} F_2 =  F_1 \cup F_2
  \end{align*} 
\noindent
Joining two algebraic values with the same constructor is joining their
parameters, and joining algebraic values with different constructors yields
top. The join of abstract tuples or abstract functions is their union.
Joining
abstract environments is done by point-wise lifting. For each type, top is an
absorbing element, and bottom is the neutral element.
\begin{lemma}
\(\sqsubseteq^\sharp_{\mathtt{ident}}\), \(\sqsubseteq^\sharp_{\mathtt{lit}}\), \(\sqsubseteq^\sharp_{\mathtt{int}}\),
\(\sqsubseteq^\sharp_{\mathtt{store}}\) are orders.
\(\sqcup^\sharp_{\mathtt{ident}}\), \(\sqcup^\sharp_{\mathtt{lit}}\), \(\sqcup^\sharp_{\mathtt{int}}\), 
\(\sqcup^\sharp_{\mathtt{store}}\) give an upper bound.
\end{lemma}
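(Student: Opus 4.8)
The plan is to treat each of the four types separately and, for each, to split the claim into two independent checks: that the relation \(\aless\) is a partial order (reflexive, antisymmetric, transitive), and that the corresponding \(\join\) yields an upper bound of its two arguments, i.e.\ \(v^\sharp \aless v^\sharp \join w^\sharp\) and \(w^\sharp \aless v^\sharp \join w^\sharp\). For \(\mathtt{ident}\) and \(\mathtt{lit}\) the domains are flat lattices over \(\mathcal{X}\) and \(\mathbb{Z}\), so the order and join are exactly the textbook flat-lattice operations; all the poset axioms and the upper-bound property are immediate from the case analysis on whether an argument is \(\bot\), \(\top\), or an atom, and I would only state them. The real work lies in \(\mathtt{int}\) and \(\mathtt{store}\).

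For \(\mathtt{int}\), I would first note that \(i_1 \aless_{\mathtt{int}} i_2\), defined as inclusion of intervals, is equivalent to the endpoint conditions \(m_1 \le n_1\) and \(n_2 \le m_2\) for \(i_1 = [n_1,n_2]\) and \(i_2 = [m_1,m_2]\), where comparison takes place in the totally ordered set \(\mathbb{Z}\cup\set{-\infty,+\infty}\). Reflexivity and transitivity then follow directly from those of \(\le\), and antisymmetry follows because each non-empty interval has a unique pair of endpoints, so mutual inclusion forces \(n_1 = m_1\) and \(n_2 = m_2\). For the join I would use that the convex hull \(i_1 \join_{\mathtt{int}} i_2 = [\min(n_1,m_1),\max(n_2,m_2)]\) is again a non-empty interval and satisfies \(\min(n_1,m_1) \le n_1\) and \(n_2 \le \max(n_2,m_2)\), which is precisely \(i_1 \aless_{\mathtt{int}} i_1 \join_{\mathtt{int}} i_2\); the symmetric inequalities give the bound for \(i_2\).

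For \(\mathtt{store}\), the order and join are the pointwise liftings of those for \(\mathtt{int}\), so reflexivity and transitivity are inherited variable by variable. The one delicate point is antisymmetry: the clause ``for all \(x \in \dom{\sigma_1^\sharp}\), \(\sigma_1^\sharp(x) \aless_{\mathtt{int}} \sigma_2^\sharp(x)\)'' implicitly presupposes \(\dom{\sigma_1^\sharp} \subseteq \dom{\sigma_2^\sharp}\) for the right-hand side to be defined, so \(\sigma_1^\sharp \aless_{\mathtt{store}} \sigma_2^\sharp\) and \(\sigma_2^\sharp \aless_{\mathtt{store}} \sigma_1^\sharp\) together force \(\dom{\sigma_1^\sharp} = \dom{\sigma_2^\sharp}\); pointwise antisymmetry of \(\aless_{\mathtt{int}}\) then gives \(\sigma_1^\sharp = \sigma_2^\sharp\). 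For the upper-bound property I would evaluate \((\sigma_1^\sharp \join_{\mathtt{store}} \sigma_2^\sharp)(x)\) on the union of domains: on a variable present in both stores it is \(\sigma_1^\sharp(x) \join_{\mathtt{int}} \sigma_2^\sharp(x)\), above each argument by the \(\mathtt{int}\) case, and on a variable present in only one store it equals that store's value, above it by reflexivity.

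The only genuine obstacle I anticipate is the bookkeeping of domains for \(\mathtt{store}\): getting the implicit domain-inclusion hypothesis right so that antisymmetry holds, and checking that \(\join_{\mathtt{store}}\) remains an upper bound even on variables occurring in a single operand. I would also record, once, that the \(\mathtt{int}\) convex hull of two non-empty intervals is non-empty, so that \(\join_{\mathtt{store}}\) really produces a store mapping identifiers to non-empty intervals and all operations stay within their declared domains. Everything else reduces to the totally ordered structure of \(\mathbb{Z}\cup\set{-\infty,+\infty}\) and is routine.
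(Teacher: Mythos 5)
Your proposal is correct, and the paper itself offers no proof of this lemma --- it is stated as a routine verification, exactly the kind of case analysis you carry out (flat lattices for \(\mathtt{ident}\) and \(\mathtt{lit}\), endpoint comparison and convex hull for \(\mathtt{int}\), pointwise lifting with domain bookkeeping for \(\mathtt{store}\)). The only small point worth adding is how the mandatory least element \(\bot_{\mathtt{int}}\) (the empty interval) fits in: your endpoint characterisation of inclusion, and in particular the uniqueness-of-endpoints argument for antisymmetry, applies only to non-empty intervals, so you should either identify all empty representations as a single \(\bot_{\mathtt{int}}\) or appeal to the paper's generic clauses \(\bot_{\tau}\aless_{\tau}v^{\sharp}\) and \(v^{\sharp}\aless_{\tau}\top_{\tau}\) to cover the adjoined extremal elements.
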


\begin{lemma}\label{lem:gamma-monotonic}
  If for all unspecified types \(\tau_u\), \(\gamma_{\tau_u}\) is monotonic,
  then for all \(\tau\), \(\gamma_\tau\) is also monotonic.
\end{lemma}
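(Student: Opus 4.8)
The plan is to prove the statement by well-founded induction, after strengthening it so that environment concretisation is covered too. Concretely, I would establish simultaneously that (i) for every type $\tau$, if $v_1^\sharp \aless_\tau v_2^\sharp$ and $\abststate_1 \aless \abststate_2$ then $\gamma_\tau(\abststate_1, v_1^\sharp) \subseteq \gamma_\tau(\abststate_2, v_2^\sharp)$, and (ii) for every typing environment $\Gamma$, if $E_1^\sharp \aless_\Gamma E_2^\sharp$ and $\abststate_1 \aless \abststate_2$ then $\gamma_\Gamma(\abststate_1, E_1^\sharp) \subseteq \gamma_\Gamma(\abststate_2, E_2^\sharp)$. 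Claim (ii) must be carried along because the arrow case of (i) invokes it on captured environments, and (ii) in turn invokes (i) on the captured values. The decisive choice is the induction measure: I would \emph{not} induct on the type $\tau$, but on the structural size of the pair of abstract values, since each unfolding of the defining equations of $\gamma$ descends into strictly smaller abstract values.

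First I would dispatch the cases that do not use the induction hypothesis. If $v_1^\sharp = \bot_\tau$ then $\gamma_\tau(\abststate_1, \bot_\tau) = \emptyset$ and the inclusion is trivial; if $v_2^\sharp = \top_\tau$ then $\gamma_\tau(\abststate_2, \top_\tau) = \itype{}{\tau}$ already contains every concrete value of type $\tau$; and if $\tau$ is an unspecified type the conclusion is exactly the assumed monotonicity of $\gamma_\tau$ (Definition~\ref{def:gamma-monotonic}), which absorbs both the $v_1^\sharp \aless_\tau v_2^\sharp$ and $\abststate_1 \aless \abststate_2$ hypotheses at once. A useful observation is that these three cases are the \emph{only} places where the AI-state is inspected: in every other case $\abststate$ is threaded unchanged into the recursive calls, so monotonicity in $\abststate$ is inherited directly from the induction hypothesis.

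Then I would treat the three inductive cases, in each of which the order relation hands me matching sub-values to which the induction hypothesis applies. For a tuple type, a concrete $w = (w_1,\ldots,w_n) \in \gamma_\tau(\abststate_1, v_1^\sharp)$ is witnessed by some tuple of $v_1^\sharp$; the definition of $\aless$ on tuples yields a componentwise-dominating tuple in $v_2^\sharp$, and the induction hypothesis on each component gives $w \in \gamma_\tau(\abststate_2, v_2^\sharp)$. For a specified (ADT) type with $v_1^\sharp = C\,u^\sharp$, the order forces $v_2^\sharp$ to be either $\top_\tau$ (already handled) or $C\,u'^\sharp$ with the same constructor and $u^\sharp \aless u'^\sharp$, so the claim reduces to the induction hypothesis on the argument via $\gamma_{\tau_1}$. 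For an arrow type, a named closure $\defclos{f}{n} \in F_1$ lies in $F_2$ by definition of $\aless$ and concretises to itself independently of $\abststate$, while an anonymous closure arises from some $(\Gamma,p,S,E_1^\sharp) \in F_1$ with a dominating $(\Gamma,p,S,E_2^\sharp) \in F_2$ and $E_1^\sharp \aless_\Gamma E_2^\sharp$, at which point claim (ii) closes the case; claim (ii) itself follows by pointwise application of (i) to each $E_1^\sharp(x) \aless_{\Gamma(x)} E_2^\sharp(x)$.

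The hard part will be justifying that the induction is well-founded, which is precisely why the measure is the size of the abstract values rather than the type. In the arrow case the captured abstract environment binds variables to abstract values whose types need bear no structural relation to $\tau_1 \to \tau_2$, so an induction on types would be circular. The argument goes through only because abstract values are finite, inductively generated objects: reading $\itypeabst{}{\tau}$ as the least solution of the equations of Figure~\ref{fig:buildabstval} guarantees that each captured value $E_1^\sharp(x)$, each tuple component, and each constructor argument is strictly smaller than the value containing it. I would therefore make this least-fixpoint reading explicit at the outset and then simply verify that every recursive call of $\gamma$ used above strictly decreases the chosen measure.
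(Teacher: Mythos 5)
Your proof is correct. The paper states Lemma~\ref{lem:gamma-monotonic} without giving a proof, so there is nothing to compare against line by line; the closest analogue is the monotonicity proof for the relation \(\concrel{}{}{}\) in the appendix, which likewise proceeds by induction on an inductively generated witness rather than on types. Your identification of the crux is the right one: a naive induction on \(\tau\) breaks in the arrow case because a captured abstract environment binds variables at types unrelated to (and possibly containing) \(\tau_1\to\tau_2\), so the mutual induction with \(\gamma_\Gamma\) must be grounded by the structural size of the abstract values themselves, under an explicit least-fixpoint reading of the equations in Figure~\ref{fig:buildabstval}. The case analysis is complete (the \(\bot\), \(\top\), and unspecified-type base cases, then tuples, constructors, and closures), and the observation that the AI-state is only consulted at unspecified-type leaves correctly disposes of monotonicity in \(\abststate\). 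The one point worth making explicit when writing this up is that in the constructor case the paper's displayed equation for \(\gamma_{\tau_2}(\abststate, C\,v^\sharp)\) recurses via \(\gamma_{\tau_2}\) where it should read \(\gamma_{\tau_1}\) with \(C:(\tau_1,\tau_2)\); you have silently used the corrected version, which is the intended one.
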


\begin{lemma}\label{lem:unspec-order-join}
  If for every unspecified type \(\tau_u\), \(\aless_{\tau_u}\) is an order and
  \(\join_{\tau_u}\) gives an upper bound, then for all \(\tau\), \(\aless_\tau\) is an
  order and \(\join_\tau\) gives an upper bound.
\end{lemma}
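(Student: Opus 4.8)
The plan is to lift the two hypotheses---that on each unspecified type \(\tau_u\) the relation \(\aless_{\tau_u}\) is an order and \(\join_{\tau_u}\) yields an upper bound---to all types by a single induction. Because specified (algebraic) types may be mutually recursive, the clause \(C\,v^\sharp \aless_{\tau_a} C\,w^\sharp \iff v^\sharp \aless_{\tau} w^\sharp\) need not decrease the type, so rather than inducting on the syntax of types I would induct on the (finite) size of the abstract values being compared. Every defining clause of \(\aless_\tau\) relates its arguments only through strict subcomponents---the argument of a constructor, the components of the tuples contained in the two sets, or the values stored in the captured environments of the closures---so this measure is well founded and absorbs the recursion through specified types. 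The base cases are the unspecified types, handled by hypothesis, together with the two leaf rules \(\bot_\tau \aless_\tau v^\sharp\) and \(v^\sharp \aless_\tau \top_\tau\).

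I would treat the three properties in order, since later ones reuse earlier ones. Reflexivity \(v^\sharp \aless_\tau v^\sharp\) comes first. The \(\bot_\tau,\top_\tau\) leaves are immediate; for \(C\,u^\sharp\) the induction hypothesis gives \(u^\sharp \aless u^\sharp\); a tuple value is dominated by itself once its components are compared reflexively (induction hypothesis at the \(\tau_i\)); and a function value witnesses each of its closures by itself, the anonymous ones via reflexivity of the pointwise environment order \(E^\sharp \aless_\Gamma E^\sharp\), which reduces to reflexivity at the types of the stored values.

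Transitivity is the step I expect to be the main obstacle. Given \(a \aless_\tau b\) and \(b \aless_\tau c\), the \(\bot/\top\) and constructor cases are routine (the latter reducing to transitivity at the argument type), but the tuple and function orders have a \(\forall\exists\) (Hoare) shape: for a tuple of \(a\) one must chase a dominating tuple in \(b\), then a dominating tuple in \(c\), and conclude component-wise by the induction hypothesis; the function case is the same chase over closures, routed through transitivity of the environment order. Threading these existential witnesses correctly is the delicate point, and it is also where \emph{antisymmetry} genuinely fails---two Hoare-equivalent but distinct sets of tuples are mutually below each other---so I would read ``order'' here as a preorder (reflexive and transitive) and make that explicit rather than attempt antisymmetry, which is all the later soundness development needs.

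Finally, for the upper-bound property I would prove \(v^\sharp \aless_\tau (v^\sharp \join_\tau w^\sharp)\) and symmetrically for \(w^\sharp\), reusing reflexivity. The absorbing and neutral clauses of \(\join\) dispatch the \(\bot/\top\) cases; matching constructors reduce to the induction hypothesis at the argument type while clashing constructors yield \(\top_\tau\) and invoke the leaf rule; and for tuples and functions the join is set union, so each element of \(v^\sharp\) already lies in \(v^\sharp \cup w^\sharp\) and dominates itself by the reflexivity just established. The only external inputs are the two stated hypotheses on the unspecified types, making the whole argument a clean structural lift.
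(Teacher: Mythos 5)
The paper states this lemma without giving any proof, even in the long-version appendix, so there is nothing to compare your argument against; judged on its own terms, your proposal is sound and well organised. Your key structural decision is right: because specified types are mutually recursive, structural induction on \(\tau\) does not terminate through the clause \(C\,v^\sharp \aless_{\tau_a} C\,w^\sharp \iff v^\sharp \aless_\tau w^\sharp\), whereas induction on a rank of the abstract values does, since every clause (constructor argument, tuple components, values stored in a closure's captured environment) recurses into strictly smaller values; this presupposes that the sets \(\itypeabst{}{\tau}\) are read as a least fixed point so that every abstract value has finite rank, which is worth stating explicitly. The order reflexivity, then transitivity, then upper bound is correct, and the \(\forall\exists\) chase through the Hoare-style orders on sets of tuples and sets of closures is exactly the delicate step you identify.

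Your remark on antisymmetry is not a hedge but a genuine observation about the statement itself: with \(\join\) on tuple and function types defined as set union and \(\aless\) as the Hoare preorder, one has for instance \(\{t_1,t_2\} \aless \{t_2\}\) and \(\{t_2\} \aless \{t_1,t_2\}\) whenever \(t_1 \aless t_2\), so \(\aless_\tau\) cannot be antisymmetric on the representations, and the lemma as literally phrased (``is an order'') only holds if ``order'' is read as preorder or if one quotients by mutual domination. Since the soundness development (monotonicity of \(\gamma_\tau\), Theorem~\ref{thm:abstint-correction}) only ever uses reflexivity and transitivity, weakening the conclusion to a preorder is the right repair, and your proof establishes exactly that. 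The one point to tighten is the upper-bound case for environments and for same-constructor joins, where you should note that \(\join_\Gamma\) is only applied to environments with the same domain (guaranteed by \(\envtype{E^\sharp}{\Gamma}\)), so the pointwise argument is well defined.
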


Finally, we give an abstract specification of unspecified terms. As an
illustration, here are a few specifications from our running example.
\begin{align*}
  \nonspecabst{litToInt}(n) &= [n, n] &
  \nonspecabst{add}([n_1, n_2], [m_1, m_2]) &= [n_1+m_1, n_2+m_2]\\
  \nonspecabst{read}(x, s^\sharp) &= s^\sharp(x) &
  \nonspecabst{write}(x, s, [n_1, n_2]) &= \extstate{s^\sharp}{x}{[n_1, n_2]}
\end{align*}

\begin{definition}\label{def:soundapprox}
  Let \(x\) be an unspecified term of type \(\tau\), such that \(\notarr{\tau}\). We
  say that \(\nonspecabst{x}\) is a \textbf{sound approximation} of
  \(\nonspecpp{x}\) if and only if:
  \[
    \forall\abststate,\, \nonspecpp{x}\subseteq\gamma(\abststate, \nonspecabst{x})
  \]
\end{definition}

\begin{definition}\label{def:soundapproxarrow}
  Let \(f\) be an unspecified term of type
  \(\tau_1\to..\to\tau_n\to\tau\) where \(\notarr{\tau}\). We
  say that  \(\nonspecabst{f}\) is a \textbf{sound approximation} of
  \(\nonspecpp{f}\) iff \(\forall v_i \in\itypepp{\prog}{\tau_i}\),
  \(\forall v^\sharp_i\in\itypeabst{}{\tau_i}\), and for all abstract state \abststate,
  if
  \begin{equation*}
    \left.
      \begin{aligned}
        v_i&\in\gamma_{\tau_i}(\abststate, v^\sharp_i) \\
        \nonspecabst{f}(\abststate, v^\sharp_1,.., v^\sharp_n) &= \abststate', w^\sharp
      \end{aligned}
    \right\}
    \implies
    \nonspecpp{f}(v_1,.., v_n)\subseteq\gamma_{\tau}(\abststate',w^\sharp)
  \end{equation*}
\end{definition}

\begin{lemma}\label{lem:while-soundapprox}
  The abstract instantiations of the unspecified terms for While are sound
  approximation of the concrete instantiations of the unspecified terms.
\end{lemma}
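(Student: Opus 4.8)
The plan is to verify Lemma~\ref{lem:while-soundapprox} by checking each unspecified term of While individually against the appropriate soundness definition, since soundness of the whole abstract interpreter is established compositionally and these term-level obligations are exactly the language-dependent hypotheses the framework requires. The unspecified terms fall into two groups. The terms \skname{isZero} and \skname{isNotZero} have a non-arrow \emph{result} type but, since they are filters, I treat them under Definition~\ref{def:soundapproxarrow}; the remaining terms (\skname{litToInt}, \skname{add}, \skname{lt}, \skname{rand}, \skname{read}, \skname{write}) are ordinary functions also covered by Definition~\ref{def:soundapproxarrow}. For each term $f$, I fix concrete arguments $v_i$ and abstract arguments $v_i^\sharp$ with $v_i \in \gamma_{\tau_i}(\abststate, v_i^\sharp)$, compute $\nonspecabst{f}(\abststate, v_1^\sharp,\ldots,v_n^\sharp) = (\abststate', w^\sharp)$, and show $\nonspecpp{f}(v_1,\ldots,v_n) \subseteq \gamma_\tau(\abststate', w^\sharp)$.

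First I would dispatch the arithmetic and interval-manipulating cases, which are the routine core. For \skname{litToInt}, the hypothesis $n \in \gamma_{\mathtt{lit}}(n)$ (flat lattice) gives $\nonspecpp{litToInt}(n) = \{n\} \subseteq \gamma_{\mathtt{int}}([n,n])$ immediately. For \skname{add}, from $n \in \gamma_{\mathtt{int}}([n_1,n_2])$ and $m \in \gamma_{\mathtt{int}}([m_1,m_2])$, i.e.\ $n_1 \le n \le n_2$ and $m_1 \le m \le m_2$, monotonicity of integer addition yields $n_1 + m_1 \le n+m \le n_2+m_2$, so $\{n+m\} \subseteq \gamma_{\mathtt{int}}([n_1+m_1, n_2+m_2])$. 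The term \skname{lt} requires checking that whenever a concrete pair can produce $1$ (resp.\ $0$) the corresponding abstract interval result contains $1$ (resp.\ $0$); because the abstract arguments are arbitrary intervals, the sound abstract \skname{lt} must return a value whose concretisation covers every comparison outcome consistent with the input intervals, and soundness reduces to verifying this against the stated abstract specification. For \skname{rand}, I would check that the concrete set $\{n \mid n_1 \le n \le n_2\}$ is contained in the concretisation of the abstract interval produced from the argument intervals. The store operations \skname{read} and \skname{write} use $\gamma_{\mathtt{store}}$: for \skname{read}, $\sigma \in \gamma_{\mathtt{store}}(s^\sharp)$ gives $\sigma(x) \in \gamma_{\mathtt{int}}(s^\sharp(x))$ directly, and for \skname{write} the updated store concretises into the pointwise-updated abstract store by the definition of $\gamma_{\mathtt{store}}$.

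The filters \skname{isZero} and \skname{isNotZero} are the subtle cases and where I expect the main obstacle. Concretely $\nonspecpp{isZero}(n)$ equals $\{()\}$ when $n = 0$ and $\emptyset$ otherwise, so soundness demands that whenever the concrete input $n$ lies in $\gamma_{\mathtt{int}}$ of the abstract input \emph{and} $n$ can be zero, the abstract result's concretisation contains $()$. The abstract \skname{isZero} therefore cannot simply test $[n_1,n_2] = [0,0]$; it must succeed (and return $()$, possibly paired with a refined state) whenever $0$ lies in the input interval, since any such interval may concretise to $0$. Symmetrically, abstract \skname{isNotZero} must succeed whenever the input interval contains a nonzero value. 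The soundness argument is thus a case analysis on whether $0 \in [n_1,n_2]$, and the key point to get right is that the abstract filters are \emph{over-approximating}: they permit a branch precisely when some concretisation justifies it, so no concrete execution is lost.

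Finally, I would note that all four of \skname{ident}, \skname{lit}, \skname{int}, \skname{store} concretisations are monotonic by Lemma~\ref{lem:while-gamma-monotonic}, and that the AI-state $\abststate$ is passed through unchanged (or only refined) by these unspecified terms, so the $\abststate \aless \abststate'$ conditions and the quantification over all $\abststate$ in Definition~\ref{def:soundapprox} and Definition~\ref{def:soundapproxarrow} are satisfied uniformly. Since each term individually meets its soundness definition, the abstract instantiations are collectively sound approximations of the concrete (program-point) instantiations, establishing the lemma.
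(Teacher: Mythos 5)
The paper states Lemma~\ref{lem:while-soundapprox} without giving any proof, so there is no official argument to compare against; judged on its own, your term-by-term verification against Definitions~\ref{def:soundapprox} and~\ref{def:soundapproxarrow} is exactly the right shape of argument, and the cases you work out in detail (\skname{litToInt}, \skname{add}, \skname{read}, \skname{write}) are correct and match the abstract instantiations the paper actually displays. Your identification of the filters as the delicate point is also right: soundness of \(\nonspecabst{isZero}\) requires that the abstract call succeed with a result whose concretisation contains \(()\) whenever \(0\) lies in the argument interval, and may only ``fail'' when it does not, since Definition~\ref{def:soundapproxarrow} is vacuous when the abstract function is undefined but otherwise forces \(\{()\}\subseteq\gamma(\abststate',w^\sharp)\).

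Two caveats keep this from being a complete proof rather than a correct plan. First, the paper never specifies \(\nonspecabst{lt}\), \(\nonspecabst{rand}\), \(\nonspecabst{isZero}\), or \(\nonspecabst{isNotZero}\), so for these four cases you are verifying soundness conditions against instantiations you have to posit; you state the conditions they must satisfy but do not exhibit the functions and discharge the conditions, which is the actual content of the lemma for those terms. Second, the displayed abstract instantiations are only given on ``nice'' abstract arguments: you silently assume the abstract literal is a singleton of the flat lattice (not \(\top_{\mathtt{lit}}\)) in the \skname{litToInt} case and a concrete identifier (not \(\top_{\mathtt{ident}}\)) in the \skname{read}/\skname{write} cases, whereas Definition~\ref{def:soundapproxarrow} quantifies over all abstract arguments, so a complete proof must also cover the \(\top\) and \(\bot\) inputs (e.g.\ \(\nonspecabst{write}\) with an unknown identifier needs a weak update). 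Neither point is a wrong step --- the paper itself leaves these instantiations implicit --- but they are the places where the proof is a sketch rather than a verification.
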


\subsection{Abstract Interpretation of Skel}\label{sec:abstint-rules}

The abstract interpretation of skeletons is given on
Figure~\ref{fig:abstintskel}.
It maintains a callstack of specified
function calls which is used to prevent infinite
computations by detecting identical nested calls.
A callstack is an ordered list of frames. The set of callstacks \(\Pi\) is defined
as:
\begin{mathpar}
  \inferrule*{ }{\empstack\in\cstacks{}}
  \and
  \inferrule*
  {
    \abststate{} \text{ an AI-state}\\
    \skkw{val}~f:\tau_1\to..\to\tau_n\to\tau\skinl{ = t}\in\skelsem\\
    \notarr{\tau} \\
    v_i\in\itypeabst{}{\tau_i}\\
    \cstack\in\cstacks\\
  }
  { \cons{(f, \abststate, [v_1, .., v_n])}{\cstack}\in\cstacks}
\end{mathpar}

\begin{figure}
    \begin{mathpar}
      \inferrule*[right=Var]{E^{\sharp} (x) = v^\sharp}{\evalterm{E^\sharp, x}{v^\sharp}}
      \and
      \inferrule*[right=TermClos]
      {\skkw{val}~f: \tau_1\to..\to\tau_n\to\tau[= t]\in\mathcal{S} \\ \notarr{\tau}}
      {\evalterm{E^\sharp, f}{\set{\defclos{f}{n}}}}
      \and
      \inferrule*[right=TermSpec]
      {\skkw{val}~x: \tau= t\in\mathcal{S} \\ \evalterm{\emptyset, t}{v^\sharp}\\ \notarr{\tau}}
      {\evalterm{E^\sharp, x}{v^\sharp}}
      \and
      \inferrule*[right=TermUnspec]
      {\skkw{val}~x: \tau\in\mathcal{S} \\ \notarr{\tau}}
      {\evalterm{E^\sharp, x}{\nonspecabst{x}}}
      \and
      \inferrule*[right=Const]
      {\evalterm{E^\sharp, t}{v^{\sharp}}}
      {\evalterm{E^\sharp, \operatorname{C}{t}}{\operatorname{C}{v^\sharp}}}
      \and
      \inferrule*[right=Tuple]
      {\evalterm{E^\sharp, t_1}{v_1^\sharp} \\ .. \\ \evalterm{E^\sharp, t_n}{v_n^\sharp}}
      {\evalterm{E^\sharp, (t_1,..,t_n)}{\set{(v_1^\sharp, .., v_n^\sharp)}}}
      \and
      \inferrule*[right=Clos]
      { }
      {\evalskelabst{\cstack, E^\sharp, \lambda{} p: \tau\cdot S}{\set{(p, S, E^\sharp)}}}
      \and
      \inferrule*[right=Branch]
      {
        \evalskelabst{\cstack, \abststate, E^\sharp, S_i}{v_i^\sharp,
          \abststate_i}\\
      }
      {
        \evalskelabst{\cstack, \abststate, E^\sharp, (S_1.. S_n)}
        {\join{} v_i^\sharp, \join\abststate_{i}}
      }
      \and
      \inferrule*[right=LetIn]
      { \evalskelabst{\cstack, \abststate, E^\sharp, S_1}{v^{\sharp}, \abststate'} \\
        \addasn{\pptypes, \prog}{\extenv{\set{E^\sharp}}{p}{v^\sharp}}
        {\set{E_1^\sharp,.., E_n^\sharp}} \\
        \evalskelabst{\cstack, \abststate', E^\sharp_i, S_2}{w^\sharp_i,
          \abststate_i}}
      {\evalskelabst{\cstack, \abststate, E^\sharp, \sklet{} p = S_1 \skin{} S_2}{\join{}
          w^\sharp_i, \join{} \abststate_i}}
      \and
      \inferrule*[right=App]
      {\evalterm{E^\sharp, t_i}{v_i^\sharp} \\ \evalapp{\cstack, \abststate,
          v_0^\sharp~v_1^\sharp.. v_n^\sharp}{v^{\sharp}, \abststate'}}
      {\evalskelabst{\cstack, \abststate, E^\sharp, t_0~t_1.. t_n}{v^{\sharp}, \abststate'}}
    \end{mathpar}
    \caption{Abstract Interpretation of Skeletons and Terms}\label{fig:abstintskel}
  \end{figure}

  \begin{figure}
    \begin{mathpar}
      \inferrule*[right=Spec]
      {
        \skkw{val}~f: \tau_1\to..\to\tau_n\to\tau{} = t\in\mathcal{S} \\
	\notarr{\tau} \\
        \evalterm{\emptyset, t}{v^\sharp}\\
        \updatein{f}{\abststate, [v_1^\sharp, .., v_n^\sharp]} = \abststate_1,
        [v_1'^\sharp, .., v_n'^\sharp] \\
        (f, \abststate_1, [v_1'^\sharp, .., v_n'^\sharp])\notin\cstack\\
        \evalapp{\cons{(f, [v_1'^\sharp, .., v_n'^\sharp])}{\cstack}, \abststate_1,
          v^\sharp~v_1'^\sharp.. v_n'^\sharp}{w^\sharp, \abststate_2}\\
        \updateout{f}{\abststate_2, [v_1'^\sharp, .., v_n'^\sharp], w^\sharp} =
        \abststate_3, w'^\sharp 
      }
      { \evalapp{\cstack, \abststate, (f, n)~v_1^\sharp.. v_n^\sharp}
        {w'^\sharp, \abststate_3}
      }
      \and
      \inferrule*[right=Spec-Loop]
      {
        \skkw{val}~f: \tau_1\to..\to\tau_n\to\tau{} = t\in\mathcal{S} \\
	\notarr{\tau} \\
        \evalterm{\emptyset, t}{v^\sharp}\\
        \updatein{f}{\abststate, [v_1^\sharp, .., v_n^\sharp]} = \abststate_1,
        [v_1'^\sharp, .., v_n'^\sharp] \\
        (f, \abststate_1, [v_1'^\sharp, .., v_n'^\sharp])\in\cstack\\
      }
      { \evalapp{\cstack, \abststate, (f, n)~v_1^\sharp.. v_n^\sharp}
        {\bot, \abststate_1}
      }
      \and
      \inferrule*[right=Unspec]
      {\skkw{val}~f: \tau_1\to..\to\tau_n\to\tau\in\mathcal{S} \\
	\notarr{\tau} \\
        \nonspecabst{f} (\abststate, v_1^\sharp,.., v_n^\sharp)
        = w^\sharp, \abststate'}
      {\evalapp{\cstack, \abststate, (f, n)~v_1^\sharp.. v_n^\sharp}{w^\sharp, \abststate'}}
    \end{mathpar}
    \caption{Abstract Interpretation: Application}\label{fig:abstintapp}
\end{figure}

The abstract interpretation of skeletons is similar to the big-step
interpretation: the evaluation of terms is almost unchanged except that
evaluating a closure or a tuple returns a singleton.
When evaluating a skeleton (branch, let-binding, or application), a state of the
abstract interpretation is carried through the computations.

In the
\textsc{Branch} rule, all branches are evaluated and joined instead of only one
branch being evaluated. Pattern matching now returns set of environments
rather than a single one (explained later).
As a consequence, the
\textsc{LetIn} rule may evaluate \(S_2\) in several abstract environments.
This flexibility in the control-flow of the abstract interpreter allows us to do
control flow analysis for \lcalc{}.
The \textsc{App} rule evaluates all terms and pass a list of values to the
application relation, defined in Figure~\ref{fig:abstintapp}.

Because the abstraction of a function is a set of closures and named closures,
the \textsc{App-Set} rule evaluates each one individually.
The \textsc{Base}
rule returns the remaining value when all arguments have been processed.
The \textsc{Clos} rule evaluates the body of the function
\(S\) in all abstract environments returned by the matching of the pattern
against the argument.
The \textsc{Spec} rule evaluates the call to a specified function and
maintains invariants. Invariants depend on the analysis and the AI-state,
therefore, language-dependent update functions can be provided to
to maintain invariants before and after a call.
The update functions must respect the following monotonicity
constraints in order to ensure soundness: 
\begin{definition}\label{def:updates-sound}
  The update functions are said to be monotonic if and only if:
  \begin{align*}
    \updatein{f}{\abststate, [v_1^\sharp, .., v_k^\sharp]}
     = \abststate', [v_1'^\sharp, .., v_k'^\sharp]
      &\implies
      \abststate\aless\abststate'\,\land\,
      (v_1^\sharp, .., v_k^\sharp) \aless (v_1'^\sharp, ..,
      v_k'^\sharp)\\
    \updateout{f}{\abststate, [v_1^\sharp, .., v_k^\sharp], v^\sharp}
    = \abststate', v'^\sharp  & \implies
      \abststate\aless\abststate'\,\land\,v^\sharp \aless v'^\sharp
  \end{align*}
\end{definition}

The update functions of While are defined as:
\begin{align*}
  \updatein{\texttt{eval\_stmt}}{\abststate, [s_i^\sharp, \pp]}
  & =
    \extstate{\abststate}{(\pp, \texttt{In})}{s^\sharp},
    [s^\sharp, \pp] & s^\sharp=s_i^\sharp\nabla^\sharp\abststate(\pp, \texttt{In})\\
  \updateout{\texttt{eval\_stmt}}{\abststate, [s_i^\sharp, \pp], s_o^\sharp}
  & =
    \extstate{\abststate}{(\pp, \texttt{Out})}{s^\sharp},
    [s^\sharp, \pp] & s^\sharp=s_o^\sharp\join \abststate(\pp, \texttt{Out})
\end{align*}
The update functions maintain the AI-state which holds an input and an output
abstract store for each program point.
\(\updatein{\texttt{eval\_stmt}}{\abststate, [s_i^\sharp, \pp]}\) updates the input
abstract store at program point \(\pp\) for a greater abstract store,
obtained by widening to ensure termination (discussed later), that
contains \(s_i^\sharp\), and the call to \(\texttt{eval\_stmt}\) is done with this
new abstract store.
\(\updateout{\texttt{eval\_stmt}}{\abststate, [s_i^\sharp, \pp], s_o^\sharp}\) makes a
similar change to the AI-state for the output store.
The update functions for \(\texttt{eval\_expr}\) (not presented here) do not
change the argument, the result or the AI-state.
\begin{lemma}\label{lem:upd-while-sound}
  The update functions for While previously defined are monotonic.
\end{lemma}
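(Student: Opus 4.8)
The plan is to verify, for each of the four update functions, the two implications of Definition~\ref{def:updates-sound}: the in- and out- variants for \texttt{eval\_stmt} and for \texttt{eval\_expr}. Two of them, those associated with \texttt{eval\_expr}, leave the AI-state, the arguments, and the result untouched, so both required inequalities are mere instances of reflexivity. Reflexivity of the lifted state order $\aless$ is immediate from its definition (one needs only $\dom{\abststate}\subseteq\dom{\abststate}$ and each entry below itself) together with the fact that $\aless_{\mathtt{store}}$ is an order, established in the lemma asserting the While orders are orders; reflexivity on arguments and results follows likewise since the relevant type orders are orders (Lemma~\ref{lem:unspec-order-join}). The substance of the proof therefore lies with the two \texttt{eval\_stmt} functions.

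Before treating them I would isolate the single arithmetic fact driving the update-in case, namely that the interval widening is \emph{extensive} in both operands: $a \aless_{\mathtt{int}} a \nabla^\sharp_{\mathtt{int}} b$ and $b \aless_{\mathtt{int}} a \nabla^\sharp_{\mathtt{int}} b$ for all intervals $a,b$. Both follow by a direct case split on the two clauses defining $\nabla^\sharp_{\mathtt{int}}$: writing $a=[n_1,n_2]$ and $b=[m_1,m_2]$, the lower endpoint of the result is $n_1$ when $n_1\le m_1$ and $-\infty$ otherwise, hence in either case it is at most $\min(n_1,m_1)$; dually the upper endpoint is at least $\max(n_2,m_2)$. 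Lifting point-wise, $\nabla^\sharp_{\mathtt{store}}$ inherits the same extensivity, with the convention that an absent entry counts as $\bot$ so that enlarging the store domain is monotone.

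With extensivity in hand, the case of $\updatein{\texttt{eval\_stmt}}{\abststate, [s_i^\sharp, \pp]}$ goes through directly. Writing $s^\sharp = s_i^\sharp \nabla^\sharp \abststate(\pp,\texttt{In})$ and $\abststate' = \extstate{\abststate}{(\pp,\texttt{In})}{s^\sharp}$, the argument inequality $(s_i^\sharp,\pp)\aless(s^\sharp,\pp)$ reduces to $s_i^\sharp \aless_{\mathtt{store}} s^\sharp$ (extensivity in the left operand) and $\pp\aless\pp$ (reflexivity of the order on program points). For the state inequality $\abststate\aless\abststate'$, the two maps agree away from $(\pp,\texttt{In})$ and $\dom{\abststate}\subseteq\dom{\abststate'}$ since we only add or overwrite that one entry; at $(\pp,\texttt{In})$, if it is already present we need $\abststate(\pp,\texttt{In})\aless_{\mathtt{store}} s^\sharp$, which is extensivity in the right operand, and if it is fresh there is no prior value to dominate.

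The case of $\updateout{\texttt{eval\_stmt}}{\abststate, [s_i^\sharp, \pp], s_o^\sharp}$ is the same argument with the join replacing the widening. Here $s^\sharp = s_o^\sharp \join \abststate(\pp,\texttt{Out})$, so the result inequality $s_o^\sharp\aless_{\mathtt{store}} s^\sharp$ and the only nontrivial entry of the state inequality, $\abststate(\pp,\texttt{Out})\aless_{\mathtt{store}} s^\sharp$, are both immediate from $\join_{\mathtt{store}}$ being an upper bound, while the remaining positions and the domain inclusion are handled as before. I do not foresee a genuine obstacle; the single point that must not be glossed over is deriving the extensivity of $\nabla^\sharp$ from its explicit definition rather than assuming it, together with the bookkeeping of partial store domains when a program point is encountered for the first time.
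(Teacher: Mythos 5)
Your proof is correct, and it supplies exactly the argument the paper leaves implicit: Lemma~\ref{lem:upd-while-sound} is stated without proof, and the intended justification is precisely the one you give — reflexivity for the trivial \texttt{eval\_expr} updates, extensivity of $\nabla^\sharp_{\mathtt{int}}$ in both operands (read off from its two defining clauses) for $\updatein{\texttt{eval\_stmt}}{}$, and the upper-bound property of $\join$ for $\updateout{\texttt{eval\_stmt}}{}$, with the state inequality reduced to the single modified entry $(\pp,\mathtt{Pos})$. Your attention to the edge case where $(\pp,\mathtt{In})$ is not yet in $\dom{\abststate}$, and to the domain bookkeeping for partial stores, covers details the paper glosses over; nothing is missing.
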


The extension of environments, or pattern matching, is presented \iflong{} in
Appendix~\ref{sec:abstract-patt-match} \else{} in~\cite{longversion} \fi{} and now returns a set \(\xi\) of
abstract environments as the abstraction of
tuples is a finite set of tuples of abstract values.
We thus return one abstract environment per tuple
of abstract values in our abstract tuple.

The termination of our analysis is not formally proven. Our intuition is that an
infinite derivation is necessarily caused by an infinity of calls to a specified
function. Given a program point, widening the stores in the input update
function for \texttt{eval\_st} should ensure that the input store converges and
that the \textsc{Spec-Loop} rule of the abstract interpretation ends the
computation, as we have reached a local fixpoint for the program point.

\subsection{Correctness of the Abstract Interpretation}\label{sec:abstint-correction}
Our methodology aims at defining mathematically correct abstract interpreters
from Skeletal Semantics. In this section, we present a theorem stating that the
abstract interpreter of Skel computes a correct approximation of the big-step
semantics of Skel.

We state the following theorem of correctness that states that the abstract
interpretation of Skel computes a sound approximation of the big-step
interpretation of Skel.
\begin{theorem}\label{thm:abstint-correction}
  Let \skelsem{} be a Skeletal Semantics with unspecified terms \(Te\) and
  unspecified types \(Ty\), and let \(E\) and \(E^\sharp\) be a concrete and abstract
  environment, respectively.
  Suppose
  \begin{itemize}
  \item \(\forall x\in Te\), 
    \(\nonspecabst{x}\) is a sound approximation of \(\nonspecpp{x}\).
  \item \(\forall\tau \in Ty\), \(\gamma_\tau\) is monotonic.
  \item \(\updatein{}{}\) and \(\updateout{}{}\) are monotonic.
  \end{itemize}
  Then:
  \begin{displaymath}
    \left.
    \begin{array}{c}
      E\in\gamma_{\Gamma}(\abststate_{0}, E^\sharp)\\
      \evalskel{E, S}{v}\\
      \evalskelabst{\empstack, \abststate_0, E^\sharp,S}{v^\sharp, \abststate}\\
    \end{array}\right\}
    \implies
    v\in\gamma(\abststate, v^\sharp)
  \end{displaymath}
\end{theorem}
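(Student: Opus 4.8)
The plan is to argue by induction on the derivation of the concrete big-step judgement \(\evalskel{E, S}{v}\), which is a finite tree; I induct on the concrete rather than the abstract derivation precisely because termination of the abstract interpreter is only conjectured, whereas concrete derivations are finite by construction. The argument proceeds by case analysis on the last concrete rule, matched against the abstract rule(s) of Figures~\ref{fig:abstintskel} and~\ref{fig:abstintapp} that can yield \(\evalskelabst{\cstack, \abststate_0, E^\sharp, S}{v^\sharp, \abststate}\). Three ingredients recur: monotonicity of concretisation (Definition~\ref{def:gamma-monotonic}, Lemma~\ref{lem:gamma-monotonic}), which transports a fact \(v\in\gamma(\abststate,v^\sharp)\) along any growth of the AI-state or the abstract value; soundness of unspecified terms (Definitions~\ref{def:soundapprox} and~\ref{def:soundapproxarrow}); and an auxiliary \emph{pattern-matching soundness} lemma: if \(v\in\gamma(\abststate,v^\sharp)\) and \(E\in\gamma_\Gamma(\abststate,E^\sharp)\), then concrete matching \(\addasn{}{\extenv{E}{p}{v}}{E'}\) yields an \(E'\) lying in \(\gamma_{\Gamma'}(\abststate,E_j^\sharp)\) for at least one \(E_j^\sharp\) in the set returned by abstract matching. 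I would prove this lemma separately by induction on \(p\); its only delicate point is the \textsc{unfold} rule, where one checks that unfolding a program point \(\pp\) against \(\prog@\pp\) agrees in the concrete and abstract interpretations.

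For the term rules the AI-state is untouched and each case is immediate: \textsc{Var} follows from the definition of \(\gamma_\Gamma\), \textsc{Const} and \textsc{Tuple} from the shape of \(\gamma_{\tau_2}\) and \(\gamma_{\tau_1\times..\times\tau_n}\), the named-closure case from \(\defclos{f}{n}\in\gamma_{\tau_1\to\tau_2}(\abststate,\{\defclos{f}{n}\})\), and the unspecified case directly from Definition~\ref{def:soundapprox}. For \textsc{Branch} the concrete derivation selects a single branch \(S_i\); the induction hypothesis gives \(v\in\gamma(\abststate_i,v_i^\sharp)\), and since \(v_i^\sharp\aless\join v_j^\sharp\) and \(\abststate_i\aless\join\abststate_j\), monotonicity yields \(v\in\gamma(\join\abststate_j,\join v_j^\sharp)\). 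For \textsc{LetIn} I apply the induction hypothesis to \(S_1\) to get \(v^\sharp\) sound at the intermediate state \(\abststate'\), use the pattern-matching lemma to locate the abstract environment \(E_j^\sharp\) whose concretisation contains the concretely bound environment, apply the induction hypothesis to \(S_2\) under \(E_j^\sharp\), and close with monotonicity and the join exactly as in \textsc{Branch}.

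The interesting case is \textsc{App}. Evaluating the argument terms soundly (the term cases) gives \(v_i\in\gamma(\abststate,v_i^\sharp)\). For an unspecified head the conclusion is exactly Definition~\ref{def:soundapproxarrow} applied to these arguments. For a specified head \emph{not} on the callstack (\textsc{Spec}), I use monotonicity of \(\updatein{}{}\) (Definition~\ref{def:updates-sound}) to propagate argument soundness into the widened state \(\abststate_1\) with the enlarged arguments \(v_i'^\sharp\), apply the induction hypothesis to the body \(S_0\) of \(f\) (a strict sub-derivation) to obtain \(v\in\gamma(\abststate_2,w^\sharp)\), and finally combine monotonicity of \(\updateout{}{}\) with monotonicity of \(\gamma\) to lift this to \(v\in\gamma(\abststate_3,w'^\sharp)\).

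The hard part will be the remaining sub-case, \textsc{Spec-Loop}: there the abstract interpreter detects a frame \((f,\abststate_1,[v_i'^\sharp])\) already on the callstack and returns \(\bot\) for the value, while the concrete derivation still produces a genuine \(v\). Read node by node this breaks the invariant, since \(\gamma(\abststate_1,\bot)=\emptyset\) cannot contain \(v\); this is exactly where widening and fixpoint detection must be reconciled with soundness. I expect the resolution to require strengthening the induction hypothesis with an invariant tying the callstack to the AI-state: a frame is revisited only once the widening in \(\updatein{}{}\) has stabilised the input store at the corresponding program point, so that \(\abststate_1\) is a post-fixpoint over-approximating every concrete store reaching that call. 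Under such an invariant the \(\bot\) returned by \textsc{Spec-Loop} is harmless, because its only contribution flows into a \textsc{Branch} join at the enclosing recursive definition, where it is dominated by the companion base or loop-exit branch whose value has already been shown sound against the widened state; top-level soundness of \(v^\sharp\) is thus recovered from the non-recursive branches rather than from the cut-off call. Making this invariant precise and verifying that every rule preserves it is, I believe, the delicate heart of the proof.
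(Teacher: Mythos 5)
Your induction on the concrete big-step derivation, the three recurring ingredients (monotonicity of \(\gamma\), soundness of unspecified terms, a pattern-matching soundness lemma), and your treatment of the term, \textsc{Branch}, \textsc{LetIn}, \textsc{App}/\textsc{Unspec} and \textsc{Spec} cases all coincide with the paper's proof. The only presentational difference there is that the paper does not reason directly with \(\gamma\): it introduces an intermediate structural relation \(\concrel{\abststate}{v}{v^\sharp}\) between concrete and abstract values (closed under constructors, tuples, closures, program-point unfolding, and environments), proves separately that \(\concrel{\abststate}{v}{v^\sharp}\) implies \(v\in\gamma(\abststate,v^\sharp)\) and that the relation is monotonic in \(\abststate\) and \(v^\sharp\), and then carries \(\concrel{}{}{}\) through the induction. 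Your pattern-matching lemma is exactly the paper's Lemma~\ref{lem:asn-abst}, including the observation that the delicate point is the \textsc{unfold} rule.

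The genuine divergence --- and the genuine gap --- is \textsc{Spec-Loop}, which you correctly identify as the place where the naive invariant \(v\in\gamma(\abststate,v^\sharp)\) fails node by node, since \(\gamma(\abststate_1,\bot)=\emptyset\). Your proposed repair (strengthen the induction hypothesis with a callstack/AI-state invariant and argue that the \(\bot\) is absorbed by a companion base or loop-exit branch at the enclosing join) is not what the paper does, and as stated it does not generalise: it presumes the cut-off recursive call sits under a \textsc{Branch} with a non-recursive alternative, which is a property of the While skeleton, not of arbitrary skeletal semantics. The paper instead replaces the executable abstract semantics by a non-executable ``oracle'' variant in which each callstack frame additionally records a guessed result \((u^\sharp,\abststate')\) that the \textsc{Spec} rule constrains to equal the actual result of that pending call; \textsc{Spec-Loop} then returns \(u^\sharp,\abststate'\) rather than \(\bot\), the two semantics are asserted to be equivalent, and in the \textsc{Spec-Loop} case of the induction the abstract subderivation is swapped for one with an empty callstack and the same conclusion, reducing to the already-treated \textsc{Spec} case against the (strictly smaller) concrete subderivation. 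So the decomposition of the difficulty is different: you try to make \(\bot\) harmless where it arises; the paper arranges never to see \(\bot\) at all, at the price of an unproven equivalence between the oracle and executable semantics. Neither resolution is fully worked out --- you acknowledge yours is a conjecture, and the paper's replacement step and equivalence claim are asserted without proof --- but you should not expect your branch-domination argument to close the case in the generality the theorem is stated.
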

Therefore, to prove the soundness of the analysis, it is sufficient to prove
that the abstract instantiation of terms are sound approximation of the
concrete ones, and that the update functions and concretisation functions are
monotonic.

Let \(\sigma_0\in\itypepp{}{\skname{store}}\) and \(\sigma^\sharp_0\in\itypeabst{}{\skname{store}}\) be
the concrete and abstract stores with empty domain.
Let \(E_0=\set{s\mapsto\sigma_0, t\mapsto\underline{\epsilon}}\) and \(E^\sharp_0=\set{s\mapsto\sigma_0^\sharp, t\mapsto\underline{\epsilon}}\) be
a concrete and an abstract Skel environments. We recall that \(\underline{\epsilon}\)
is the program point of the root of \(\prog\), the analysed program.
Let \(\abststate_0\) be the empty mapping from program points and flow
tags (\texttt{In} or \texttt{Out}) to abstract stores.
\begin{lemma}\label{lem:while-env-concr}
    \(\sigma_0\in\gamma_{store}(\abststate_0, \sigma_0^\sharp)\)
\end{lemma}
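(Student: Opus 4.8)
The plan is to unfold the definition of the store concretisation function and observe that both conditions it imposes hold trivially because the stores in question have empty domains. Recall that for While the store concretisation is
\[
  \gamma_{\mathtt{store}}(\sigma^{\sharp}) = \setof{\sigma}{\dom{\sigma} = \dom{\sigma^{\sharp}} \land \forall x \in \dom{\sigma},\, \sigma(x) \in \gamma_{\mathtt{int}}(\sigma^{\sharp}(x))},
\]
so to establish \(\sigma_0 \in \gamma_{\mathtt{store}}(\abststate_0, \sigma_0^\sharp)\) it suffices to check the two conjuncts for \(\sigma = \sigma_0\) and \(\sigma^\sharp = \sigma_0^\sharp\).

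First I would verify the domain-equality condition. By definition \(\sigma_0\) is the concrete store with empty domain and \(\sigma_0^\sharp\) is the abstract store with empty domain, so \(\dom{\sigma_0} = \emptyset = \dom{\sigma_0^\sharp}\), and the first conjunct is immediate.

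Next I would discharge the pointwise condition \(\forall x \in \dom{\sigma_0},\, \sigma_0(x) \in \gamma_{\mathtt{int}}(\sigma_0^\sharp(x))\). Since \(\dom{\sigma_0} = \emptyset\), this universal quantification ranges over the empty set and is therefore vacuously true; no appeal to the behaviour of \(\gamma_{\mathtt{int}}\) is needed. Combining the two conjuncts yields \(\sigma_0 \in \gamma_{\mathtt{store}}(\abststate_0, \sigma_0^\sharp)\), as required.

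There is essentially no obstacle in this lemma: it is a base-case sanity check guaranteeing that the initial concrete store is soundly approximated by the initial abstract store, which is precisely what is needed to instantiate the hypothesis \(E \in \gamma_{\Gamma}(\abststate_0, E^\sharp)\) of Theorem~\ref{thm:abstint-correction} for the empty initial environments \(E_0\) and \(E_0^\sharp\). The only point worth noting is that the While store concretisation does not depend on its AI-state argument, so the value of \(\abststate_0\) plays no role in the argument.
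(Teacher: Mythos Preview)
Your proof is correct and matches the paper's approach: the analogous lemma in the appendix (for the \(\lambda\)-calculus case) is proved simply ``by definition of \(\gamma_{env}\) and because \(\sigma_0\) and \(\sigma_0^\sharp\) have an empty domain,'' which is exactly your argument spelled out in more detail.
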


\begin{lemma}\label{lem:skel-env-concr-while}
  Let \(\Gamma=\set{s\mapsto\sktype{store}, t\mapsto\sktype{stmt}}\),
  \(E_0\in\gamma_\Gamma(\abststate, E_0^\sharp)\).
\end{lemma}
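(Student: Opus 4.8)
The plan is to prove the membership $E_0\in\gamma_\Gamma(\abststate, E_0^\sharp)$ by directly unfolding the definition of $\gamma_\Gamma$ given in Section~\ref{sec:abst-ops}. With $\Gamma=\set{s\mapsto\sktype{store}, t\mapsto\sktype{stmt}}$, so that $\dom{\Gamma}=\set{s,t}$, membership in $\gamma_\Gamma(\abststate, E_0^\sharp)$ reduces to three obligations: first, $\envtype{E_0}{\Gamma}$; second, $\envtype{E_0^\sharp}{\Gamma}$; and third, that for every $x\in\dom{\Gamma}$ we have $E_0(x)\in\gamma_{\Gamma(x)}(\abststate, E_0^\sharp(x))$. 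I would discharge these three obligations in turn.

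For the two consistency conditions I would simply check domains and types. Both $E_0$ and $E_0^\sharp$ have domain $\set{s,t}$, matching $\dom{\Gamma}$. For $\envtype{E_0}{\Gamma}$ I must show $\sigma_0\in\itype{}{\sktype{store}}$ and $\underline{\epsilon}\in\itypepp{\prog}{\sktype{stmt}}$: the former holds as $\sigma_0$ is the empty-domain store, a partial map from identifiers to integers; the latter holds because $\prog@\underline{\epsilon}=\prog$, which is a value of type $\sktype{stmt}$, so by the definition of $\itypepp{\prog}{\cdot}$ the program point $\underline{\epsilon}$ is a value of program type $\sktype{stmt}$. The abstract condition $\envtype{E_0^\sharp}{\Gamma}$ is established symmetrically, using $\sigma_0^\sharp\in\itypeabst{}{\sktype{store}}$ and the fact that program points are the abstract values inhabiting program types.

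For the pointwise concretisation I treat the two variables separately. For $s$ the obligation is $\sigma_0\in\gamma_{\sktype{store}}(\abststate, \sigma_0^\sharp)$, which is exactly Lemma~\ref{lem:while-env-concr}; the only subtlety is that Lemma~\ref{lem:while-env-concr} is stated at $\abststate_0$ whereas here an arbitrary $\abststate$ appears, but inspection of the definition of $\gamma_{\mathtt{store}}$ shows it does not depend on the AI-state, so the two instances coincide. For $t$ the obligation is $\underline{\epsilon}\in\gamma_{\sktype{stmt}}(\abststate, \underline{\epsilon})$, which holds because program points are not abstracted: a program point concretises to itself, so $\gamma_{\sktype{stmt}}(\abststate, \pp)$ contains $\pp$ for every $\pp\in\ppoint$.

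There is no genuine difficulty here; the result is a routine unfolding. The one point requiring care is the concretisation of program types, which is not spelled out in the main text of this excerpt: I would rely on the convention (detailed in the long version) that values of program types are program points in both the concrete-with-program-points and abstract interpretations, and that their concretisation is the identity, whence the $t$-component holds by reflexivity and the $s$-component reduces cleanly to the previously established Lemma~\ref{lem:while-env-concr}.
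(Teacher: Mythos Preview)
Your proposal is correct and follows essentially the same approach as the paper: the analogous lemma for \(\lambda\)-calculus (Lemma~\ref{lem:skel-env-concr-lam}) is proved by observing that \(E_0\) and \(E_0^\sharp\) share the same domain, that \(E_0(t)\in\gamma(\abststate_0,E_0^\sharp(t))\), and that \(E_0(s)\in\gamma(\abststate_0,E_0^\sharp(s))\) by the preceding lemma, then concluding by the definition of \(\gamma_\Gamma\). Your version is simply more explicit about the typing-consistency obligations \(\envtype{E_0}{\Gamma}\) and \(\envtype{E_0^\sharp}{\Gamma}\), and about why the program-point component concretises to itself, which the paper leaves implicit.
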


The
abstract interpreter computes an abstract store that is a correct
approximation of the concrete store returned by the big-step semantics.
\begin{theorem}\label{thm:while-correct}
    \begin{equation*}
      \left.
      \begin{gathered}
      \evalskelpp{E_0, \texttt{eval\_stmt}~(s, t)}{\sigma}\\
      \evalskelabst{\empstack, \abststate_0, E^\sharp_0, \texttt{eval\_stmt}~(s, t)}
      {\sigma^\sharp, \abststate}\\
    \end{gathered}
    \right\}
    \implies \sigma\in\gamma(\abststate, \sigma^\sharp)
  \end{equation*}
\end{theorem}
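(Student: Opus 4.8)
The plan is to derive Theorem~\ref{thm:while-correct} as a direct instance of the general correctness result, Theorem~\ref{thm:abstint-correction}. The strategy is therefore not to re-run any induction on the evaluation derivation, but rather to discharge the three hypotheses of Theorem~\ref{thm:abstint-correction} for the specific While instantiation, and then feed it the concrete and abstract evaluations of \texttt{eval\_stmt}~$(s,t)$. First I would verify the three premises: that every unspecified term of While has a sound abstract approximation is exactly Lemma~\ref{lem:while-soundapprox}; that every $\gamma_{\tau_u}$ on the unspecified types is monotonic is Lemma~\ref{lem:while-gamma-monotonic}, which lifts to all types by Lemma~\ref{lem:gamma-monotonic}; and that the update functions are monotonic is Lemma~\ref{lem:upd-while-sound}. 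So all three hypotheses are available off the shelf.

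The remaining work is to match the initial environments and AI-state of Theorem~\ref{thm:while-correct} to the environment-concretisation premise $E \in \gamma_\Gamma(\abststate_0, E^\sharp)$ required by Theorem~\ref{thm:abstint-correction}. Here I would instantiate $E := E_0$, $E^\sharp := E_0^\sharp$, $\Gamma := \set{s \mapsto \sktype{store}, t \mapsto \sktype{stmt}}$, and the starting AI-state to $\abststate_0$. The needed fact $E_0 \in \gamma_\Gamma(\abststate_0, E_0^\sharp)$ is precisely Lemma~\ref{lem:skel-env-concr-while}, whose store component in turn rests on Lemma~\ref{lem:while-env-concr}. Intuitively, both environments map $s$ to the empty store (concretised correctly since the empty abstract store has the empty concrete store in its concretisation) and $t$ to the program point $\underline{\epsilon}$, which denotes $\prog$ itself and is concretised to itself since program points are not abstracted.

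With the premises in place, the core step is a single application: take $S := \texttt{eval\_stmt}~(s,t)$, use the concrete evaluation $\evalskelpp{E_0, S}{\sigma}$ and the abstract evaluation $\evalskelabst{\empstack, \abststate_0, E^\sharp_0, S}{\sigma^\sharp, \abststate}$ from the hypotheses, and conclude $\sigma \in \gamma(\abststate, \sigma^\sharp)$ directly from the conclusion of Theorem~\ref{thm:abstint-correction}. The one point that warrants care is a bookkeeping check that the concrete big-step judgement written $\evalskelpp{\cdot}{\cdot}$ (the program-point-enriched semantics of Section~\ref{sec:ppoints}) is indeed the relation $\evalskel{\cdot}{\cdot}$ that appears in Theorem~\ref{thm:abstint-correction}; since the general theorem is stated against the big-step semantics with program points, this is a matter of confirming notation rather than proving anything new.

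The main obstacle, to the extent there is one, is not mathematical depth but ensuring the type-level and environment-level instantiations line up exactly: that $\texttt{eval\_stmt}$ genuinely has the arrow type whose domain matches the tuple $(s,t)$, that $\gamma$ applied at the result type \sktype{store} coincides with $\gamma_{\mathtt{store}}$ as used on the right-hand side, and that $\abststate_0$ satisfies whatever shape the general theorem's initial state expects. All of these are routine consistency checks, so I expect the proof to be short: cite Lemmas~\ref{lem:while-soundapprox}, \ref{lem:while-gamma-monotonic}, \ref{lem:gamma-monotonic}, \ref{lem:upd-while-sound}, and~\ref{lem:skel-env-concr-while} to satisfy the hypotheses, then invoke Theorem~\ref{thm:abstint-correction} to obtain the conclusion.
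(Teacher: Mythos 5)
Your proposal is correct and matches the paper's intended argument exactly: the paper derives Theorem~\ref{thm:while-correct} by discharging the hypotheses of Theorem~\ref{thm:abstint-correction} via Lemmas~\ref{lem:while-soundapprox}, \ref{lem:while-gamma-monotonic}, \ref{lem:gamma-monotonic}, and~\ref{lem:upd-while-sound}, establishing the initial-environment premise with Lemma~\ref{lem:skel-env-concr-while} (itself resting on Lemma~\ref{lem:while-env-concr}), and then instantiating the general theorem with $S = \texttt{eval\_stmt}~(s,t)$. This is the same single-application structure the paper uses for the analogous CFA result, so there is nothing to add.
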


As an example, take
\(\prog \equiv \texttt{x := 0; while (x < 3) x := x + 1}\).
The concrete and abstract interpretations will find that
\begin{equation*}
  \begin{gathered}
    \evalskelpp{E_0, \texttt{eval\_stmt}~(s, t)}{\set{x\mapsto 3}}\\
    \evalskelabst{\empstack, \abststate_0, E^\sharp_0, \texttt{eval\_stmt}~(s, t)}
    {\set{x\mapsto[0, +\infty]}, \abststate}\\
  \end{gathered}
\end{equation*}
In accordance with Theorem~\ref{thm:while-correct}, we observe that
\(
  \set{x\mapsto3}\in\gamma(\abststate, \set{x\mapsto[0, +\infty]})
\)\\
The abstract interpreter returns an imprecise result. Currently, our method
fails to properly take into account the guards: the conditions of loops or
conditional branchings are not used to refine the abstract values. In the
previous While program, the guard of the loop is not used to get a precise
abstract store in or after the loop. The skeletal semantics of the While
language makes it unclear how to use the guards to modify the store, as it is
syntactically the same before and after the evaluation of the condition.

The precision of the analysis also depends on the skeletal semantics of the
language. An easy fix for our precision issue would be to modify the type of
\skinl{isZero} and \skinl{isNotZero} functions such that they have type
\((\texttt{store} \times \texttt{int}) \to \texttt{store}\).
The abstract instantiations of these functions could then be used to
refine the abstract stores.

\section{Related Work}\label{sec:related}

Our work is part of a large research effort to define sound analyses and build
correct abstract interpreters from semantic description of languages.
At its core, our approach is the Abstract
Interpretation~\cite{cousot1977abstract,cousot1979systematic} of a
semantic meta-language. Abstract Interpretation is a method designed by Cousot
and Cousot to define sound static analyses from a concrete
semantics. In his Marktoberdorf lectures~\cite{Cousot1998}, Cousot describes a systematic way
to derive an abstract interpretation of an imperative language from a
concrete semantics and 
mathematically proved sound. We chose to define the Abstract Interpretation of
Skel, as it is designed to mechanise semantics of languages. The benefit of
analysing a meta-language is that a large part of the work to define and prove
the correctness of the analysis is done once for every semantics mechanised with
Skel. However, it is often less precise than defining a language specific
abstract interpretation. Moreover, there have
been several papers describing methods to derive abstract interpretation from
different types of concrete
semantics~\cite{cousot1977abstract,schmidt1995natural,nielson1982denotational},
we chose to derive abstract interpreters from a big-step semantics of Skel.

Schmidt~\cite{schmidt1995natural} shows how to define an abstract
interpretation for \lcalc{} from a big-step semantics defined
co-inductively. The abstract interpretation of Skel and its correctness proof
follow the methods described in the paper.
However Skel has more complex constructs than \lcalc{}, especially
branches. Moreover, the big-step of Skel is defined inductively, thus reasoning
about non-terminating program is not possible. Also, to prove the correctness of
the abstract interpretation of Skel, we relate the big-step derivation tree to
the abstract derivation tree, similarly to Schmidt, but a key difference is that our
proof is inductive when Schmidt's proof is co-inductive.

Lim and Reps propose the TSL system~\cite{lim2013tsl}: a tool to define
machine-code instruction set and abstract interpretations. The specification of
an instruction set in TSL is compiled into a Common Intermediate Representation
(CIR). An abstract interpretation is defined on the CIR, therefore an abstract
interpreter is derivable from any instruction set description. However, the TSL
system is aimed at specifying and analysing machine code, and not languages in
general. Moreover, it is unclear how it would be possible to define analyses on
languages with more complex control-flow, like \lcalc{}.

In the paper on Skeletal semantics, Bodin \emph{et
  al.}~\cite{bodin2019skeletal} used skeletal semantics to relate concrete and
abstract interpretations in order to prove correctness. An important difference
between that work and the present is that their resulting abstract semantics is
not computable, whereas our abstract interpretation can be executed as an
analysis, as demonstrated by our implementation \cite{abstintgen}.
Moreover, our method computes an AI-state that collects information
throughout the interpretation and allows to use widening using the update
functions, rather than computing an Input/Output relation.

The idea of defining an abstract interpreter of a meta-language to define
analyses for languages has been explored, for example by Keidel, Poulsen and
Erdweg~\cite{keidel2018compositional}. They use
arrows~\cite{hughes2000generalising} as meta-language to describe
interpreters. The concrete and abstract interpreters share code
using the unified interface of arrows. By instantiating language-dependent parts
for the concrete interpretation and the abstract interpretation, they obtain two
interpreters that can be proven sound compositionally by proving that the
abstract language-dependent parts are sound approximation of the concrete
language-dependent parts, similarly to Skel. 
However, we chose to use a dedicated meta-language, Skel, as its
library~\cite{necrolib} makes defining interpreters for Skel convenient and one
objective is to use the NecroCoq tool~\cite{necrocoq} to generate mechanised
proofs that our derived abstract interpreters are correct.

\section{Conclusion}\label{sec:conclusion}

In this paper, we propose a methodology for  mechanically deriving 
correct abstract interpreters from mechanised semantics.
Our approach is based on Skeletal Semantics and its meta-language
Skel, used to write  a semantic
description of a language. It consists of two independent parts.
First, we define an abstract interpreter for Skel which is target language-agnostic and
is the core of all derived abstract interpreters from Skeletal Semantics.
The abstract interpreter of Skel is proved correct with respect to the
operational semantics of Skel. Second, for a given target language to analyse,
% the \emph{target language},  
abstractions must be defined.
The abstract domains are defined by instantiating the
unspecified types and providing comparisons and abstract unions of
abstract values. The semantics
of the language-specific parts are defined by instantiating the unspecified
terms. By combining the abstract interpreter of Skel and  the abstractions of
the target language, we derive a working abstract interpreter specialised for
the target language, obtained by meta-interpretation of Skel.
We prove a theorem which states that the abstract interpreter of the
target language is correct if the abstract instantiation of the unspecified
terms are sound approximation of the concrete instantiation of the unspecified
terms. We illustrate our method to build abstract interpreters on two examples:
a value analysis for a small imperative language, and a CFA for \lcalc{}
\iflong{}
(in Appendix).
\else{}
(in the long version~\cite{longversion}).
\fi{}

% TODO: ça confirme une hypothèse que l'on avait au début
The approach has been evaluated by an implementation of a tool~\cite{abstintgen}
to generate abstract interpreters from any skeletal semantics. It was tested on
While and \lcalc{} and resulted in executable, sound analyses
validating the feasibility of the approach. 

%% Our approach provides a framework to derive abstract interpreters from
%% Skeletal Semantics, factorising some implementation work and proofs.
%% Only language-dependent fragments of the semantics must be defined and proved
%% correct to obtain a sound executable abstract interpreter for the target
%% language. Moreover, our approach is incremental: one could modify the Skeletal
%% Semantics of \lcalc{} to add integer arithmetic and re-use all the work that has
%% already been done for a CFA and only add the abstract instantiation of the arithmetic
%% parts of the Skeletal Semantics. Our method is generic enough that it is
%% possible to derive an abstract interpreter performing value analysis for an
%% imperative language, or to compute a CFA for \lcalc{} by carefully choosing the
%% right abstraction for each language.
%% This work lead to an implementation of a tool~\cite{abstintgen}
%% to generate abstract interpreters from any skeletal semantics. It was tested on
%% While and \lcalc{} and performed sound analyses.

The current abstract interpreters that we obtain  have limitations to
their precision. Part of this imprecision stems from the fact that we
generate abstract
interpreters for any language based on an abstract interpreter for the
Skel meta-language skeletal semantics. An interesting feature of the
approach is that some precision can be gained in a generic fashion by
improving the underlying abstract interpretation of Skel. 
For example, our interval analysis for While does not refine the
abstract values when entering a part of the program guarded by a condition.
Take \skinl{If(Equal(x, 0), Skip, Assign(x, 0))}, evaluated in store where
\(\set{x\mapsto\top}\). Our abstract interpreter returns state \(\set{x\mapsto\top}\).
Indeed, the condition can be true or false thus both branches of the if
construct are evaluated but each one is computed in the store \(\set{x\mapsto\top}\)
because the condition is not used to refine the abstract values.
This issue can be addressed, e.g., by keeping a trace of the execution in order to know
if we are computing a statement guarded by a condition. Dealing with
this issue at the level of the meta-language analysis benefits all generated analyses.

\iflong
\bibliography{biblio}
\else

\fi

\iflong
\appendix

\section{Typing Rules of Skeletons and Terms}\label{app:typerules}
\begin{mathpar}
  \inferrule*[right=Var]
  {\Gamma(x) = \tau}
  {\gettype{x}{\Gamma}{\tau}}
  \\
  \and
  \inferrule*[right=TermDef]
  {\skkw{val}~x: \tau [= t]\in\mathcal S}
  {\gettype{x}{\Gamma}{\tau}}
  \and
  \inferrule*[right=Const]
  {{\gettype{t}{\Gamma}{\tau}} \\ {\operatorname{C}:(\tau, \tau')}}
  {\gettype{\operatorname{C} t}{\Gamma}{\tau'}}
  \and
  \inferrule*[right=Tuple]
  {\forall i, \gettype{t_i}{\Gamma}{\tau_i}}
  {\gettype{(t_1,..,t_n)}{\Gamma}{(\tau_1,..,\tau_n)}}
  \and
  \inferrule*[right=Fun]
  {\gettype{S}{\extenv{\Gamma}{p}{\tau}}{\tau'}}
  {\gettype{(\lambda p:\tau\rightarrow S)}{\Gamma}{\tau\rightarrow\tau'}}
  \and
  \inferrule*[right=Branch]
  {\gettype{S_1}\Gamma{\tau} \\ .. \\ \gettype{S_n}\Gamma{\tau}}
  {\gettype{(S_1.. S_n)}{\Gamma}{\tau}}
  \and
  \inferrule*[right=LetIn]
  {{\gettype{S}{\Gamma}{\tau}} \\
    {\gettype{S'}{\extenv \Gamma p\tau}{\tau'}}}
  {\gettype{\sklet p = S \skin S'}\Gamma{\tau'}}
  \and
  \inferrule*[right=App]
  {\gettype{t_0}{\Gamma}{\tau_1\rightarrow..\rightarrow\tau_n\rightarrow\tau} \\
    \forall i\:\:\gettype{t_i}\Gamma{\tau_i}}
  {\gettype{(t_0~t_1 .. t_n )}\Gamma{\tau}}
\end{mathpar}

\section{The Functions of a Skeletal Semantics \(\mathcal{S}\)}\label{app:funs}

\begin{mathpar}
  \funsof[S]{\Gamma, \letin{p}{S_1}{S_2}} = \funsof[S]{\Gamma,
    S_1}\cup\funsof[S]{\extenv{\Gamma}{p}{\tau}, S_2} \and
  \funsof[S]{\Gamma, \branch{S_1.. S_n}} = \bigcup_{i=1}^n\funsof[S]{\Gamma, S_i} \and
  \funsof[S]{\Gamma, t_0~t_1.. t_n} = \bigcup_{i=0}^n\funsof[t]{\Gamma, t_i} \and
  \funsof[t]{\Gamma, \lambda p:\tau \rightarrow S_0} = \set{\Gamma, \lambda p:\tau
    \rightarrow S_0}\cup \funsof[S]{\extenv{\Gamma}{p}{\tau}, S_0} \and
  \funsof[t]{\Gamma, (t_1, .., t_n)} = \bigcup_{i=1}^n\funsof[t]{\Gamma, t_i} \and
  \funsof[t]{\Gamma, C\, t} = \funsof[t]{\Gamma, t}\and
  \funsof[t]{\Gamma, x} = \emptyset
\end{mathpar}

The set of \(\lambda\)-abstractions in the Skeletal Semantics \(\mathcal S\) is:
\[
  \funsof{\mathcal S} \equiv \bigcup_{\skkw{val}~x: \tau = t\in\mathcal S} \funsof[t]{\varnothing,
    t}
\]

\section{Big-step Semantics of Skel}\label{sec:bigstepfull}
\begin{mathpar}
  \inferrule*[right=Var]
  {E (x) = v}
  {\evalterm{E, x}{v}}
  \and
  \inferrule*[right=TermClos]
  {\skkw{val}~f: \tau_1\to..\to\tau_n\to\tau [= t] \\ \notarr{\tau} \\ n
    \geq 1}
  {\evalterm{E, f}{\defclos{f}{n}}}
  \and
  \inferrule*[right=TermSpec]
  {\skkw{val}~x: \tau=t \in\mathcal S \\ \notarr{\tau} \\
    \evalterm{\emptyset, t}{v}}
  {\evalterm{E, x}{v}}
  \and
  \inferrule*[right=TermUnspec]
  {\skkw{val}~x: \tau \in\mathcal S \\ \notarr{\tau} \\ v\in\nonspec{x}}
  {\evalterm{E, x}{v}}
  \and
  \inferrule*[right=Const]
  {\evalterm{E, t}{v}}
  {\evalterm{E, (\operatorname{C}{t})}{\operatorname{C}{v}}}
  \and
  \inferrule*[right=Tuple]
  {\evalterm{E, t_1}{v_1} \\ .. \\ \evalterm{E, t_n}{v_n}}
  {\evalterm{E, (t_1,..,t_n)}{(v_1,..,v_n)}}
  \and
  \inferrule*[right=Clos]
  { }
  {\evalterm{E, (\lambda{p}:\tau\rightarrow S)}{(p, S, E)}}
  \and
  \inferrule*[right=LetIn]
  {
    \evalskel{E, S_1}{v} \\ \addasn{}{\extenv{E}{p}{v}}{E'} \\
    \evalskel{E', S_2}{w}
  }
  {\evalskel{E, \sklet p = S_1 \skin S_2}{w}}
  \and
  \inferrule*[right=Branch]
  {\evalskel{E, S_i}{v}}
  {\evalskel{E, \left(S_1,.., S_n\right)}{v}}
  \and
  \inferrule*[right=App]
  {\forall i \in [0..n].\:\:\evalterm{E, t_i }{v_i} \\ \evalapp{v_0~v_1~..~v_n}{w}}
  {\evalskel{E, (t_0~t_1 .. t_n)}{w}}
\end{mathpar}

\begin{mathpar}
  \inferrule*[right=Base]
  { }
  {\evalapp{v}{v}}
  \and
  \inferrule*[right=Clos]
  {
    \addasn{}{\extenv{E}{p}{v_1}}{E'}\\
    \evalskel{E', S}{v} \\ \evalapp{v~v_2.. v_n}{w}}
  {\evalapp{(p, S, E)~v_1.. v_n}{w}}
  \and
  \inferrule*[right=Spec]
  {\skkw{val}~f:\tau_1\to..\to\tau_n\to\tau = t \in\mathcal S \\
    \notarr{\tau} \\
    \evalterm{\emptyset, t}{v} \\
    \evalapp{v~v_1.. v_n}{w}}
  {\evalapp{(f, n)~v_1.. v_n}{w}}
  \and
  \inferrule*[right=Unspec]
  {\skkw{val}~f: \tau_1\to..\to\tau_n\to\tau\in\skelsem \\
    \notarr{\tau} \\
    w\in\nonspec{f}(v_1,.., v_n)}
  {\evalapp{(f, n)~v_1.. v_n}{w}}
\end{mathpar}

\section{A Control Flow Analysis for \(\lambda\)-calculus}\label{sec:cfa-analysis}

We have illustrated our skeleton-based methodology for designing
program analyses through a simple value analysis for an imperative
language.
In order to show the versatility of the approach we propose another type
of analysis, \emph{viz.}~a Control Flow
Analysis~\cite{nielson2015principles}, (CFA) for the simple \(\lambda\)-calculus.

In higher-order languages, the control flow of a program cannot be
obtained directly from the program syntax alone. A considerable number
of Control Flow Analyses have been developed~\cite{midtgaard2012control},
\cite{shivers1988control} which goal is to over-approximate
the control flow of a given program.

In this section we show how to derive a CFA for \(\lambda\)-calculus
from its skeletal semantics, using the abstract interpretation of Skel
defined in the previous section. 

\subsection{\(\lambda\)-calculus and CFA}\label{sec:lambdacalculus}

The \(\lambda\)-calculus uses two syntactic categories:
\[
  \begin{array}{lcl}
    x\in \texttt{var}
    \quad\quad \mathrm{and}\quad\quad
    t\in \texttt{lterm} & \Coloneqq & x \quad|\quad \lambda x.t \quad|\quad t~t  
  \end{array}
\]
For a given \(\lambda\)-term \(t\), we suppose that we can add program points to all
sub-terms, to uniquely identify each one. A term \(t_1~t_2\) becomes
\((t_1^{\pp\cdot1}~t_2^{\pp\cdot2})^{\pp}\).

The semantics of \(\lambda\)-calculus operates with environments that
map variables to values. In the pure \(\lambda\)-calculus there is
only one kind of values: closures.  A closure is of the form  \((x, t,
\sigma)\) and represents a function where 
\(x\) is the variable to be bound, \(t\) the body of the function and
\(\sigma\) the environment mapping free variables of \(t\) to closures.

We propose a skeletal semantics to mechanise the semantics of \lcalc{} with
environment:
\begin{multicols}{2}
\inputminted[lastline=23]{sk}{lambda.sk}
\end{multicols}
The skeletal semantics contains four types: the identifiers (the type of
variables), environments, closures and \lterms{}. Only the type of \lterms{}
is specified. There are four unspecified functions to: extend an
environment by adding a new binding, get the closure associated to an identifier
in an environment, create a closure from a triplet and convert a closure to a
triplet. Finally, the specified \skname{eval} function evaluates a \lterm{} in
a given environment.

To derive a Natural Semantics for \lcalc{} with environment, we start by
instantiating the unspecified types and terms:
\begin{align*}
  & \itypepp{}{\sktype{ident}} = \mathcal{X} &
    \itypepp{}{\sktype{env}} = \itypepp{}{\sktype{ident}}\hookrightarrow\itypepp{}{\sktype{clos}}\\
  & \itypepp{}{\sktype{clos}} =
    \itypepp{}{\sktype{ident}}\times\itypepp{}{\sktype{lterm}}\times\itypepp{}{\sktype{env}}\\
\end{align*}
We define the set of identifiers to be some countable set. A closure is a
triplet composed of an identifier, a \lterm{} and an environment. An environment
is a partial function with a finite domain from identifiers to closures.

It remains to instantiate the unspecified values:
\begin{align*}
  & \nonspecpp{extEnv}(\sigma, x, v) = \extstate{\sigma}{x}{v}
  & \nonspecpp{mkClos}(x, t, \sigma) = (x, t, \sigma)\\
  & \nonspecpp{getEnv}(x, \sigma) = \sigma(x)
  & \nonspecpp{getClos}(x, t, \sigma) = (x, t, \sigma)\\
\end{align*}
By combining these instantiations with the Big-step Semantics of Skel of
Section~\ref{sec:bigstep}, we obtain a Natural Semantics for \lcalc{}.

Our goal is to define a CFA that is correct regarding to the previously defined
natural semantics of \lcalc{}.
A CFA should approximate the control flow of a \lterm{}. Therefore, for
each sub-term, we want an approximation of the closures it may evaluate to
especially at call-sites.
% TODO: Pourquoi on a besoin de two maps ? on améliore le discours: on s'est calé sur 0cfa
% parler de C AVANT LE RHO
This entails that we want to compute a mapping which maps a sub-term, or
equivalently a program point, to an abstraction of the closures it may evaluate
to. In 0-CFA~\cite{nielson2015principles}, the abstraction of closures is a set
of pairs \((x, t)\) representing \(\lambda\)-abstraction \(\lambda x.t\) that are sub-terms
of the analysed program. We call \(C\) the mapping from program points to
abstractions of closures:
\[
  C: \ppoint \to \powerset{\texttt{var}\times\texttt{lterm}}
\]
However, the body of these \(\lambda\)-abstractions may contain free variables, thus
we need some abstraction of environments for these \(\lambda\)-abstractions. We choose
to have one abstraction of environments per program points that we call \(\rho\):
\[
  \rho: \ppoint \to \texttt{var} \to \powerset{\texttt{var}\times\texttt{lterm}}
\]

We give an example of what the analysis can compute on a \(\lambda\)-term. We define
\(\id_x\equiv\lambda x.x\).
Let \(t\equiv (\lambda f.(f~\id_x)^{\pp_x}~(f~\id_y)^{\pp_y})~(\lambda g.g^{\pp_g})\), we
omitted some program points for simplicity.

\[
  \begin{array}{|c|cccc|}
    \hline
    & \pp_x
    & \pp_y
    & \pp_g
    & \epsilon\\
    \hline
    \rho(\cdot)(g)
    & \set{}
    & \set{}
    & \set{\lambda x.x, \lambda y.y}
    & \set{}\\
    \hline
    C(\cdot)
    & \set{\lambda x.x, \lambda y.y}
    & \set{\lambda x.x, \lambda y.y}
    & \set{\lambda x.x, \lambda y.y}
    & \set{\lambda x.x, \lambda y.y}\\
    \hline
  \end{array}
\]
In particular, \(C(\epsilon)=\set{\lambda x.x, \lambda y.y}\). This is
an imprecision which comes from when \(g\) is bound to some value. In \(\rho(\pp_g)(g)\),
\(g\) can be bound to \(\lambda x.x\) (in call site \(\pp_x\)), or \(\lambda y.y\) (in call
site \(\pp_y\)). This imprecision propagates until the end of the
analysis. However, the analysis is sound: for each program point \(\pp\),
\(\rho(\pp)\) is a super-set of the environments that can appear at
that program point, and \(C(\pp)\) is a super-set of the
\(\lambda\)-abstraction that can appear at that program point.

\section{Abstract Pattern Matching}
\label{sec:abstract-patt-match}

\begin{mathpar}
    \inferrule*[right=wildcard]
    { }
    {\addasn{\pptypes, \prog}{\extenvabst{\xi}{\_}{v}}{\xi}}
    \and
    \inferrule*[right=var]
    { }
    {\addasn{\pptypes, \prog}{\extenvabst{\xi}{x}{v}}{\setof{(x,
          v^\sharp)::E^\sharp}{E^{\sharp} \in\xi}}}
    \and
    \inferrule*[right=constr]
    {\addasn{\pptypes, \prog}{\extenvabst{\xi}{p}{v}}{\xi'}}
    {\addasn{\pptypes, \prog}{\extenvabst{\xi}{C\,p}{C\,v}}{\xi'}}
    \and
    \inferrule*[right=tuple-singleton]
    { \addasn{\pptypes, \prog}{\extenvabst{\xi}{p_1}{v_1}}{\xi_1} \\ .. \\
      \addasn{\pptypes, \prog}{\extenvabst{\xi_{n-1}}{p_n}{v_n}}{\xi_n}}
    {\addasn{\pptypes, \prog}{\extenvabst{\xi}{(p_1,.., p_n)}{(v_1, .., v_n)}}{\xi_n}}
    \and
    \inferrule*[right=tuple]
    {
      (v_1, .., v_n)\in{} t \\
      \addasn{\pptypes, \prog}{\extenvabst{\xi}{(p_1, .., p_n)}{(v_1, .., v_n)}}
      {\xi_{v_1,.., v_n}}
    }
    {
      \addasn{\pptypes, \prog}{\extenvabst{\xi}{(p_1,.., p_n)}{t}}
      {\bigcup_{(v_1, .., v_n)\in{} t}\xi_{(v_1,.., v_n)}}
    }
    \and
    \inferrule*[right=unfold]
    {\prog@\pp{} = C\,(v'_1, .., v'_n) \\ C: (\tau_1\times..\times\tau_n, \tau) \\
      v_j= \ite{\tau_j\in\pptypes}{\pp\cdot j}{v_j'} \\
      \addasn{\pptypes, \prog}{\extenvabst{\xi}{p}{\set{(v_1, .., v_n)}}}{\xi'}}
    {\addasn{\pptypes, \prog}{\extenvabst{\xi}{C\,p}{\pp}}{\xi'}}
  \end{mathpar}

\section{Abstract Interpretation of Skel, application rules}
\label{sec:abstract-applications}

\begin{mathpar}
      \inferrule*[right=App-Set]
      {v^{\sharp}_0 = \bigcup_{i=1}^n \set{w_i} \\
        \evalapp{\cstack, \abststate, w_i~v_1^\sharp..
          v_n^\sharp}{v_{w_i}^\sharp, \abststate_i}}
      {\evalapp{\cstack, \abststate, v_0^\sharp~v_1^\sharp.. v_n^\sharp}
        {\join{} v_{w_i}^\sharp, \join\abststate_i}}
      \and
      \inferrule*[right=Base]
      { }
      {\cstack, \abststate, \evalapp{v^{\sharp}}{v^{\sharp}, \abststate}}
      \and
      \inferrule*[right=Clos]
      {
        \addasn{\pptypes, \prog}{\extenvabst{\set{E^\sharp}}{p}{v_1^\sharp}}{\set{E_1^\sharp, .., E_m^\sharp}}\\
        \forall{} E^\sharp_i\in\set{E_1^\sharp, .., E_m^\sharp}\\
        \evalskel{\cstack, \abststate, E^\sharp_i, S}{w_i^\sharp, \abststate_i} \\
        \evalapp{\cstack, \abststate_i, w_i^\sharp~v_2^\sharp.. v_n^\sharp}{u_i^\sharp, \abststate'_i}}
      {\evalapp{\cstack, \abststate, (p, S, E^\sharp)~v_1^\sharp..
          v_n^\sharp}{\join{} u_i^\sharp, \join\abststate'_i}}
      \and
      \inferrule*[right=Spec]
      {
        \skkw{val}~f: \tau_1\to..\to\tau_n\to\tau{} = t\in\mathcal{S} \\
	\notarr{\tau} \\
        \evalterm{\emptyset, t}{v^\sharp}\\
        \updatein{f}{\abststate, [v_1^\sharp, .., v_n^\sharp]} = \abststate_1,
        [v_1'^\sharp, .., v_n'^\sharp] \\
        (f, \abststate_1, [v_1'^\sharp, .., v_n'^\sharp])\notin\cstack\\
        \evalapp{\cons{(f, [v_1'^\sharp, .., v_n'^\sharp])}{\cstack}, \abststate_1,
          v^\sharp~v_1'^\sharp.. v_n'^\sharp}{w^\sharp, \abststate_2}\\
        \updateout{f}{\abststate_2, [v_1'^\sharp, .., v_n'^\sharp], w^\sharp} =
        w'^\sharp, \abststate_3
      }
      { \evalapp{\cstack, \abststate, (f, n)~v_1^\sharp.. v_n^\sharp}
        {w'^\sharp, \abststate_3}
      }
      \and
      \inferrule*[right=Spec-Loop]
      {
        \skkw{val}~f: \tau_1\to..\to\tau_n\to\tau{} = t\in\mathcal{S} \\
	\notarr{\tau} \\
        \evalterm{\emptyset, t}{v^\sharp}\\
        \updatein{f}{\abststate, [v_1^\sharp, .., v_n^\sharp]} = \abststate_1,
        [v_1'^\sharp, .., v_n'^\sharp] \\
        (f, \abststate_1, [v_1'^\sharp, .., v_n'^\sharp])\in\cstack\\
      }
      { \evalapp{\cstack, \abststate, (f, n)~v_1^\sharp.. v_n^\sharp}
        {\bot, \abststate_1}
      }
      \and
      \inferrule*[right=Unspec]
      {\skkw{val}~f: \tau_1\to..\to\tau_n\to\tau\in\mathcal{S} \\
	\notarr{\tau} \\
        \nonspecabst{f} (\abststate, v_1^\sharp,.., v_n^\sharp)
        = w^\sharp, \abststate'}
      {\evalapp{\cstack, \abststate, (f, n)~v_1^\sharp.. v_n^\sharp}{w^\sharp, \abststate'}}
    \end{mathpar}

\subsection{A CFA for \lcalc{} using Skeletal Semantics}\label{sec:cfa-lcalc-skelsem}

To define a CFA for \(\lambda\)-calculus from its Skeletal Semantics, we first define
the abstractions of our analysis by instantiating the unspecified
types and the AI-state.

Our first abstraction is to define the set of program types:
\(\pptypes=\set{\sktype{lterm}}\). As a consequence, all \lterms{} will be
replaced by program points in the abstract interpretation. Therefore, the
abstractions of \lterms{} are relative to the program being analysed, which we
call \prog{}.

Then, the identifiers are abstracted by a flat lattice. We set
\(\itypeabst{}{\sktype{ident}} = \mathcal{X}\cup\set{\bot, \top}\) with \(\mathcal{X}\) some countable set.
The analysis should compute an abstraction of the closures
each sub-term can evaluate to. 
An abstract closure \(c^\sharp\) is a set of program points, such that
\(\forall\pp\in c^\sharp\), \(\prog@\pp=Lam(x, t)\): each program point in the abstract
closure maps to a \(\lambda\)-abstraction in the analysed program.
\[
  \itypeabst{}{\sktype{clos}} =
  \setof{c^\sharp}{c^\sharp\in\parts{\ppoint}\,\land\,\pp\in c^\sharp\implies \prog@\pp=Lam(x, t)}
\]
The greatest element of \(\itypeabst{}{\sktype{clos}}\) is the set
\(\setof{\pp\in\ppoint}{\prog@\pp=Lam(x, t)}\), which is the set of
all the program points that maps to a \(\lambda\)-abstraction in \prog{}. The least
element is the empty set.

An abstract environment is a partial function with finite domain from
identifiers to abstract closures.
\[
  \itypeabst{}{\skname{env}} =
  \abstnonbot{\sktype{ident}}\hookrightarrow\abstnonbot{\sktype{clos}}\cup\set{\top}
\]
The least element of this set is the function with an empty domain, the greatest
element is \(\top\).

Our definition of an abstract closure is insufficient: indeed each program
point maps to a \(\lambda\)-abstraction \(Lam(x, t)\), but it lacks an environment to
interpret the free variables of \(t\). Therefore, we define an AI-state that
is a mapping from program points to abstract environments.
\begin{align*}
  & \abststate:
    \ppoint\to\itypeabst{}{\sktype{env}}
\end{align*}
Therefore, for each \(\pp\in c^\sharp\), where \(c^\sharp\) is an abstract closure, the
associated abstract environment is \(\abststate(\pp)\).

The lattice for \(\itypeabst{}{\skname{ident}}\) is the flat lattice. The
lattice for \(\itypeabst{}{\sktype{clos}}\) is the set lattice.
\(\itypeabst{}{\sktype{env}}\) is the lattice obtained by point-wise extension
of the previous lattices.

The concretisation of the unspecified types are defined as:
\begin{align*}
  &
    \gamma_{\sktype{env}}(\abststate, \sigma^\sharp) =
    \setof{\sigma\in\itypepp{}{\sktype{env}}}
    {\dom{\sigma}=\dom{\sigma^\sharp}\,\land\,\forall x\in\dom{\sigma},
    \sigma(x)\in\gamma_{\sktype{clos}}(\abststate, \sigma^\sharp(x))}\\
  &
    \gamma_{\sktype{ident}}(\abststate, i) = \set{i}\qquad
    \gamma_{\sktype{ident}}(\abststate, \bot) = \set{}\qquad
    \gamma_{\sktype{ident}}(\abststate, \top) = \itypepp{}{\sktype{ident}}
  \\
  &
    \gamma_{\sktype{clos}}(\abststate, c^\sharp) =
    \setof{(x, \pp\cdot1, \sigma)\in\itypepp{}{\sktype{clos}}}
    {\pp\in c^\sharp\,\land\,\prog@\pp=Lam(x,\_)\,\land\,\sigma\in
    \gamma_{\sktype{env}}(\abststate, \abststate(\pp))}
\end{align*}

\begin{lemma}\label{lem:lam-gamma-monotonic}
  The concretisation functions of the unspecified types are monotonic in both
  the arguments.
\end{lemma}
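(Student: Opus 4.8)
The plan is to unfold Definition~\ref{def:gamma-monotonic} separately for each of the three unspecified types \sktype{ident}, \sktype{clos}, and \sktype{env}, disposing of the identifier case directly and handling the two genuinely recursive cases by a single simultaneous induction. The identifier case is immediate: $\gamma_{\sktype{ident}}$ ignores its AI-state argument, and on the flat lattice $i_1 \aless i_2$ forces either $i_1 = \bot$ (so $\gamma_{\sktype{ident}}(\abststate_1, i_1) = \emptyset$), or $i_1 = i_2$, or $i_2 = \top$ (so $\gamma_{\sktype{ident}}(\abststate_2, i_2) = \itypepp{}{\sktype{ident}}$ contains everything). In every case the required inclusion holds.

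The difficulty is that $\gamma_{\sktype{clos}}$ and $\gamma_{\sktype{env}}$ are defined by mutual recursion, and worse, $\gamma_{\sktype{clos}}(\abststate, c^\sharp)$ re-enters $\gamma_{\sktype{env}}$ at the argument $\abststate(\pp)$, which is not smaller in any abstract ordering. So the recursion is well-founded neither on types nor on abstract values. The observation that unblocks the proof is that it \emph{is} well-founded on the \emph{concrete} value being tracked: a concrete closure $(x, t, \sigma)$ strictly contains the environment $\sigma$, and a concrete environment $\sigma$ strictly contains each closure $\sigma(x)$. I would therefore prove both inclusions simultaneously by well-founded induction on the structural size of the concrete witness $w$, showing for all $\abststate_1 \aless \abststate_2$ and all $c_1^\sharp \aless c_2^\sharp$ (resp. $\sigma_1^\sharp \aless \sigma_2^\sharp$) that $w \in \gamma_{\sktype{clos}}(\abststate_1, c_1^\sharp)$ implies $w \in \gamma_{\sktype{clos}}(\abststate_2, c_2^\sharp)$ (resp. the analogous statement for \sktype{env}).

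In the closure case, take $w = (x, \pp\cdot1, \sigma) \in \gamma_{\sktype{clos}}(\abststate_1, c_1^\sharp)$, so $\pp \in c_1^\sharp$, $\prog@\pp = Lam(x, \_)$, and $\sigma \in \gamma_{\sktype{env}}(\abststate_1, \abststate_1(\pp))$. Since the order on \sktype{clos} is set inclusion, $c_1^\sharp \aless c_2^\sharp$ gives $\pp \in c_2^\sharp$; since $\abststate_1 \aless \abststate_2$ is point-wise on program points, $\abststate_1(\pp) \aless \abststate_2(\pp)$. As $\sigma$ is a strict sub-component of $w$, the induction hypothesis for \sktype{env} applies and yields $\sigma \in \gamma_{\sktype{env}}(\abststate_2, \abststate_2(\pp))$, whence $w \in \gamma_{\sktype{clos}}(\abststate_2, c_2^\sharp)$. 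In the environment case, take $w = \sigma \in \gamma_{\sktype{env}}(\abststate_1, \sigma_1^\sharp)$. If $\sigma_2^\sharp = \top$ the result is immediate since $\gamma_{\sktype{env}}(\abststate_2, \top) = \itypepp{}{\sktype{env}}$. Otherwise I rely on the order on $\itypeabst{}{\sktype{env}}$ forcing $\dom{\sigma_1^\sharp} = \dom{\sigma_2^\sharp}$ together with $\sigma_1^\sharp(x) \aless \sigma_2^\sharp(x)$ point-wise; combined with $\dom{\sigma} = \dom{\sigma_1^\sharp}$ this preserves the domain clause, and the induction hypothesis for \sktype{clos}, applied to each strictly smaller $\sigma(x)$, gives $\sigma(x) \in \gamma_{\sktype{clos}}(\abststate_2, \sigma_2^\sharp(x))$, so $\sigma \in \gamma_{\sktype{env}}(\abststate_2, \sigma_2^\sharp)$.

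The step deserving the most care, and the main obstacle, is justifying that the induction is legitimate despite the apparent circularity of the two concretisations: the argument rests entirely on choosing the structural size of the concrete witness, rather than the type or the abstract value, as the induction measure, which is exactly what decreases on both recursive calls. A secondary point needing attention is the domain handling in the environment case, where I must check that the order on $\itypeabst{}{\sktype{env}}$ indeed keeps domains equal below $\top$, so that the domain-equality clause of $\gamma_{\sktype{env}}$ transports from $\sigma_1^\sharp$ to $\sigma_2^\sharp$; the $\top$ element has to be split off as a separate, trivial case precisely because it is the only way a strictly larger abstract environment may have a different domain.
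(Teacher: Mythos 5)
Your proof is correct, but note that the paper itself states Lemma~\ref{lem:lam-gamma-monotonic} without any proof, so there is no official argument to compare against. Your argument supplies exactly the missing justification, and it identifies the one genuinely non-trivial point: the mutual recursion between \(\gamma_{\sktype{clos}}\) and \(\gamma_{\sktype{env}}\) is not well-founded on types or on abstract values, because \(\gamma_{\sktype{clos}}\) re-enters \(\gamma_{\sktype{env}}\) at \(\abststate(\pp)\), which need not decrease in any abstract order. Switching the induction measure to the structural size of the concrete witness is the right move: concrete closures and environments are finitely generated (environments have finite domain), so the witness strictly decreases on both recursive calls, and each case then reduces to unfolding the definitions plus the point-wise reading of \(\aless\) on AI-states and abstract environments. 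Your two caveats are also the right ones to flag. The order on \(\itypeabst{}{\sktype{env}}\) is only described as ``point-wise extension,'' and the domain-equality clause of \(\gamma_{\sktype{env}}\) makes monotonicity fail if comparable abstract environments below \(\top\) could have strictly growing domains, so the equal-domain reading you adopt (with \(\top\) split off) is not merely convenient but necessary for the lemma to hold; similarly you implicitly need \(\pp\in\dom{\abststate_2}\) with \(\abststate_1(\pp)\aless\abststate_2(\pp)\) whenever \(\pp\in\dom{\abststate_1}\), which is the natural analogue of the order the paper gives for While AI-states. With those readings fixed, the proof goes through.
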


It remains to specify the abstract definitions of the unspecified terms:
\begin{align*}
  \nonspecabst{extEnv}(\abststate, \sigma^\sharp, x, v^\sharp) & =
  \abststate,
  \extstate{\sigma^\sharp}{x}{v^\sharp}\\
  \nonspecabst{getEnv}(\abststate, x, \sigma^\sharp) & = \abststate, \sigma^\sharp(x)\\
  \nonspecabst{mkClos}(\abststate, x, \pp\cdot 1, \sigma^\sharp) & =
  \abststate[\pp\leftarrow \sigma^\sharp\join \abststate(\pp)],
  \set{\pp}\\
  \nonspecabst{getClos}(\abststate, c^\sharp) & =
  \abststate,
  \setof{(x, \pp\cdot1, \sigma^\sharp)}
  {\pp\in c^\sharp\,\land\,\prog@\pp = Lam(x, \_)\,\land\,\abststate(\pp) = \sigma^\sharp}
\end{align*}

The \skname{mkClos} function is defined only if
the second argument, the \lterm{}, is a program
point of the form \(\pp\cdot1\), because it is the ``code'' of a \(\lambda\)-abstraction,
which is the second argument of the \(Lam\) constructor.
Furthermore, the abstraction of the environment at program point \(\pp\) in the
AI-state must be updated as the new abstraction should ``contain'' \(\sigma^\sharp\).

The \skname{getClos} function returns a set of triplets from an abstract
closure. For each program point \(\pp\) of the abstract closure, it
corresponds to a \(\lambda\)-abstraction of the main program: \(\prog@\pp=Lam(x,
t)\). The program point of \(t\) is \(\pp\cdot1\) and the associated abstract
environment is, by definition, \(\abststate(\pp)\).

\begin{lemma}
  \label{lem:cfa-soundapprox}
  For all unspecified values \(x\) of \lcalc{}, \(\nonspecabst{x}\) is a sound
  approximation of \(\nonspecpp{x}\).
\end{lemma}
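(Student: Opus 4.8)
The plan is to verify Definition~\ref{def:soundapproxarrow} separately for each of the four unspecified terms \skname{extEnv}, \skname{getEnv}, \skname{mkClos}, and \skname{getClos}, in each case unfolding the abstract and concrete instantiations and matching the returned concrete value against the concretisation of the returned abstract value. Concretely, I fix concrete arguments, abstract arguments, and an AI-state $\abststate$ under the hypothesis that each concrete argument lies in the concretisation of its abstract counterpart, and I must exhibit the returned concrete value inside $\gamma$ of the returned abstract value and AI-state.

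For \skname{getEnv} and \skname{extEnv} the AI-state is returned unchanged, so these are the easy cases. Since identifiers form a flat lattice, the hypothesis $x \in \gamma_{\sktype{ident}}(\abststate, x^\sharp)$ forces $x^\sharp = x$ (identifier arguments drawn from the program being concrete, so the $\top$ case does not arise), and the claim reduces to reading off the definition of $\gamma_{\sktype{env}}$: for \skname{getEnv} I use that $\sigma \in \gamma_{\sktype{env}}(\abststate, \sigma^\sharp)$ gives $\dom{\sigma}=\dom{\sigma^\sharp}$ and $\sigma(x)\in\gamma_{\sktype{clos}}(\abststate, \sigma^\sharp(x))$, which is exactly $\sigma^\sharp(x^\sharp)$; for \skname{extEnv} I check that extending both environments at $x$ preserves the domain equality and the pointwise membership. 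For \skname{getClos} the AI-state is again unchanged: from $(x, \pp\cdot 1, \sigma) \in \gamma_{\sktype{clos}}(\abststate, c^\sharp)$ I read off $\pp \in c^\sharp$, $\prog@\pp = Lam(x, \_)$, and $\sigma \in \gamma_{\sktype{env}}(\abststate, \abststate(\pp))$, which are precisely the conditions making the triple $(x, \pp\cdot 1, \abststate(\pp))$ a member of the abstract result set; unfolding the tuple concretisation ($\gamma_{\sktype{ident}}(\abststate, x)=\set{x}$, the concretisation of a \sktype{lterm} program point being that same program point, and the environment component by hypothesis) then places $(x, \pp\cdot 1, \sigma)$ in $\gamma$ of the returned tuple set.

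The only term that modifies the AI-state, and hence the main obstacle, is \skname{mkClos}. Here the concrete call produces the closure $(x, \pp\cdot 1, \sigma)$ while the abstract call returns the abstract closure $\set{\pp}$ together with the updated AI-state $\abststate' = \abststate[\pp \leftarrow \sigma^\sharp \join \abststate(\pp)]$. By the definition of $\gamma_{\sktype{clos}}$ I must show $\sigma \in \gamma_{\sktype{env}}(\abststate', \abststate'(\pp))$, that is $\sigma \in \gamma_{\sktype{env}}(\abststate', \sigma^\sharp \join \abststate(\pp))$, whereas the hypothesis only supplies $\sigma \in \gamma_{\sktype{env}}(\abststate, \sigma^\sharp)$. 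The two discrepancies are bridged by monotonicity in both arguments: $\sigma^\sharp \aless \sigma^\sharp \join \abststate(\pp)$ because the join is an upper bound (Lemma~\ref{lem:unspec-order-join}), and $\abststate \aless \abststate'$ because $\abststate'$ only raises the value stored at $\pp$ and agrees with $\abststate$ elsewhere; Lemma~\ref{lem:lam-gamma-monotonic} then yields $\gamma_{\sktype{env}}(\abststate, \sigma^\sharp) \subseteq \gamma_{\sktype{env}}(\abststate', \sigma^\sharp \join \abststate(\pp))$, which contains $\sigma$. One supporting fact must be made explicit before this goes through: whenever the skeletal evaluator invokes \skname{mkClos}, its \sktype{lterm} argument is the body of a genuine $\lambda$-abstraction of \prog{}, hence has the form $\pp\cdot 1$ with $\prog@\pp = Lam(x, \_)$. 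This well-formedness of program points is what lets me discharge the $\prog@\pp = Lam(x, \_)$ requirement of $\gamma_{\sktype{clos}}$, and it follows from the construction of program points together with the hypothesis $t \in \gamma_{\sktype{lterm}}(\abststate, \pp\cdot 1) = \set{\pp \cdot 1}$.
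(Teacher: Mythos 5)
Your proof is correct. Note that the paper states Lemma~\ref{lem:cfa-soundapprox} without any proof at all, so there is no official argument to compare against; your case analysis over the four unspecified terms is the natural way to discharge it and is consistent with the prose remarks the paper does make (in particular, that \skname{mkClos} is only defined when its \sktype{lterm} argument has the form \(\pp\cdot 1\)). You correctly isolate \skname{mkClos} as the only delicate case and supply exactly the two monotonicity facts needed, namely \(\sigma^\sharp \aless \sigma^\sharp \join \abststate(\pp)\) and \(\abststate \aless \abststate[\pp\leftarrow\sigma^\sharp\join\abststate(\pp)]\), which together with Lemma~\ref{lem:lam-gamma-monotonic} transport the hypothesis \(\sigma\in\gamma_{\sktype{env}}(\abststate,\sigma^\sharp)\) to the updated state and the joined abstract environment, as required by \(\gamma_{\sktype{clos}}\). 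One point worth flagging: Definition~\ref{def:soundapproxarrow} quantifies over \emph{all} arguments satisfying the concretisation hypotheses, whereas your discharge of the requirement \(\prog@\pp = Lam(x,\_)\) (with the same binder \(x\) as the first argument) appeals to how the evaluator happens to call \skname{mkClos}; strictly, this step rests on reading \(\nonspecabst{mkClos}\) as a partial function whose domain is restricted to such well-formed argument tuples, which is how the paper itself presents it, so your treatment is at the same level of rigour as the source.
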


There remains to define the update functions. 
The \(\updatein{\skname{eval}}{}\) function updates the AI-state such that the
input environment is included in the abstract environment in the AI-state
corresponding to the program point of the \lterm{}.
The \(\updateout{}{}\) does nothing.
\begin{align*}
  \updatein{\skname{eval}}{\abststate, (\sigma^\sharp, \pp_t)}
  & = \extstate{\abststate}{\pp_t}{\sigma'^\sharp}, (\sigma'^\sharp, \pp_t )
  & \mbox{ with } \sigma'^\sharp = \abststate(\pp_t)\join \sigma^\sharp \\
  \updateout{\skname{eval}}{\abststate, [(\sigma^\sharp, \pp_t)], c^\sharp}
  & = \abststate, c^\sharp\\
\end{align*}
Throughout the abstract interpretation, \(\abststate(\pp)\) only grows, ensuring
termination of the analysis: the lattice of abstract environments is bounded in
height.
\begin{lemma}\label{lem:cfa-update}
  The update functions are monotonic in both arguments, in
  the sense of definition~\ref{def:updates-sound}.
\end{lemma}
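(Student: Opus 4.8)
The statement is an instance of Definition~\ref{def:updates-sound}, so I must verify its two implications for $\skname{eval}$. The output update is immediate: since $\updateout{\skname{eval}}{\abststate, [(\sigma^\sharp, \pp_t)], c^\sharp} = \abststate, c^\sharp$ returns both the AI-state and the result unchanged, reflexivity of $\aless$ gives $\abststate \aless \abststate$ and $c^\sharp \aless c^\sharp$. All the work therefore lies in the input update.

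For the input update, recall that the AI-state of the CFA is a map from program points to abstract environments, ordered pointwise exactly as in the While case: $\abststate_1 \aless \abststate_2$ iff $\dom{\abststate_1} \subseteq \dom{\abststate_2}$ and $\abststate_1(\pp) \aless_{\sktype{env}} \abststate_2(\pp)$ for every $\pp \in \dom{\abststate_1}$. Write $\abststate' = \extstate{\abststate}{\pp_t}{\sigma'^\sharp}$ with $\sigma'^\sharp = \abststate(\pp_t) \join \sigma^\sharp$ for the resulting AI-state. Its domain is $\dom{\abststate} \cup \set{\pp_t} \supseteq \dom{\abststate}$, so the domain condition holds. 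For the pointwise condition, take $\pp \in \dom{\abststate}$: if $\pp \neq \pp_t$ then $\abststate'(\pp) = \abststate(\pp)$ and the claim follows by reflexivity; if $\pp = \pp_t$ then $\abststate'(\pp_t) = \sigma'^\sharp = \abststate(\pp_t) \join \sigma^\sharp$, and since $\join$ yields an upper bound we get $\abststate(\pp_t) \aless_{\sktype{env}} \sigma'^\sharp$. (When $\pp_t \notin \dom{\abststate}$, the point $\pp_t$ contributes nothing to the pointwise check and $\abststate(\pp_t)$ is read as $\bot$, so $\sigma'^\sharp = \sigma^\sharp$ and the argument is unaffected.) Hence $\abststate \aless \abststate'$.

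It remains to compare the argument lists componentwise. The input update returns $[\sigma'^\sharp, \pp_t]$, so I must show $\sigma^\sharp \aless_{\sktype{env}} \sigma'^\sharp$ and $\pp_t \aless_{\sktype{lterm}} \pp_t$. The first again follows from $\sigma'^\sharp = \abststate(\pp_t) \join \sigma^\sharp$ being an upper bound of $\sigma^\sharp$; the second is reflexivity of the order on the program type $\sktype{lterm}$. This establishes both implications of Definition~\ref{def:updates-sound}.

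The only genuinely delicate point is the bookkeeping of partial domains: making precise how $\abststate(\pp_t)$ is read when $\pp_t$ is not yet in $\dom{\abststate}$, and checking that overwriting the map at $\pp_t$ never shrinks the order. Both reduce to the two facts used throughout: reflexivity of each $\aless$ and the upper-bound property of $\join_{\sktype{env}}$, which holds by its definition as a pointwise lifting (and more generally for all types by Lemma~\ref{lem:unspec-order-join}, whose hypotheses are satisfied by the CFA lattices since they are obtained by pointwise extension of the flat and set lattices).
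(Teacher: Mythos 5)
Your proof is correct. The paper states Lemma~\ref{lem:cfa-update} without any proof, so there is nothing to compare against; your argument is the routine verification one would expect: the output update is the identity (reflexivity), and the input update only ever joins onto existing entries, so both the pointwise order on AI-states and the componentwise order on the argument list follow from the upper-bound property of $\join$. Your attention to the case $\pp_t \notin \dom{\abststate}$ is a detail the paper's informal definition glosses over, and handling it via $\abststate(\pp_t) = \bot$ is the right reading.
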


We show how to compute a CFA and prove that our analysis is correct.
Let \(\sigma_0\in\itypepp{}{\skname{env}}\) and \(\sigma^\sharp_0\in\itypeabst{}{\skname{env}}\) be
the concrete and abstract environments with empty domain.
Let \(E_0=\set{s\mapsto\sigma_0, t\mapsto\epsilon}\) and \(E^\sharp_0=\set{s\mapsto\sigma_0^\sharp, t\mapsto\epsilon}\) be
a concrete an abstract Skel environments.
Let \(\abststate_0\) be the empty mapping from program points
to abstract environments.
\begin{lemma}\label{lem:lam-env-concr}
  \begin{equation*}
    \sigma_0\in\gamma_{env}(\abststate_0, \sigma_0^\sharp)
  \end{equation*}
\end{lemma}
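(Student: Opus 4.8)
The plan is to prove this by directly unfolding the definition of \(\gamma_{\sktype{env}}\) and observing that the empty-domain hypotheses discharge both of its conjuncts immediately. Recall from the definition in Section~\ref{sec:cfa-lcalc-skelsem} that
\[
  \gamma_{\sktype{env}}(\abststate, \sigma^\sharp) =
  \setof{\sigma\in\itypepp{}{\sktype{env}}}
  {\dom{\sigma}=\dom{\sigma^\sharp}\,\land\,\forall x\in\dom{\sigma},\,
  \sigma(x)\in\gamma_{\sktype{clos}}(\abststate, \sigma^\sharp(x))}.
\]
So to establish \(\sigma_0\in\gamma_{\sktype{env}}(\abststate_0,\sigma_0^\sharp)\), I must check the two membership conditions for the pair \((\sigma_0,\sigma_0^\sharp)\) under \(\abststate_0\).

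First I would handle the domain-equality conjunct. By hypothesis \(\sigma_0\) and \(\sigma_0^\sharp\) are, respectively, the concrete and abstract environments with empty domain, so \(\dom{\sigma_0}=\emptyset=\dom{\sigma_0^\sharp}\), and the equation \(\dom{\sigma_0}=\dom{\sigma_0^\sharp}\) holds. Next I would dispatch the pointwise conjunct: it is a universally quantified statement ranging over \(x\in\dom{\sigma_0}\), and since \(\dom{\sigma_0}=\emptyset\) there are no such \(x\), so the condition \(\forall x\in\dom{\sigma_0},\,\sigma_0(x)\in\gamma_{\sktype{clos}}(\abststate_0,\sigma_0^\sharp(x))\) is vacuously true. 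Both conjuncts holding, \(\sigma_0\) lies in the defining set, which is exactly the claim.

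This mirrors precisely the argument used for Lemma~\ref{lem:while-env-concr} in the While development, only instantiated at the \(\sktype{env}\) type rather than \(\sktype{store}\); no reasoning about \(\gamma_{\sktype{clos}}\) or about the specific value of \(\abststate_0\) is ever triggered. I do not expect any genuine obstacle here: the lemma is a base-case consistency check whose only role is to seed the concretisation relation so that the correctness theorem can be applied from the initial configuration, and its entire content is the vacuity of the pointwise condition on an empty domain.
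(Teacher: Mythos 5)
Your proof is correct and is exactly the paper's argument, merely spelled out: the paper's own proof reads ``by definition of \(\gamma_{env}\) and because \(\sigma_0\) and \(\sigma_0^\sharp\) have an empty domain,'' which is precisely your two-conjunct check with the pointwise condition holding vacuously. No discrepancy to report.
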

\begin{proof}
  By definition of \(\gamma_{env}\) and because
  \(\sigma_0\) and \(\sigma_0^\sharp\) have an empty domain.
\end{proof}
\begin{lemma}\label{lem:skel-env-concr-lam}
  Let \(\Gamma=\set{s\mapsto\sktype{env}, t\mapsto\sktype{lterm}}\).
  \begin{gather*}
    E_0\in\gamma_\Gamma(\abststate, E_0^\sharp)
  \end{gather*}
\end{lemma}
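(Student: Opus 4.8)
The plan is to unfold the definition of $\gamma_\Gamma$ and check the conditions it imposes on $E_0$ and $E_0^\sharp$. Recall that
$$\gamma_{\Gamma}(\abststate, E^\sharp) = \setof{E}{\envtype{E}{\Gamma} \,\land\, \envtype{E^\sharp}{\Gamma} \,\land\, \forall x \in \dom{\Gamma},\ E(x) \in \gamma_{\Gamma(x)}(\abststate, E^\sharp(x))}.$$
With $\Gamma = \set{s \mapsto \sktype{env}, t \mapsto \sktype{lterm}}$, I therefore have to establish three things: (i) $\envtype{E_0}{\Gamma}$, (ii) $\envtype{E_0^\sharp}{\Gamma}$, and (iii) for each $x \in \set{s, t}$, that $E_0(x) \in \gamma_{\Gamma(x)}(\abststate, E_0^\sharp(x))$. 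This is the $\lambda$-calculus counterpart of Lemma~\ref{lem:skel-env-concr-while}, so I expect the argument to be a direct verification against the definitions.

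For (i) and (ii), observe that $\dom{E_0} = \dom{E_0^\sharp} = \set{s, t} = \dom{\Gamma}$ and that the bindings are well-typed: $\sigma_0 \in \itypepp{}{\sktype{env}}$ and $\sigma_0^\sharp \in \itypeabst{}{\sktype{env}}$ are the empty-domain concrete resp.\ abstract environments, while $\epsilon \in \itypepp{}{\sktype{lterm}}$ and $\epsilon \in \itypeabst{}{\sktype{lterm}}$ since $\epsilon$ is the program point of the root of $\prog$ and $\sktype{lterm} \in \pptypes$, so program values are carried as program points in both worlds. For (iii) with $x = s$, I need $\sigma_0 \in \gamma_{\sktype{env}}(\abststate, \sigma_0^\sharp)$, which is exactly Lemma~\ref{lem:lam-env-concr}. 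For $x = t$, I need $\epsilon \in \gamma_{\sktype{lterm}}(\abststate, \epsilon)$; since $\sktype{lterm}$ is a program type, program values are represented by the very same program points in the concrete and the abstract interpretations, so $\gamma_{\sktype{lterm}}$ acts as the identity on program points (sending $\pp$ to $\set{\pp}$), and $\epsilon \in \set{\epsilon}$ holds trivially.

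There is no genuine obstacle here: the statement reduces to reading off the definition of $\gamma_\Gamma$. The only point deserving a line of care is that Lemma~\ref{lem:lam-env-concr} is phrased with $\abststate_0$, whereas here I invoke it for an arbitrary $\abststate$; this is justified because its proof uses only that $\sigma_0$ and $\sigma_0^\sharp$ have empty domain, which makes the defining condition of $\gamma_{\sktype{env}}$ vacuous independently of the AI-state.
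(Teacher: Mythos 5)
Your proof is correct and follows essentially the same route as the paper's: unfold the definition of \(\gamma_\Gamma\), check that the domains agree, handle \(s\) via Lemma~\ref{lem:lam-env-concr}, and observe that the \(t\) binding is the same program point \(\epsilon\) on both sides. You are in fact slightly more careful than the paper, which simply asserts \(E_0(t)\in\gamma(\abststate_0,E_0^\sharp(t))\) without comment and does not address the passage from \(\abststate_0\) to an arbitrary \(\abststate\); both of your added justifications are sound.
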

\begin{proof}
  \(E_0\) and \(E^\sharp_0\) have the same domain,
  \(E_0(t)\in \gamma(\abststate_0, E_0^\sharp(t))\),
  and \(E_0(s)\in \gamma(\abststate_0, E_0^\sharp(s))\) by Lemma~\ref{lem:lam-env-concr}.
  Therefore, the Lemma is true by definition of \(\gamma_\Gamma\).
\end{proof}

\(\epsilon\) is the program point of the root of the analysed program, \prog{}. The
abstract interpreter computes an abstract closure that is a correct
approximation of the concrete closure returned by the big-step semantics.
\begin{theorem}\label{thm:cfa-correct}
    \begin{equation*}
      \begin{gathered}
      \evalskelpp{E_0, \skname{eval}~s~t}{(x, \pp, \sigma)}\\
      \evalskelabst{\empstack, \abststate_0, E^\sharp_0, \skname{eval}~s~t}{c^\sharp, \abststate}\\
    \end{gathered}
    \implies (x, \pp, \sigma)\in\gamma(\abststate, c^\sharp)
  \end{equation*}
\end{theorem}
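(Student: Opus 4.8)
The plan is to obtain Theorem~\ref{thm:cfa-correct} as a direct corollary of the general meta-level correctness result, Theorem~\ref{thm:abstint-correction}, specialised to the skeletal semantics of \lcalc{} developed in this section. Concretely, I would instantiate that theorem with the skeleton \(S \equiv \skname{eval}~s~t\), the concrete environment \(E_0\), the abstract environment \(E_0^\sharp\), the initial AI-state \(\abststate_0\), the concrete result \(v \equiv (x, \pp, \sigma)\) and the abstract result \(v^\sharp \equiv c^\sharp\). The two premises of Theorem~\ref{thm:cfa-correct}, namely \(\evalskelpp{E_0, \skname{eval}~s~t}{(x, \pp, \sigma)}\) and \(\evalskelabst{\empstack, \abststate_0, E^\sharp_0, \skname{eval}~s~t}{c^\sharp, \abststate}\), are exactly the two evaluation hypotheses required by the general theorem. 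Hence all the genuine work reduces to discharging the three side conditions of Theorem~\ref{thm:abstint-correction} together with the initial environment-consistency premise.

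First I would verify the three side conditions. The requirement that \(\nonspecabst{x}\) be a sound approximation of \(\nonspecpp{x}\) for every unspecified term \(x \in Te\) is precisely Lemma~\ref{lem:cfa-soundapprox}, which covers the four unspecified terms \skname{extEnv}, \skname{getEnv}, \skname{mkClos} and \skname{getClos}. Monotonicity of the concretisation functions for all types follows in two steps: Lemma~\ref{lem:lam-gamma-monotonic} gives monotonicity for the unspecified types \sktype{ident}, \sktype{env} and \sktype{clos}, and Lemma~\ref{lem:gamma-monotonic} then lifts it to every type. Monotonicity of the update functions \(\updatein{}{}\) and \(\updateout{}{}\) is Lemma~\ref{lem:cfa-update}. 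Finally, the premise \(E_0 \in \gamma_\Gamma(\abststate_0, E_0^\sharp)\) for \(\Gamma = \set{s\mapsto\sktype{env}, t\mapsto\sktype{lterm}}\) is supplied directly by Lemma~\ref{lem:skel-env-concr-lam}.

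With all premises in place, Theorem~\ref{thm:abstint-correction} yields \((x, \pp, \sigma) \in \gamma(\abststate, c^\sharp)\), which is exactly the desired conclusion. The argument is therefore a genuine corollary: it carries out no new induction of its own, since all the inductive reasoning over skeletons and the callstack-based treatment of specified-function calls has already been performed once and for all in the proof of the meta-theorem. This is the whole point of the methodology, and the proof here mirrors that of Theorem~\ref{thm:while-correct} for While.

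The part demanding the most care is not the final assembly but the verification of Lemma~\ref{lem:cfa-soundapprox}, and specifically the soundness of \skname{mkClos} and \skname{getClos}, since these are the terms that manipulate program points and write into and read from the AI-state \(\abststate\). For \skname{mkClos} one must check that updating \(\abststate(\pp)\) to \(\sigma^\sharp \join \abststate(\pp)\) while returning the singleton \(\set{\pp}\) over-approximates the concrete closure \((x, \pp\cdot1, \sigma)\) under \(\gamma_{\sktype{clos}}\); for \skname{getClos} one must check that the returned set of triplets \((x, \pp\cdot1, \sigma^\sharp)\) with \(\abststate(\pp) = \sigma^\sharp\) concretises to a superset of the concrete triplet extracted from any closure in \(\gamma_{\sktype{clos}}(\abststate, c^\sharp)\). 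These are exactly the places where the program-point encoding of \lterms{} and the definition of \(\gamma_{\sktype{clos}}\) interact, and where an error would most plausibly hide; once they are settled, the theorem follows immediately from the meta-theorem.
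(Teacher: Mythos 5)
Your proof is correct and follows exactly the paper's own argument: the theorem is obtained by instantiating Theorem~\ref{thm:abstint-correction} with \(S = \skname{eval}~s~t\), discharging its hypotheses via Lemmas~\ref{lem:cfa-soundapprox}, \ref{lem:lam-gamma-monotonic}, \ref{lem:cfa-update}, and~\ref{lem:skel-env-concr-lam}. Your additional remarks on where the real work lies (the soundness of \skname{mkClos} and \skname{getClos}) go slightly beyond what the paper writes but are consistent with it.
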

\begin{proof}
  The proof uses Theorem~\ref{thm:abstint-correction}.
  The concretisation functions are monotonic
  (Lemma~\ref{lem:lam-gamma-monotonic}).
  The abstract instantiations of the unspecified terms are sound approximation of
  the concrete instantiations of the unspecified terms
  (Lemma~\ref{lem:cfa-soundapprox}).
  The update functions are monotonic
  (Lemma~\ref{lem:cfa-update}).
  Furthermore, \(E_0\in \gamma_\Gamma(\abststate_0, E_0^\sharp)\) (Lemma~\ref{lem:skel-env-concr-lam}).
  Therefore, Theorem~\ref{thm:abstint-correction} applies, and by instantiating
  it with \(S=\skname{eval}~s~t\), we obtain the desired result.
\end{proof}

\subsection{Example}\label{sec:cfaexample}

We set
\(\prog\equiv (\lambda f.(f~\id_x)^{\pp_x}~(f~\id_y)^{\pp_y})~(\lambda g.g^{\pp_g})\),
the example of Section~\ref{sec:lambdacalculus}.
Let \(E_0^\sharp=\set{s\mapsto\sigma_0^\sharp, t\mapsto\epsilon}\) where \(\sigma_0^\sharp\) is the abstract environment
with empty domain. Let \(\abststate_0\) be the empty AI-state.
Then, one can show that:
\(\evalskelabst{\abststate_0, E_0^\sharp, \skinl{eval}~s~t}
{\set{\pp_x, \pp_y},\abststate}\).
The result of the computation is an abstract closure \(\set{\pp_x, \pp_y}\). As
said previously, a program point \(\pp\) can be interpreted as an abstraction of
the closures \((x, t, \sigma)\), such that \(\prog@\pp=Lam(x, t)\) and
\(\sigma\in\gamma(\abststate, \abststate(\pp))\).

Let \(E_0=\set{s\mapsto\sigma_0, t\mapsto\epsilon}\) where \(\sigma_0\) is the concrete environment with
empty domain.
Then one can show that \(\evalskelpp{E_0, \skname{eval}~s~t}{(y, y, \set{})}\)
and by applying Theorem~\ref{thm:cfa-correct}, it is true that:
\[
  (y, y, \set{})\in\gamma(\abststate, \set{pp_x, \pp_y})
\]

The abstract derivation shows the result of the abstract interpretation of the
main program. In the derivation tree of the abstract interpretation, all
sub-terms of the main program are analysed. Therefore, by inspecting the
derivation tree, one can find an abstract closure for each program point.

A CFA by abstract interpretation of Skel is possible because the abstract
pattern matching defined in Section~\ref{sec:abstint-rules} returns sets of
environments. \skname{getClos} returns a set of tuples of variable, \lterm{} and
environment. There is one computation done per tuple, rather than one
computation that mixes all the closures.

Building from this work, we have implemented an Abstract Interpreter
Generator.%~\cite{abstintgen}.
By providing a skeletal semantics, an instantiation
of unspecified types and terms, comparisons and unions for the unspecified
types, an abstract interpreter is generated. It has been tested on the While
language to perform interval analysis, and on \lcalc{} to compute a CFA.

\section{Proof of Correctness of the Abstract Interpretation of Skel}\label{sec:proofai}

Let \skelsem{} be a skeletal semantics.

\subsection{A Relation Between Concrete and Abstract Values}\label{sec:relation}

Let \(v\in\itypepp{}{\tau}, v^\sharp\in\itypeabst{}{\tau}\) and \(\abststate\) a
state of the abstract interpretation. We write
\(\concrel{\abststate}{v}{v^\sharp}\), if \(v\) and \(v^\sharp\) have an AST of
the same shape, and that a leave in \(v\) of unspecified type \(\tau_u\) is in
the concretisation of the corresponding leave in \(v^\sharp\), in abstract state
\(\abststate\).

\begin{definition}
  Let \abststate{} be a state of the abstract interpretation.
  \begin{mathpar}
    \inferrule*[right=Const]{
      \concrel{\abststate}{v}{v^\sharp}}
    {\concrel{\abststate}{C~v}{C~v^\sharp}}
    \and
    \inferrule*[right=PP-Concr]{
      \concrel{\abststate}{\prog@\pp}{v^\sharp}}
    {
      \concrel{\abststate}{\pp}{v^\sharp}}
    \inferrule*[right=PP-Abst] {
      \concrel{\abststate}{v}{\unfold{\pptypes, \prog, C, \pp}}
    }
    {
      \concrel{\abststate}{v}{\pp}
    }
    \and
    \inferrule*[right=Tuple]{
      \exists (v_1^\sharp, .., v_n^\sharp)\in v^\sharp,\quad
      \forall 1\leq i \leq n,\quad
      \concrel{\abststate}{v_i}{v_i^\sharp}}
    {\concrel{\abststate}{(v_1, .., v_n)}{v^\sharp}}
    \and
    \inferrule*[right=NClos]{
      (f, n)\in v^\sharp}
    {\concrel{\abststate}{(f, n)}{v^\sharp}}
    \and
    \inferrule*[right=Clos]{
      \exists (p, S, E^\sharp)\in v^\sharp,\quad
      \concrel{\abststate}{E}{E^\sharp}
    }
    {\concrel{\abststate}{(p, S, E)}{v^\sharp}}
    \and
    \inferrule*[right=Unspec]{
      v\in \itypepp{}{\tau} \\ \tau\,\mbox{unspecified} \\
      v\in\gamma(\abststate, v^\sharp)}
    {\concrel{\abststate}{v}{v^\sharp}}
    \and
    \inferrule*[right=Env]{
      \envtype{E}{\Gamma} \\ \envtype{E^\sharp}{\Gamma} \\
      \forall x\in\dom{E},\,\concrel{\abststate}{E(x)}{E^\sharp(x)}
    }
    {\concrel{\abststate}{E}{E^\sharp}}
  \end{mathpar}

  We write \(\concrel{\abststate}{v}{v^\sharp}\) when there exists a finite tree
  \(\pi\) such that \(\inferrule*{\pi}{\concrel{\abststate}{v}{v^\sharp}}\)
\end{definition}

\begin{lemma}
  Let \(v\in\itypepp{}{\tau}\) and \(v^\sharp\in\itypeabst{}{\tau}\)
  \[
    \concrel{\abststate}{v}{v^\sharp}\implies v\in\gamma(\abststate, v^\sharp)
  \]
\end{lemma}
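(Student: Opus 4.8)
The plan is to proceed by induction on the finite derivation tree $\pi$ witnessing $\concrel{\abststate}{v}{v^\sharp}$, with a case analysis on the last rule applied. Because the \textsc{Clos} and \textsc{Env} rules recurse through the relation on \emph{environments}, I would first strengthen the statement to range simultaneously over values and environments: namely, if $\concrel{\abststate}{E}{E^\sharp}$ is derivable then $E\in\gamma_\Gamma(\abststate, E^\sharp)$. The two statements are then established by a single induction on $\pi$, which is well-founded since every premise of every rule is witnessed by a strictly smaller subtree (in particular \textsc{PP-Concr} shrinks the concrete side from $\pp$ to $\prog@\pp$, and \textsc{PP-Abst} shrinks the abstract side from $\pp$ to its unfolding).

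The bulk of the cases are discharged by unfolding the matching clause of the definition of $\gamma$ and invoking the induction hypothesis. For \textsc{Unspec} there is nothing to do: the premise $v\in\gamma(\abststate, v^\sharp)$ is literally the goal, so this is the base case for leaves of unspecified type. For \textsc{Const}, the hypothesis gives $v_0\in\gamma_{\tau_1}(\abststate, v_0^\sharp)$, and the constructor clause of $\gamma$ yields $C\,v_0\in\gamma_{\tau_2}(\abststate, C\,v_0^\sharp)$. For \textsc{Tuple}, the hypothesis gives $v_i\in\gamma_{\tau_i}(\abststate, v_i^\sharp)$ for the tuple $(v_1^\sharp,..,v_n^\sharp)\in v^\sharp$ selected by the rule, so $(v_1,..,v_n)$ lies in the component $\gamma_{\tau_1}(\abststate,v_1^\sharp)\times..\times\gamma_{\tau_n}(\abststate,v_n^\sharp)$ of the union defining $\gamma$ on tuple types. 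For \textsc{NClos}, the side condition $\defclos{f}{n}\in v^\sharp$ places $\defclos{f}{n}$ directly in the named-closure part of $\gamma$ on arrow types. For \textsc{Clos} and \textsc{Env}, the induction hypothesis in its environment form gives $E\in\gamma_\Gamma(\abststate, E^\sharp)$, which is precisely what the anonymous-closure clause of $\gamma$ and the defining equation of $\gamma_\Gamma$ demand.

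The genuinely delicate cases are \textsc{PP-Concr} and \textsc{PP-Abst}, which mediate between program points and the subterms they denote. For \textsc{PP-Concr} the concrete value is a program point $\pp$ and the induction hypothesis gives $\prog@\pp\in\gamma(\abststate, v^\sharp)$, from which I must conclude $\pp\in\gamma(\abststate, v^\sharp)$. For \textsc{PP-Abst} the abstract value is a program point $\pp$ and the induction hypothesis gives $v\in\gamma(\abststate, \unfold{\pptypes,\prog,C,\pp})$, from which I must conclude $v\in\gamma(\abststate, \pp)$. Both steps reduce to a single coherence property of $\gamma$ on program types: concretising a program point agrees with concretising the subterm $\prog@\pp$ it denotes on the concrete side, and with concretising its one-step unfolding $\unfold{\pptypes,\prog,C,\pp}$ on the abstract side. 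I would establish these two equalities directly from the definition of $\gamma$ on program types (the interpretation referenced in the long version), after which each case closes immediately.

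I expect this program-point coherence to be the main obstacle, since it is the only place where the proof cannot simply read off a clause of $\gamma$: it requires relating the path-indexed representation $\pp$ to the explicit subterm $\prog@\pp$ and to the single-step unfolding, and checking that $\gamma$ is invariant under this change of representation. Every remaining case is purely mechanical once the statement has been strengthened to cover environments.
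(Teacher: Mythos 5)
Your proposal is correct and follows essentially the same route as the paper: a single induction on the (finite) derivation tree of the relation, with one case per rule, each discharged by unfolding the corresponding clause of $\gamma$ and invoking the induction hypothesis (the paper likewise treats the \textsc{Env} case inside the same induction, so your explicit strengthening to environments matches what it does implicitly). You are in fact somewhat more careful than the paper on the program-point cases: the paper collapses \textsc{PP-Concr} and \textsc{PP-Abst} into a single case discharged by an appeal to ``the definition of $\gamma$'', which is precisely the coherence property you isolate and propose to verify explicitly.
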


\begin{proof}
  Suppose \(v\in\itypepp{}{\tau}\), \(v^\sharp\in\itypeabst{}{\tau}\),
  \abststate{} is a state of the abstract interpretation.
  Let \(\pi\) be a derivation tree with conclusion
  \concrel{\abststate}{v}{v^\sharp}.\\
  To prove the theorem, we proceed by induction on \(\pi\)
  \begin{itemize}
  \item The conclusion rule of the tree of \concrel{\abststate}{v}{v^\sharp} is
    \textsc{Const}.
    Therefore, there is a constructor \(C\) such that
    \(v=C~w\), \(v^\sharp=C~w^\sharp\), and \concrel{\abststate}{w}{w^\sharp}.
    Using the induction hypothesis, we get \(w\in\gamma(\abststate, w^\sharp)\).
    And, by the definition of \(\gamma\), one can conclude that
    \(C~w\in\gamma(\abststate, C~w^\sharp)\).

  \item The conclusion rule of the tree of \concrel{\abststate}{v}{v^\sharp} is
    \textsc{PP}.
    Therefore, \(v^\sharp = \pp\), and \concrel{\abststate}{v}{\prog@\pp}.
    Using the induction hypothesis, we get \(v\in\gamma(\abststate, \prog@\pp)\).
    And, by the definition of \(\gamma\), one can conclude that
    \(v\in\gamma(\abststate, \pp)\).

  \item The conclusion rule of the tree of \concrel{\abststate}{v}{v^\sharp} is
    \textsc{Tuple}.
    Therefore, \(v=(v_1,.., v_n)\) and \(\exists (v_1^\sharp, ..,
    v_n^\sharp)\in v^\sharp\), such that
    \(\forall 1\leq i\leq n,\quad \concrel{\abststate}{v_i}{v_i^\sharp}\). Using
    the induction hypothesis, we get
    \(\forall 1\leq i\leq n, \quad v_i\in\gamma(\abststate, v_i^\sharp)\). By
    the definition of \(\gamma\),
    \((v_1, .., v_n)\in\gamma(\abststate, v^\sharp)\)

  \item The conclusion rule of the tree of \concrel{\abststate}{v}{v^\sharp} is
    \textsc{NClos}.
    Therefore \(v=\defclos{f}{n}\) and \(\defclos{f}{n}\in v^\sharp\). By
    definition of \(\gamma\), it comes that
    \(\defclos{f}{n}\in\gamma(\abststate, v^\sharp)\)

  \item The conclusion rule of the tree of \concrel{\abststate}{v}{v^\sharp} is
    \textsc{Clos}.
    Therefore \(v=(p, S, E)\) and \(\exists (p, S, E^\sharp)\in v^\sharp\) such
    that \concrel{\abststate}{E}{E^\sharp}. By definition of \(\gamma\), it
    comes that \((p, S, E)\in\gamma(\abststate, v^\sharp)\).

  \item The conclusion rule of the tree of \concrel{\abststate}{v}{v^\sharp} is
    \textsc{Unspec}.
    Therefore \(v\in\gamma(\abststate, v^\sharp)\).

  \item The conclusion rule of the tree of \concrel{\abststate}{v}{v^\sharp} is
    \textsc{Env}.
    Therefore \(\exists\Gamma\) a typing environment such that
    \(\envtype{E}{\Gamma}\) and \(\envtype{E^\sharp}{\Gamma}\), and
    \(\forall x\in\dom{E},\,\concrel{\abststate}{E}{E^\sharp}\). Using the
    induction hypothesis,
    \(\forall x\in\dom{E},\, E(x)\in\gamma(\abststate, E^\sharp(x))\).
    By the definition of \(\gamma\), it comes that
    \(E\in\gamma_{env}(\abststate, E^\sharp)\).
  \end{itemize}

  This concludes the proof.
\end{proof}

\begin{lemma}\label{lem:concrel-monotonic}
  \(\concrel{}{}{}\) is monotonic in the abstract state and second argument:
  \[
    \left.
    \begin{array}{c}
      \concrel{\abststate}{v}{v^\sharp}\\
      v^\sharp\aless v'^\sharp\\
      \abststate\aless \abststate'\\
    \end{array}
    \right\}
    \implies
    \concrel{\abststate'}{v}{v'^\sharp}
  \]
\end{lemma}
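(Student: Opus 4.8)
The plan is to prove the statement by structural induction on the derivation tree $\pi$ witnessing $\concrel{\abststate}{v}{v^\sharp}$, carrying the two hypotheses $v^\sharp \aless v'^\sharp$ and $\abststate \aless \abststate'$ through every case. In each case I invert the final rule of $\pi$, and then invert the order $\aless$ (from Section~\ref{sec:abst-ops}) applied to $v^\sharp \aless v'^\sharp$ to read off the admissible shapes of $v'^\sharp$; from these I rebuild a derivation of $\concrel{\abststate'}{v}{v'^\sharp}$ by applying the induction hypotheses to the immediate subderivations. The key external ingredient is monotonicity of $\gamma$ in both arguments (Lemma~\ref{lem:gamma-monotonic}, in the sense of Definition~\ref{def:gamma-monotonic}), which is what makes the base cases close.

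The leaf cases are the easiest. For \textsc{Unspec} the goal reduces to $v \in \gamma(\abststate', v'^\sharp)$, which is precisely monotonicity of $\gamma$ applied to $v \in \gamma(\abststate, v^\sharp)$ using $\abststate \aless \abststate'$ and $v^\sharp \aless v'^\sharp$; then \textsc{Unspec} re-applies. For \textsc{NClos}, from $(f,n) \in v^\sharp$ the functional clause of $\aless$ forces every named closure of $v^\sharp$ to occur in $v'^\sharp$, so $(f,n) \in v'^\sharp$ and \textsc{NClos} re-applies (the state is irrelevant). For \textsc{PP-Concr}, where the concrete value $v$ is a program point $\pp$, the program point sits on the concrete side, so I simply apply the induction hypothesis to the subderivation $\concrel{\abststate}{\prog@\pp}{v^\sharp}$ with the unchanged $v^\sharp \aless v'^\sharp$ and $\abststate \aless \abststate'$, and re-apply \textsc{PP-Concr}.

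The structural cases follow the same recipe, each driven by the matching clause of $\aless$. For \textsc{Const}, with $v = C\,w$ and $v^\sharp = C\,w^\sharp$, inverting $C\,w^\sharp \aless v'^\sharp$ gives either $v'^\sharp = C\,w'^\sharp$ with $w^\sharp \aless w'^\sharp$ (the induction hypothesis yields $\concrel{\abststate'}{w}{w'^\sharp}$ and \textsc{Const} concludes) or $v'^\sharp = \top_{\tau_a}$ (deferred below). For \textsc{Tuple}, the witness $(v_1^\sharp,..,v_n^\sharp) \in v^\sharp$ used in $\pi$ is, by the tuple order, dominated componentwise by some $(w_1^\sharp,..,w_n^\sharp) \in v'^\sharp$; applying the induction hypothesis to each component and re-applying \textsc{Tuple} with this new witness closes the case. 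For \textsc{Clos} and \textsc{Env}, the functional order supplies a matching closure $(p, S, E'^\sharp) \in v'^\sharp$ with $E^\sharp \aless_\Gamma E'^\sharp$, hence $E^\sharp(x) \aless E'^\sharp(x)$ pointwise on the common domain $\dom{\Gamma}$; the induction hypothesis on the environment subderivations rebuilds $\concrel{\abststate'}{E}{E'^\sharp}$ and \textsc{Clos}/\textsc{Env} conclude. In \textsc{PP-Abst} (with $v^\sharp = \pp$), when $v'^\sharp$ is again the program point $\pp$ I re-run the subderivation for $\unfold{\pptypes, \prog, C, \pp}$ under $\abststate'$ by reflexivity of $\aless$ and monotonicity in the state alone.

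The main obstacle is the treatment of $\top$: the order permits $v^\sharp \aless \top_\tau$ at every type, yet the inductive rules \textsc{Const}, \textsc{Tuple}, \textsc{Clos}, and \textsc{PP-Abst} give no direct way to build $\concrel{\abststate'}{v}{\top_\tau}$ when $\tau$ is a specified type. Dispatching these subcases requires the fact that $\concrel{\abststate'}{v}{\top_\tau}$ holds for \emph{every} $v \in \itypepp{}{\tau}$, which is exactly the reading of $\gamma(\abststate', \top_\tau)$ as the full set of concrete values of type $\tau$ (a top-saturation property of the relation, analogous to \textsc{Unspec} but for arbitrary $\tau$). I would first establish this auxiliary fact by induction on $\tau$, and then use it to close each $v'^\sharp = \top_\tau$ subcase uniformly. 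A second delicate point, entangled with this, is the behaviour of the order on program-typed abstract values: I must check that $\aless$ is compatible with both program points $\pp$ and their unfoldings $\unfold{\pptypes, \prog, C, \pp}$ so that the \textsc{PP-Abst} step remains monotone. The remaining, non-$\top$ subcases all go through by the clean structural recursion described above.
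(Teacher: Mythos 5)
Your proof takes the same route as the paper: the paper's entire proof of this lemma is the single line ``By induction on the derivation tree of \(\concrel{\abststate}{v}{v^\sharp}\)'', and your case analysis is exactly that induction carried out. The extra detail you supply --- in particular the observation that the subcases \(v'^\sharp = \top_\tau\) at specified types are not closed by any rule of \(\concrel{}{}{}\) and require a separate top-saturation property --- goes beyond what the paper records and flags a genuine looseness in the paper's definitions rather than a defect in your argument.
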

\begin{proof}
  By induction on the derivation tree of \(\concrel{\abststate}{v}{v^\sharp}\)
\end{proof}

This relation is preserved by the extension of environments:
\begin{lemma}\label{lem:asn-abst}
  Suppose \abststate{} is a state of the abstract interpretation, \prog{} is the
  main program, and \pptypes{} is the set of types whose values must be replace
  with program points.\\
  Suppose \concrel{\abststate}{E}{E^\sharp} and suppose a type \(\tau\) such
  that \(v\in\itypepp{}{\tau}\) and \(v^\sharp\in\itypeabst{}{\tau}\) and
  \concrel{\abststate}{v}{v^\sharp}.

  \[
    \begin{array}{l}
      \addasn{\pptypes, \prog}{\extenv{E}{p}{v}}{E'}\\
      \addasn{\pptypes, \prog}
      {\extenvabst{E^\sharp}{p}{v^\sharp}}{\set{E_1^\sharp, .., E_n^\sharp}}\\
    \end{array}
    \implies \exists 1\leq i\leq n,\,\concrel{\abststate}{E'}{E_i^\sharp}
  \]
\end{lemma}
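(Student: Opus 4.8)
The plan is to strengthen the statement before inducting, because the abstract matching threads an entire \emph{set} of environments through its rules and, on tuples, branches over every abstract tuple; tracking a single environment would not survive the induction. So I would prove the following generalisation by structural induction on the pattern \(p\): if \(\concrel{\abststate}{v}{v^\sharp}\), if \(\addasn{\pptypes, \prog}{\extenv{E}{p}{v}}{E'}\) and \(\addasn{\pptypes, \prog}{\extenvabst{\xi}{p}{v^\sharp}}{\xi'}\), and if there is some \(E^\sharp\in\xi\) with \(\concrel{\abststate}{E}{E^\sharp}\), then there is some \(E'^\sharp\in\xi'\) with \(\concrel{\abststate}{E'}{E'^\sharp}\). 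The lemma is the instance \(\xi=\set{E^\sharp}\). In every case I keep a distinguished element of the current set that stays related to the current concrete environment, and show it propagates to the output set.

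The base cases are direct. For \(p=\_\) neither side changes the environments, so the same \(E^\sharp\) works. For \(p=x\) the concrete side produces \(\set{x\mapsto v}E\) and the abstract side prepends \((x,v^\sharp)\) to each element of \(\xi\); choosing the element coming from our distinguished \(E^\sharp\), the relation \(\concrel{\abststate}{\set{x\mapsto v}E}{(x,v^\sharp)::E^\sharp}\) follows from the \textsc{Env} rule, using \(\concrel{\abststate}{E}{E^\sharp}\) on the old domain and the hypothesis \(\concrel{\abststate}{v}{v^\sharp}\) on \(x\) (the side-conditions \(\envtype{\cdot}{\Gamma}\) hold because matching is type-directed). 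For the tuple pattern \(p=(p_1,\dots,p_k)\), I first invert \(\concrel{\abststate}{v}{v^\sharp}\) via the \textsc{Tuple} rule to get a \emph{single} abstract tuple \((v_1^\sharp,\dots,v_k^\sharp)\in v^\sharp\) with \(\concrel{\abststate}{v_i}{v_i^\sharp}\) for all \(i\). The abstract rule computes \(\xi'\) as a union indexed by the abstract tuples of \(v^\sharp\); I restrict to the summand produced by this particular tuple, which is computed by the sequential \textsc{tuple-singleton} rule, exactly mirroring the concrete \textsc{tuple} rule. Threading the induction hypothesis through the \(k\) component matchings — each step feeding the set built so far into the next component — yields a related environment in that summand, hence in \(\xi'\).

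The constructor case \(p=C\,p'\) is where the real work lies, and is the step I expect to be the main obstacle: the concrete and abstract matchings may each proceed either by the plain constructor rule or by the program-point \textsc{unfold} rule, and the relation \(\concrel{\abststate}{v}{v^\sharp}\) may have been derived using \textsc{Const}, \textsc{PP-Concr}, or \textsc{PP-Abst}. The unifying observation is that the \textsc{PP-Concr} and \textsc{PP-Abst} rules make relating a program point to an abstract value interchangeable with relating its unfolding: by inversion \(\concrel{\abststate}{\pp_c\underline{j}}{w}\) holds iff \(\concrel{\abststate}{\prog@(\pp_c\underline{j})}{w}\), and dually \(\concrel{\abststate}{u}{\pp_a\cdot j}\) holds iff \(\concrel{\abststate}{u}{\unfold{\pptypes,\prog,C,\pp_a}}\). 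In each admissible combination I invert both matching derivations to expose the arguments revealed under \(C\): at a program-type position \(j\) the concrete side exposes \(\pp_c\underline{j}\) while the abstract side exposes \(\pp_a\cdot j\) (or the stored component of an abstract tuple), and at a non-program position each exposes its stored value. Using the two equivalences above together with inversion of the relation (\textsc{Const} followed by \textsc{Tuple}), I show the concrete revealed tuple is related to the abstract revealed arguments, and then apply the induction hypothesis to the strictly smaller pattern \(p'\). Well-typedness plus the relation rule out the genuinely incompatible combinations and force constructor agreement — program-type values are program points on both sides, and the \textsc{Const}/\textsc{PP-Abst} steps require the revealed head constructors to coincide — so both matchings unfold under the same \(C\) and the produced environment lands in \(\xi'\).
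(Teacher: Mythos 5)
Your proof is correct, and it follows the same overall strategy as the paper's --- an induction that tracks the syntax-directed matching rules, with the wildcard and variable cases immediate, the tuple case restricted to the summand of the abstract union singled out by inverting the \textsc{Tuple} rule of \(\concrel{}{}{}\), and the constructor case split according to whether each side matches a literal constructor or unfolds a program point. Two of your choices, however, genuinely improve on the paper's argument. First, your strengthened induction hypothesis --- threading an arbitrary input set \(\xi\) containing a distinguished element related to \(E\), rather than the singleton \(\set{E^\sharp}\) of the lemma statement --- is exactly what is needed for the tuple case: the abstract \textsc{tuple-singleton} rule feeds the set \(\xi_{i}\) produced by matching \(p_{i}\) into the match of \(p_{i+1}\), and after the first component that set need no longer be a singleton. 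The paper's proof disposes of this with an unexplained inner ``induction on \(m\)'' that implicitly relies on the very generalisation you make explicit. Second, in the constructor case the paper only treats \(v = C\,v'\) on the concrete side (with the abstract side handled by \textsc{A-Constr} or \textsc{A-Unfold}); it omits the case where the concrete value is itself a program point matched via the concrete \textsc{unfold} rule of Section 4.2. Your uniform treatment via the interchange afforded by \textsc{PP-Concr} and \textsc{PP-Abst} --- reducing any combination to relating the revealed arguments under the common constructor and then invoking the induction hypothesis on \(p'\) --- covers that missing case. The only informality worth tightening is the variable case's appeal to ``matching is type-directed'' to discharge the \(\envtype{\cdot}{\Gamma}\) side-conditions of the \textsc{Env} rule; this should be stated as a preserved invariant of the induction.
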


\begin{proof}
  Suppose \abststate{}, a state of the abstract interpretation, \(\tau\) a
  type, \(v\) and \(v^\sharp\) such that
  \(v\in\itypepp{}{\tau}\) and \(v^\sharp\in\itypeabst{}{\tau}\).
  Suppose \(E\) and \(E^\sharp\) a concrete environment and an abstract
  environment respectively such that \concrel{\abststate}{E}{E^\sharp}.

  Suppose \(\addasn{\pptypes, \prog}{\extenv{E}{p}{v}}{E'}\)
  and
  \(\addasn{\pptypes, \prog}{\extenvabst{E^\sharp}{p}{v^\sharp}}
  {\set{E_1^\sharp,.., E_n^\sharp}}\).

  We proceed by induction on the derivation tree of
  \(\addasn{\pptypes, \prog}{\extenv{E}{p}{v}}{E'}\).

  \begin{itemize}
  \item The conclusion rule of the tree of \(\addasn{\pptypes, \prog}{\extenv{E}{p}{v}}{E'}\) is
    \textsc{WILD}.
    Therefore \(p = \_\) and \(E'=E\). Then, the conclusion rule of
    \(\addasn{\pptypes, \prog}{\extenvabst{E^\sharp}{p}{v^\sharp}}
    {\set{E_1^\sharp,.., E_n^\sharp}}\) is \textsc{A-WILD} (it is the only
    rule that handles wildcard). Therefore, it comes that
    \(\addasn{\pptypes, \prog}{\extenvabst{E^\sharp}{\_}{v^\sharp}}
    {\set{E^\sharp}}\). By definition,
    \concrel{\abststate}{E}{E^\sharp}.

  \item The conclusion rule of the tree of \(\addasn{\pptypes, \prog}{\extenv{E}{p}{v}}{E'}\) is
    \textsc{VAR}.
    Therefore, \(p=x\) and \(E'=(x, v)::E\). Then, the conclusion rule of
    \(\addasn{\pptypes, \prog}{\extenvabst{E^\sharp}{p}{v}}
    {\set{E_1^\sharp,.., E_n^\sharp}}\) is \textsc{A-VAR} (it is the only
    rule that handles variables). Therefore, it comes that
    \(\addasn{\pptypes, \prog}{\extenvabst{E^\sharp}{x}{v^\sharp}}
    {\set{(x, v^\sharp) :: E^\sharp}}\).\\
    By definition of \concrel{\abststate}{E}{E^\sharp}, \(\dom{E} =
    \dom{E^\sharp}\).
    Therefore, \(\dom{(x, v) :: E}=\dom{(x, v^\sharp) :: E^\sharp}\).
    Take \(y\in\dom{(x, v) :: E}\)
    \begin{itemize}
    \item If \(y = x\)\\
      Then \(((x, v) :: E)(x) = v\) and
      \(((x, v^\sharp) :: E)(x) = v^\sharp\). By hypothesis,
      \concrel{\abststate}{v}{v^\sharp} so
      \concrel{\abststate}{((x, v) :: E)(x)}{((x, v^\sharp) :: E)(x)}

    \item If \(y \neq x\)\\
      \(((x, v) :: E)(y) = E(y)\), and
      \(((x, v^\sharp) :: E^\sharp)(y) = E^\sharp(y)\). Because
      \concrel{\abststate}{E}{E^\sharp}, it comes that
      \concrel{\abststate}{E(y)}{E^\sharp(y)}
    \end{itemize}
    This proves that
    \concrel{\abststate}{(x, v)::E}{(x, v^\sharp)::E^\sharp}

  \item The conclusion rule of the tree of
    \(\addasn{\pptypes, \prog}{\extenv{E}{p}{v}}{E'}\)
    is \textsc{CONSTR}.
    Therefore, \(p = C~p'\), \(v = C~v'\) and
    \(\addasn{\pptypes, \prog}{\extenv{E}{p'}{v'}}{E'}\).
    Because \(p = C~p'\), only two rules can be at the conclusion of the
    derivation
    \(\addasn{\pptypes, \prog}{\extenvabst{E^\sharp}{p}{v^\sharp}}
    {\set{E_1^\sharp,.., E_n^\sharp}}\):

    \begin{itemize}
    \item The conclusion rule is \textsc{A-CONSTR}.
      Therefore \(v^\sharp = C~v'^\sharp\) and
      \(\addasn{\pptypes, \prog}{\extenvabst{E^\sharp}{p'}{v'^\sharp}}
      {\set{E_1^\sharp,.., E_n^\sharp}}\).
      By definition of \(\concrel{\abststate}{C~v'}{C~v'^\sharp}\), it comes
      that \(\concrel{\abststate}{v'}{v'^\sharp}\). Using the Induction
      Hypothesis, \(\exists E_i^\sharp\in\set{E_1^\sharp, .., E_n^\sharp}\)
      such that \(\concrel{\abststate}{E'}{E_i^\sharp}\)
    \item The conclusion rule is \textsc{A-UNFOLD}.
      Therefore, \(v^\sharp = \pp\) with \(\pp\in\ppoint\). By definition, it comes
      that: \(\concrel{\abststate}{v}{\unfold{\pptypes, \prog, C, \pp}}\), and
      thus, by supposing \(\tau=\tau_0,.., \tau_{m-1}\), we get
      \(\concrel{\abststate}{v'}
      {(\isppt{\pptypes}{\prog}{\tau_0}{\pp\cdot 0},..,
        \isppt{\pptypes}{\prog}{\tau_{m-1}}{\pp\cdot m-1})}\).\\
      Moreover, we have
      \(
      \addasn{\pptypes, \prog}
      {\extenv{E}{p'}{v'}}{E'}
      \)\\
      and
      \(
      \addasn{\pptypes, \prog}
      {\extenv{E^\sharp}{p'}
        {(\isppt{\pptypes}{\prog}{\tau_0}{\pp\cdot 0},..,
          \isppt{\pptypes}{\prog}{\tau_{m-1}}{\pp\cdot m-1})}
      }{\set{E_1'^\sharp,.., E_n'^\sharp}}
      \)\\
      Using the induction hypothesis, we get that \(\exists 1\leq i \leq n\) such that
    \(\concrel{\abststate}{E'}{E_i'^\sharp}\).
    \end{itemize}

  \item The conclusion rule of the tree of
    \(\addasn{\pptypes, \prog}{\extenv{E}{p}{v}}{E'}\)
    is \textsc{TUPLE}.
    Thus \(p=(p_1,.., p_m)\), and \(v=(v_1, .., v_m)\).
    Moreover, the conclusion in the abstract must be
    \textsc{A-TUPLE}. By the definition of \(\concrel{}{}{}\), it comes that
    \(\exists (v_1^\sharp, .., v_m^\sharp)\in v^\sharp\), such that \(\forall 1 \leq i \leq m\),
    \(\concrel{\abststate}{v_i}{v_i^\sharp}\).\\
    By the definition of \textsc{A-TUPLE},
    \(\exists \xi_{v_1^\sharp, .., v_m^\sharp}\) such that:\\
    \(\addasn{\pptypes, \prog}
    {\extenvabst{E^\sharp}{(p_1,.., p_m)}{(v_1^\sharp, .., v_m^\sharp)}}
    {\xi_{v_1^\sharp, .., v_m^\sharp}}\).\\
    By induction on \(m\), it comes that \(\exists E'^\sharp\in\xi_{v_1^\sharp, .., v_m^\sharp}\) such
    that \(\concrel{\abststate}{E'}{E'^\sharp}\). And because
    \(\xi_{v_1^\sharp, .., v_n^\sharp}\subseteq \bigcup_{(v_1^\sharp, .., v_n^\sharp)\in v^\sharp}
    \xi_{v_1^\sharp,.. v_n^\sharp}\), this concludes this case.
  \end{itemize}

  This concludes the induction and the proof.
\end{proof}

\subsection{Abstract Interpretation}
\label{sec:abstint-correct}

The \textsc{Spec-Loop} rule of the Abstract Interpretation of
Section~\ref{sec:abstint} prevents a proof by induction of the derivation tree
of the concrete derivation. The \textsc{Spec-Loop} rule is used to stop a cycling
computation, but returns a locally wrong approximation \(\bot\). This is not a
problem as further down the abstract derivation tree, the first call to the
looping specified function will return a correct approximation of the call,
that is in reality a fixpoint. To make a proof by induction possible, we
introduce an almost identical abstract interpretation, that is not executable
but easier to prove sound by induction.

\begin{mathpar}
  \inferrule*[right=Spec]
  {
    \skkw{val}~f: \tau_1\to..\to\tau_n\to\tau{} = t\in\mathcal{S} \\
    \notarr{\tau} \\
    \evalterm{\emptyset, t}{v^\sharp}\\
    \updatein{f}{\abststate, [v_1^\sharp, .., v_n^\sharp]} = \abststate_1,
    [v_1'^\sharp, .., v_n'^\sharp] \\
    (f, \abststate_1, [v_1'^\sharp, .., v_n'^\sharp], \_)\notin\cstack\\
    \exists u^\sharp\in\itypeabst{}{\tau}\,\land\,\exists\abststate'\\
    \evalapp{\cons{(f, \abststate_1, [v_1'^\sharp, .., v_n'^\sharp],
        (u^\sharp, \abststate'))}{\cstack}, \abststate_1,
      v^\sharp~v_1'^\sharp.. v_n'^\sharp}{w^\sharp, \abststate_2}\\
    \updateout{f}{\abststate_2, [v_1'^\sharp, .., v_n'^\sharp], w^\sharp} =
    w'^\sharp, \abststate_3\\
    w'^\sharp = u^\sharp \,\land\, \abststate_3 = \abststate'
  }
  { \evalapp{\cstack, \abststate, (f, n)~v_1^\sharp.. v_n^\sharp}
    {w'^\sharp, \abststate_3}
  }
  \and
  \inferrule*[right=Spec-Loop]
  {
    \skkw{val}~f: \tau_1\to..\to\tau_n\to\tau{} = t\in\mathcal{S} \\
    \notarr{\tau} \\
    \evalterm{\emptyset, t}{v^\sharp}\\
    \updatein{f}{\abststate, [v_1^\sharp, .., v_n^\sharp]} = \abststate_1,
    [v_1'^\sharp, .., v_n'^\sharp] \\
    (f, \abststate_1, [v_1'^\sharp, .., v_n'^\sharp], (u^\sharp, \abststate'))\in\cstack\\
  }
  {
    \evalapp{\cstack, \abststate, (f, n)~v_1^\sharp.. v_n^\sharp}
    {u^\sharp, \abststate'}
  }
\end{mathpar}
Now the callstack also contains the result of the call, which makes this
semantics not executable as we suppose we are able to ``guess'' the result
before the evaluation. The result of the this new Abstract Interpretation and
the Abstract Interpretation of Section~\ref{sec:abstint-rules} are equivalent.

\subsection{Correctness of the Abstract Interpretation of Terms}\label{sec:corrterms}

\begin{lemma}
  Suppose \(\abststate_0\) is a safe abstract state.
  \[  
    \begin{array}{l}
      \envtype{E}{\Gamma},\quad \envtype{E^\sharp}{\Gamma},\quad \gettype{S}{\Gamma}{\tau}\\
      \evalterm{E, S}{v},\quad
      \evaltermabst{\nil, \abststate, E^\sharp, S}{v^\sharp, \abststate'}\\
      \concrel{\abststate}{E}{E^\sharp},\quad \abststate_0\aless\abststate\\
    \end{array}
    \implies
    v\in\gamma(\abststate, v^\sharp) \,\land{}\, \concrel{\abststate}{v}{v^\sharp}
  \]
\end{lemma}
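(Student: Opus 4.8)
The plan is to prove the apparently stronger conjunct $\concrel{\abststate}{v}{v^\sharp}$ and then read off $v\in\gamma(\abststate, v^\sharp)$ for free, using the lemma established just above (the one stating $\concrel{\abststate}{v}{v^\sharp}\implies v\in\gamma(\abststate, v^\sharp)$). First I would observe that term evaluation never alters the abstract state: none of the term rules \textsc{Var}, \textsc{TermClos}, \textsc{TermSpec}, \textsc{TermUnspec}, \textsc{Const}, \textsc{Tuple}, \textsc{Clos} threads $\abststate$, so in $\evaltermabst{\nil, \abststate, E^\sharp, S}{v^\sharp, \abststate'}$ we have $\abststate'=\abststate$ and the entire argument can be carried out at the single state $\abststate$. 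I would then induct on the \emph{derivation} of the abstract evaluation and not on the syntax of $S$, since the \textsc{TermSpec} case steps into the body of a global definition, which is not a subterm of $S$ yet does yield a proper sub-derivation; the typing and consistency hypotheses $\envtype{E}{\Gamma}$, $\envtype{E^\sharp}{\Gamma}$, $\gettype{S}{\Gamma}{\tau}$ and the side condition $\abststate_0\aless\abststate$ are simply re-established for each premise so that the induction hypothesis keeps applying.

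The case analysis on the last rule is then largely mechanical. For \textsc{Var}, $v=E(x)$ and $v^\sharp=E^\sharp(x)$, and the \textsc{Env} clause of $\concrel{\abststate}{E}{E^\sharp}$ gives exactly $\concrel{\abststate}{E(x)}{E^\sharp(x)}$. For \textsc{TermClos} and \textsc{Clos} the abstract value is a singleton set containing $\defclos{f}{n}$ or an abstract closure $(p, S_0, E^\sharp)$, so the \textsc{NClos} and \textsc{Clos} clauses of $\concrel{}{}{}$ apply immediately (the closure case using $\concrel{\abststate}{E}{E^\sharp}$ as the witness). The \textsc{Const} and \textsc{Tuple} cases follow by applying the induction hypothesis to the immediate sub-derivations and then invoking the \textsc{Const} respectively \textsc{Tuple} clause of $\concrel{}{}{}$, the witnessing tuple being the unique element of the singleton abstract tuple produced by the \textsc{Tuple} rule. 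For \textsc{TermSpec} both semantics evaluate the body $t$ in the empty environment; since $\concrel{\abststate}{\emptyset}{\emptyset}$ and $\gettype{t}{\emptyset}{\tau}$ hold trivially, the induction hypothesis on that sub-derivation returns $\concrel{\abststate}{v}{v^\sharp}$ directly.

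The delicate case is \textsc{TermUnspec}, and I expect it to be the main obstacle. Here the two semantics are related only extensionally: the soundness hypothesis (Definition~\ref{def:soundapprox}) gives $\nonspecpp{x}\subseteq\gamma(\abststate, \nonspecabst{x})$, hence $v\in\gamma(\abststate, v^\sharp)$, but it supplies no structural witness of the kind $\concrel{}{}{}$ demands. The reconciliation rests on the \textsc{Unspec} clause of $\concrel{}{}{}$, whose premises are precisely ``$\tau$ unspecified'' together with $v\in\gamma(\abststate, v^\sharp)$: for an unspecified constant the declared non-arrow type $\tau$ is an unspecified base type, so that clause fires and yields $\concrel{\abststate}{v}{v^\sharp}$. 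The subtle point to check is exactly this passage from the purely $\gamma$-based specification of unspecified terms to the inductive relation $\concrel{}{}{}$; everything else is a routine traversal of the term rules.
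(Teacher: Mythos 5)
Your proposal is correct and follows essentially the same route as the paper: an induction over the evaluation derivation with a mechanical case analysis on the term rules, using the \textsc{Env} clause for variables, the singleton witnesses for closures and tuples, and bridging the unspecified-term case through the \textsc{Unspec} clause of \(\concrel{}{}{}\) via the soundness of \(\nonspecabst{x}\) (the paper phrases this last step through the ``safe abstract state'' \(\abststate_0\) and monotonicity of \(\concrel{}{}{}\), you phrase it directly through Definition~\ref{def:soundapprox}; the two are interchangeable here). The only cosmetic difference is that you induct on the abstract derivation where the paper inducts on the concrete one, which is immaterial since the two derivations are in lockstep for terms.
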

\begin{proof}
  Let \(\pi\) be the derivation tree of \(\evalterm{E, t}{v}\). We prove the
  theorem by induction on \(\pi\)
  \begin{itemize}
  \item The conclusion rule of \(\evalterm{E, t}{v}\) is \textsc{Var}.\\
    Therefore, \(t=x\). \(v = E(x)\). Therefore,
    \(v^\sharp=E^\sharp(x)\). Because \(\concrel{\abststate}{E}{E^\sharp}\), it comes that
    \(\concrel{\abststate}{E(x)}{E^\sharp(x)}\).

  \item The conclusion rule of \(\evalterm{E, t}{v}\) is \textsc{TermClos}.\\
    Therefore, it comes that \(t=f\),
    \(\skkw{val}~f:\tau_1\to..\to\tau_n\to\tau\,(=t_0)\) with \(n\ge1\) and \(v=\defclos{f}{n}\).
    Moreover, it comes that \(v^\sharp = \set{\defclos{f}{n}}\) and
    \(\evaltermabst{E^\sharp, f}{\set{\defclos{f}{n}}}\).
    Because \(\defclos{f}{n}\in\set{\defclos{f}{n}}\), we conclude that
    \(\concrel{\abststate}{v}{v^\sharp}\).

  \item The conclusion rule of \(\evalterm{E, t}{v}\) is \textsc{TermUnSpec}.\\
    Therefore, it comes that \(t=x\),
    \(\skkw{val}~x: \tau\in\skelsem\) and \(v \in \nonspec{x}\).
    Moreover, \(v^\sharp=\nonspecabst{x}\)\\
    Because \(\abststate_0\aless \abststate\) and
    \(\concrel{\abststate_0}{v}{\nonspecabst{x}}\), we can conclude that
    \(\concrel{\abststate}{v}{\nonspecabst{x}}\).

  \item The conclusion rule of \(\evalterm{E, t}{v}\) is \textsc{TermSpec}.\\
    Therefore, it comes that \(t=x\) and \(x\) is a \textsc{SpecVar},
    \(\skkw{val}~x: \tau=t_0\in\skelsem\). Therefore, we have
    \(\evalterm{E, t_0}{v}\) and \(\evalterm{E^\sharp, t_0}{v^\sharp}\) with
    \(\concrel{\abststate}{E}{E^\sharp}\). Using the induction hypothesis, it comes
    that \(\concrel{\abststate}{v}{v^\sharp}\).
    
  \item The conclusion rule of \(\evalterm{E, t}{v}\) is \textsc{Const}.\\
    Therefore, \(t=C~t_0\), and \(v = C~v'\), with 
    \(\evalterm{E, C~t_0}{v'}\).\\
    Moreover, \(\evaltermabst{E^\sharp, t_0}{v'^\sharp}\) with \(v^\sharp=C~v'^\sharp\).
    Using the induction hypothesis, it comes that
    \(\concrel{\abststate}{v'}{v'^\sharp}\), and by the definition of \(\gamma\), we
    conclude that \(\concrel{\abststate}{v}{v^\sharp}\).

  \item The conclusion rule of \(\evalterm{E, t}{v}\) is \textsc{Tuple}.\\
    Therefore, \(t=(t_1,.., t_n)\) and
    \(\evalterm{E, (t_1, .., t_n)}{(v_1, .., v_n)}\) such that
    \(\evalterm{E, t_i}{v_i}\)\\
    Moreover, by the definition of
    \(\evaltermabst{E^\sharp, (t_1, .., t_n)}{v^\sharp}\),
    it comes that
    \(v^\sharp=\set{(v_1^\sharp, .., v_n^\sharp)}\) with \(\evalskelabst{E^\sharp, t_i}{v_i^\sharp}\).
    By the induction hypothesis, it comes that
    \(\forall 1 \leq i \leq n,\, \concrel{\abststate}{v_i}{v_i^\sharp}\).
    By definition, \(\concrel{\abststate}{v}{v^\sharp}\).

  \item The conclusion rule of \(\evalterm{E, t}{v}\) is \textsc{Clos}.\\
    Therefore, \(t=(\lambda p\cdot S)\) and \(v=(p, S, E)\).\\
    Because \(\evaltermabst{E^\sharp, (\lambda p\cdot S)}{\set{(p, S, E^\sharp)}}\), it comes that
    \(v^\sharp = \set{(p, S, E^\sharp)}\). Because \(\concrel{\abststate}{E}{E^\sharp}\), we
    can conclude that \(\concrel{\abststate}{(p, S, E)}{(p, S, E^\sharp)}\)
  \end{itemize}

  This concludes the proof.
\end{proof}

\subsection{Theorem of Correctness of the Abstract Interpretation}\label{sec:soundnesstheorem}

\begin{theorem}{Soundness of the Abstract Interpretation of application}
  \\
  \begin{tabular}{rl}
    If & \((\abststate_0, \aless)\) a context of interpretation\\
       & \(\evalapp{v_0~v_1.. v_n}{w}\)\\
       & \(\evalapp{\cstack, \abststate_0, v_0^\sharp~v_1^\sharp.. v_n^\sharp}{w^\sharp, \abststate}\)\\
       & \(\forall 1 \leq i \leq n,\quad \concrel{\abststate_0}{v_i}{v_i^\sharp}\)
  \end{tabular}
  \(\implies\)
  \begin{tabular}{l}
    \(\abststate_0 \aless \abststate\)\\
    \(\concrel{\abststate}{w}{w^\sharp}\)
  \end{tabular}
\end{theorem}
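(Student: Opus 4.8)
The plan is to prove this by structural induction on the concrete application derivation $\evalapp{v_0~v_1.. v_n}{w}$, working throughout with the non-executable abstract semantics of Section~\ref{sec:abstint-correct} in which each pending specified call carries a \emph{guessed} result on the callstack. Since that \textsc{Spec-Loop} rule reads a result from the callstack instead of recursing, the abstract side never forces a recursion and a plain induction on the concrete derivation becomes available. This statement is really one half of a mutual induction together with the soundness of the abstract interpretation of skeletons, because the \textsc{Clos} rule for applications evaluates a function body while the skeleton rules (\textsc{LetIn}, \textsc{App}) call back into the application relation; I would prove both simultaneously by induction on the concrete derivation.

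First I would peel off the \textsc{App-Set} rule at the root of the abstract derivation. It splits $v_0^\sharp=\bigcup_i\set{w_i}$, evaluates each singleton application from $\abststate_0$ to $(v_{w_i}^\sharp,\abststate_i)$, and returns $w^\sharp=\join_i v_{w_i}^\sharp$ and $\abststate=\join_i\abststate_i$. The hypothesis $\concrel{\abststate_0}{v_0}{v_0^\sharp}$ (via its \textsc{NClos} or \textsc{Clos} rule) exhibits a single index $j$ with $\concrel{\abststate_0}{v_0}{\set{w_j}}$. It therefore suffices to prove $\abststate_0\aless\abststate_j$ and $\concrel{\abststate_j}{w}{v_{w_j}^\sharp}$ for that $j$: since $\join$ is an upper bound (Lemma~\ref{lem:unspec-order-join}) we get $\abststate_j\aless\abststate$ and $v_{w_j}^\sharp\aless w^\sharp$, so transitivity gives $\abststate_0\aless\abststate$ and monotonicity of the relation (Lemma~\ref{lem:concrel-monotonic}) upgrades the conclusion to $\concrel{\abststate}{w}{w^\sharp}$.

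Then I would case on the concrete rule applied to the singleton $w_j$. The \textsc{Base} case ($n=0$, $w=v_0$, $\abststate_j=\abststate_0$) is immediate. In the \textsc{Clos} case ($v_0=(p,S,E)$, $w_j=(p,S,E^\sharp)$) the abstract \textsc{Clos} rule matches $p$ against $v_1^\sharp$, yielding a set of environments; Lemma~\ref{lem:asn-abst} shows the concretely-extended $E'$ is related to one of them, the mutual induction hypothesis on the body relates the concrete body value to the abstract one, and the induction hypothesis on the residual application $\evalapp{v'~v_2.. v_n}{w}$ closes the case (again absorbing the per-environment join via Lemmas~\ref{lem:unspec-order-join} and~\ref{lem:concrel-monotonic}). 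In the \textsc{Unspec} case I would convert $\concrel{\abststate_0}{v_i}{v_i^\sharp}$ into $v_i\in\gamma(\abststate_0,v_i^\sharp)$ (by the lemma that $\concrel{}{}{}$ entails $\gamma$-membership), feed this to the sound-approximation hypothesis for $f$ (Definition~\ref{def:soundapproxarrow}) to get $w\in\gamma(\abststate',w^\sharp)$, and re-enter the relation through its \textsc{Unspec} rule.

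The main obstacle is the \textsc{Spec}/\textsc{Spec-Loop} pair. I would carry, as a strengthened invariant of the induction, the claim that every frame $(f,\abststate_f,[v_1'^\sharp,.., v_n'^\sharp],(u^\sharp,\abststate'))$ on the callstack soundly over-approximates every concrete call to $f$ whose arguments are related to $[v_1'^\sharp,.., v_n'^\sharp]$ at $\abststate_f$, i.e.\ it yields $\abststate_f\aless\abststate'$ and $\concrel{\abststate'}{w_f}{u^\sharp}$. In the \textsc{Spec-Loop} case this invariant delivers the result directly. In the \textsc{Spec} case, monotonicity of the update functions (Definition~\ref{def:updates-sound}) gives $\abststate_0\aless\abststate_1$ and $v_i^\sharp\aless v_i'^\sharp$, so the arguments remain related at $\abststate_1$ (Lemma~\ref{lem:concrel-monotonic}); the delicate point is that the pushed guess $(u^\sharp,\abststate')$ is forced by the modified rule to be a genuine fixpoint of the body evaluation (the side condition $w'^\sharp=u^\sharp\wedge\abststate_3=\abststate'$), so re-establishing the invariant for the extended callstack before applying the induction hypothesis to the body amounts to showing that this fixpoint approximates exactly the recursive concrete calls that are strictly smaller than the current derivation. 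I expect this fixpoint/callstack bookkeeping -- making precise that the guessed result is sound for all matching concrete sub-derivations -- to be the crux of the argument; everything else is routine propagation of the $\concrel{}{}{}$ relation through the rules.
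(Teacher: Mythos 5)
Your proposal follows essentially the same route as the paper's proof: induction on the concrete application derivation, mutually with the soundness of skeleton evaluation, carried out against the non-executable variant of Section~\ref{sec:abstint-correct} whose callstack frames carry guessed results; the \textsc{Clos} case via Lemma~\ref{lem:asn-abst}, the \textsc{Unspec} case via Definition~\ref{def:soundapproxarrow}, and the joins absorbed by monotonicity of \(\concrel{}{}{}\) all match the paper. Your explicit peeling of \textsc{App-Set}, selecting the single witness \(w_j\in v_0^\sharp\) provided by the \textsc{NClos}/\textsc{Clos} rules of \(\concrel{}{}{}\), is in fact more careful than the paper, which silently writes \(v_0^\sharp=(p,S,E^\sharp)\) where it means membership in a set. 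The one genuine divergence is the \textsc{Spec}/\textsc{Spec-Loop} pair: you propose strengthening the induction with an invariant stating that every frame on the callstack soundly over-approximates all concrete calls with related arguments, so that \textsc{Spec-Loop} reads off the result directly, whereas the paper instead argues that a derivation ending in \textsc{Spec-Loop} can be ``replaced with another abstract derivation tree with an empty callstack and the same conclusion,'' reducing to the \textsc{Spec} case. Neither treatment is worked out: the paper's replacement step is asserted without justification, and you honestly flag the fixpoint/callstack bookkeeping as the unresolved crux. Your invariant-based formulation is arguably the more promising of the two for a rigorous proof, but as it stands both arguments leave the same gap open.
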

\begin{proof}
  Let \(\pi\) be the proof tree of \(\evalapp{v_0~v_1.., v_n}{w}\). We proceed
  by induction on \(\pi\).
  \begin{itemize}
  \item The conclusion rule of \(\evalapp{v_0~v_1.., v_n}{w}\) is \textsc{Base}.\\
    Therefore \(n=0\) and \(v_0=v\).
    Moreover, because \(n=0\), \(v_0^\sharp~v_1^\sharp.. v_n^\sharp = v^\sharp\) for some \(v^\sharp\)
    the last rule of 
    \(\evalapp{\cstack, \abststate_0, v_0^\sharp~v_1^\sharp.. v_n^\sharp}{w^\sharp, \abststate}\)
    must be \textsc {Base}: 
    \(\evalapp{\cstack, \abststate_0, v^\sharp}{v^\sharp, \abststate_0}\).
    By definition, \(\concrel{\abststate_0}{v}{v^\sharp}\)
  \item The conclusion rule of \(\evalapp{v_0~v_1.., v_n}{w}\) is \textsc{Clos}.\\
    Therefore \(v_0=(p, S, E)\).
    Because \(\concrel{\abststate_0}{v_0}{v_0^\sharp}\),
    \(v_0^\sharp=(p, S, E^\sharp)\) such
    that \(\concrel{\abststate_0}{E}{E^\sharp}\).\\
    The \textsc{Clos} rule says that:
    \begin{mathpar}
      \inferrule
      {
        \inferrule{\pi}{\evalskel{\extenv{E}{p}{v_1}, S}{v}} \\
        \inferrule{\pi'}{\evalskel{v~v_2.. v_n}{w}}
      }
      { \evalapp{(p, S, E)~v_1.. v_n}{w} }
    \end{mathpar}

    Moreover, because \(v_0^\sharp=(p, S, E^\sharp)\), the conclusion rule of the abstract
    derivation is the rule \textsc{Clos}:
    \begin{mathpar}
      \inferrule*[right=Clos]
      {
        \addasn{\pptypes, \prog}{\extenvabst{\set{E^\sharp}}{p}{v_1^\sharp}}{\set{E_1^\sharp, .., E_m^\sharp}}\\
        \forall{} E^\sharp_i\in\set{E_1^\sharp, .., E_m^\sharp}\\
        \inferrule
        {\pi^\sharp}
        {
          \evalskel{\cstack, \abststate, E^\sharp_i, S}{w_i^\sharp, \abststate_i}
        }
        \\
        \inferrule
        {\pi_i'^\sharp}
        {
          \evalapp{\cstack, \abststate_i, w_i^\sharp~v_2^\sharp.. v_n^\sharp}{u_i^\sharp,
            \abststate'_i}
        }
      }
      {
        \evalapp{\cstack, \abststate, (p, S, E^\sharp)~v_1^\sharp..
          v_n^\sharp}{\join{} u_i^\sharp, \join\abststate'_i}
      }
    \end{mathpar}
    With \(\join u_i^\sharp=w^\sharp\).\\

    Using Lemma~\ref{lem:asn-abst},
    \(\exists 1\leq i \le m,\, \concrel{\abststate}{\extenv{E}{p}{v_1}}{E_i^\sharp}\).
    Using the induction hypothesis, it is true that
    \(\concrel{\abststate_i}{v}{w_i^\sharp}\).
    Therefore, by definition,
    \(\concrel{\abststate_i}{v~v_2.. v_n}{w_i^\sharp~v_2^\sharp.. v_n^\sharp}\). Using the
    induction hypothesis, it is true that
    \(\concrel{\abststate_i'}{w}{u_i^\sharp}\). By monotonicity of \(\concrel{}{}{}\),
    \(\concrel{\join_i \abststate_i'}{w}{\join_i u_i^\sharp}\)

  \item The conclusion rule of \(\evalapp{v_0~v_1.., v_n}{w}\) is \textsc{Unspec}.\\
    \begin{mathpar}
      \inferrule*[right=Unspec]
      {
        \skkw{val}~f: \tau_1\to..\to\tau_n\to\tau\in\skelsem \\
        \notarr{\tau} \\
        w\in\nonspec{f}(v_1,.., v_n)
      }
      {
        \evalapp{(f, n)~v_1.. v_n}{w}
      }
    \end{mathpar}
    Therefore, \(v_0=(f, n)\). Because
    \(\concrel{\abststate}{v_0}{v_0^\sharp}\), then \(v_0^\sharp=(f, n)\)

    \begin{mathpar}
      \inferrule*[right=Unspec]
      {
        \skkw{val}~f: \tau_1\to..\to\tau_n\to\tau\in\mathcal{S} \\
	\notarr{\tau} \\
        \nonspecabst{f} (\abststate, v_1^\sharp,.., v_n^\sharp) = w^\sharp, \abststate'
      }
      {
        \evalapp{\cstack, \abststate, (f, n)~v_1^\sharp.. v_n^\sharp}{w^\sharp, \abststate'}
      }
    \end{mathpar}

    Because \(\nonspecabst{f}\) is a sound approximation of \(\nonspec{f}\), one
    can conclude that \(\concrel{\abststate'}{w}{w^\sharp}\).

  \item The conclusion rule of \(\evalapp{v_0~v_1.., v_n}{w}\) is \textsc{Spec}.\\
    Therefore, \(v_0=(f, n)\).
    \begin{mathpar}
      \inferrule*[right=Spec]
      {\skkw{val}~f:\tau_1\to..\to\tau_n\to\tau = t \in\mathcal S \\
        \notarr{\tau} \\
        \evalterm{\emptyset, t}{w} \\
        \evalapp{w~v_1.. v_n}{v}} {\evalapp{(f, n)~v_1.. v_n}{v}}
    \end{mathpar}
    Moreover, 
    \(\concrel{\abststate}{v_0}{v_0^\sharp}\), thereby \(v_0^\sharp=(f, n)\)
    The conclusion rule in the abstract derivation tree is either \textsc{Spec}
    or \textsc{Spec-Loop}
    \begin{itemize}
    \item The conclusion rule in the abstract derivation tree is \textsc{Spec}\\
      By monotonicity of the update functions and by applying the Induction
      Hypothesis, the property is true.
      
    \item The conclusion rule in the abstract derivation tree is
      \textsc{Spec-Loop}\\
      The abstract derivation tree is replaced with another abstract derivation
      tree with an empty callstack and the same conclusion: therefore the
      conclusion rule is \textsc{Spec}, for which the proof as already been done.
    \end{itemize}
  \end{itemize}
\end{proof}

\begin{theorem}{Soundness of the Abstract Interpretation of Skeletons}
  \\
  \[
    \left.
      \begin{array}{c}
        \evalskel{E, S}{v}\\
        \evalskelabst{\cstack, \abststate_0, E^\sharp, S}{v^\sharp, \abststate}\\
        \concrel{\abststate_0}{E}{E^\sharp}
      \end{array}
    \right\}
    \implies
    \begin{array}{c}
      \abststate_0 \aless \abststate\\
      \concrel{\abststate}{v}{v^\sharp}
    \end{array}
  \]
\end{theorem}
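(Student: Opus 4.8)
The plan is to prove this by induction on the concrete big-step derivation of \(\evalskel{E, S}{v}\), carried out simultaneously with the correctness lemma for terms and the soundness theorem for applications stated just above, since the three judgements are mutually recursive: an application evaluates skeleton bodies through its \textsc{Clos} rule, and a skeleton reaches the application judgement through its \textsc{App} rule. I would case-split on the final rule of the concrete derivation, which must be one of the term rules, \textsc{Branch}, \textsc{LetIn}, or \textsc{App}, and in each case inspect the matching abstract rule of Figure~\ref{fig:abstintskel}.

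The two routine cases are the term case and the \textsc{App} case. When \(S\) is a term, its abstract evaluation leaves the AI-state untouched, so \(\abststate = \abststate_0\) and \(\abststate_0 \aless \abststate\) holds by reflexivity, while \(\concrel{\abststate}{v}{v^\sharp}\) follows immediately from the correctness lemma for terms. For \textsc{App}, I would first apply that same lemma to each argument \(t_i\) to obtain \(\concrel{\abststate_0}{v_i}{v_i^\sharp}\), and then feed these into the soundness theorem for applications, whose hypotheses are exactly the concrete application judgement \(\evalapp{v_0\,v_1..v_n}{w}\), its abstract counterpart, and the pointwise relation on arguments; this yields both \(\abststate_0 \aless \abststate'\) and \(\concrel{\abststate'}{w}{v^\sharp}\), as required.

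The interesting cases are \textsc{Branch} and \textsc{LetIn}, both of which join results. For \textsc{Branch}, the concrete derivation takes some branch \(S_i\) to \(v\), whereas the abstract rule evaluates every branch and joins. The induction hypothesis on branch \(i\) gives \(\abststate_0 \aless \abststate_i\) and \(\concrel{\abststate_i}{v}{v_i^\sharp}\); since the join is an upper bound we have \(\abststate_i \aless \join_j \abststate_j\) and \(v_i^\sharp \aless \join_j v_j^\sharp\), so monotonicity of the relation (Lemma~\ref{lem:concrel-monotonic}) lifts this to \(\concrel{\join_j \abststate_j}{v}{\join_j v_j^\sharp}\), and transitivity of \(\aless\) gives the AI-state inclusion. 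For \textsc{LetIn}, I would apply the induction hypothesis to \(S_1\) to get \(\abststate_0 \aless \abststate'\) and \(\concrel{\abststate'}{v'}{v^\sharp}\); using Lemma~\ref{lem:concrel-monotonic} to upgrade the environment relation to \(\concrel{\abststate'}{E}{E^\sharp}\), I then invoke Lemma~\ref{lem:asn-abst} to produce an abstract environment \(E_i^\sharp\) among those returned by abstract pattern matching with \(\concrel{\abststate'}{E'}{E_i^\sharp}\). The induction hypothesis on \(S_2\) under this environment and starting state \(\abststate'\) delivers \(\abststate' \aless \abststate_i\) and \(\concrel{\abststate_i}{w}{w_i^\sharp}\), and a final appeal to monotonicity together with the upper-bound property of the join closes the case.

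The main obstacle is organising the mutual induction so that it is well founded in spite of the \textsc{Spec-Loop} rule of the application judgement, which returns \(\bot\) and therefore does not on its own approximate the concrete call. As indicated in Section~\ref{sec:abstint-correct}, this is not resolved inside the skeleton theorem but by performing the whole induction against the modified, non-executable abstract semantics whose callstack also records the guessed result of each pending call, with a separate argument that this modified semantics agrees with the executable one. Apart from this, the only bookkeeping care needed is to check that the AI-state monotonicity \(\abststate_0 \aless \abststate\) composes transitively through every sub-evaluation and is preserved by each join, which the upper-bound property of \(\join\) and the fact that \(\aless\) is a partial order guarantee.
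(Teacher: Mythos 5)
Your proposal follows essentially the same route as the paper: an induction on the concrete derivation, organised as a mutual induction with the term-correctness lemma and the application-soundness theorem, with the \textsc{Branch} and \textsc{LetIn} cases discharged via the induction hypothesis, Lemma~\ref{lem:asn-abst} for pattern matching, monotonicity of the relation (Lemma~\ref{lem:concrel-monotonic}), and the upper-bound property of the join, and with the \textsc{Spec-Loop} difficulty deferred to the modified non-executable semantics of Section~\ref{sec:abstint-correct}. If anything, your write-up is slightly more explicit than the paper's, which only spells out the \textsc{Branch} and \textsc{LetIn} cases and leaves the term and \textsc{App} cases to the companion lemma and theorem.
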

\begin{proof}
  Proof by induction on the derivation tree of \(\evalskel{E, S}{v}\)
  \begin{itemize}
  \item The conclusion rule of \(\evalskel{E, S}{v}\) is \textsc{Branch}\\
    Then \(S=\left(S_1,.., S_n\right)\), therefore
    \begin{mathpar}
      \inferrule*[right=Branch]
      {\evalskel{E, S_i}{v}}
      {\evalskel{E, \left(S_1,.., S_n\right)}{v}}
    \end{mathpar}

    Therefore, the conclusion rule of
    \(\evalskelabst{\cstack, \abststate_0, E^\sharp, S}{v^\sharp, \abststate}\) is
    \textsc{Branch}:
    \begin{mathpar}
      \inferrule*[right=Branch]
      {
        \evalskelabst{\cstack, \abststate, E^\sharp, S_i}{v_i^\sharp,
          \abststate_i}\\
      }
      {
        \evalskelabst{\cstack, \abststate, E^\sharp, (S_1.. S_n)}
        {\join_i v_i^\sharp, \join_i\abststate_{i}}
      }
    \end{mathpar}
    Using the Induction Hypothesis, \(\concrel{\abststate_i}{v_i}{v_i^\sharp}\), and by
    the monotonicity of \(\concrel{}{}{}\): 
    \(\concrel{\join_i \abststate_i}{v_i}{\join_i v_i^\sharp}\)

  \item The conclusion rule of \(\evalskel{E, S}{v}\) is \textsc{LetIn}\\
    Then \(S=\letin{p}{S_1}{S_2}\), therefore
    \begin{mathpar}
      \inferrule*[right=LetIn]
      {
        \evalskel{E, S_1}{v} \\ \addasn{}{\extenv{E}{p}{v}}{E'} \\
        \evalskel{E', S_2}{w}
      }
      {\evalskel{E, \sklet p = S_1 \skin S_2}{w}}
    \end{mathpar}

    Therefore, the conclusion rule of
    \(\evalskelabst{\cstack, \abststate_0, E^\sharp, S}{v^\sharp, \abststate}\) is
    \textsc{LetIn}:
    \begin{mathpar}
      \inferrule*[right=LetIn]
      { \evalskelabst{\cstack, \abststate, E^\sharp, S_1}{v^{\sharp}, \abststate'} \\
        \addasn{\pptypes, \prog}{\extenv{\set{E^\sharp}}{p}{v^\sharp}}
        {\set{E_1^\sharp,.., E_n^\sharp}} \\
        \evalskelabst{\cstack, \abststate', E^\sharp_i, S_2}{w^\sharp_i,
          \abststate_i}}
      {\evalskelabst{\cstack, \abststate, E^\sharp, \sklet{} p = S_1 \skin{} S_2}{\join{}
          w^\sharp_i, \join{} \abststate_i}}
    \end{mathpar}

    Using Lemma~\ref{lem:asn-abst}, \(\exists 1\le i\le m,\,
    \concrel{\abststate_i}{E'}{E_i}\). Therefore, using the Induction Hypothesis,
    \(\concrel{\abststate_i}{v}{w_i^\sharp}\) and by the monotonicity of
    \(\concrel{}{}{}\): \(\concrel{\join_i\abststate_i}{v}{\join_i w_i^\sharp}\).
  \end{itemize}
\end{proof}
\fi

\end{document}
%%% Local Variables:
%%% mode: latex
%%% TeX-command-extra-options: "-shell-escape"
%%% TeX-master: t
%%% ispell-local-dictionary: "en_GB"
%%% End: